\definecolor{darkbrown}{rgb}{0.4, 0.26, 0.13}
\definecolor{ao}{rgb}{0.0, 0.5, 0.0}
\definecolor{bleudefrance}{rgb}{0.19, 0.55, 0.91}
\newtheorem{theo}{Theorem}
\newtheorem{thm}[theo]{Theorem}
\newtheorem{prop}[theo]{Proposition} 
\newtheorem{lem}[theo]{Lemma}
\newtheorem{cor}[theo]{Corollary}
\newtheorem{defn}[theo]{Definition}
\newcommand{\cH}{\mathcal{H}}
\newcommand{\id}{\mathbb{I}}
\newcommand{\tr}[2]{\mathrm{tr}_{#2} \left\{ #1 \right\}}
\newcommand{\Tr}[1]{\mathrm{tr}\left\{#1 \right\}}
\newcommand{\cS}{\mathcal{S}}
\newcommand{\cE}{\mathcal{E}}
\newcommand{\cQ}{\mathcal{Q}}
\newcommand{\X}{\mathbb{X}}
\newcommand{\Y}{\mathbb{Y}}
\newcommand{\A}{\mathbb{A}}
\newcommand{\Bo}{\mathbb{B}}
\newcommand{\As}{\boldsymbol{\Sigma}}
\newcommand{\Op}{\mathbb{O}}
\newcommand{\Nb}{\mathrm{N}}
\newcommand{\Pro}{\mathrm{\textbf{P}}}
\newcommand{\blk}{\color{black}}
\definecolor{orangy}{RGB}{213,94,0}
\tikzset{->-/.style={decoration={
  markings,
  mark=at position .5 with {\arrow{>}}},postaction={decorate}}}
\tikzset{-<-/.style={decoration={
  markings,
  mark=at position .5 with {\arrow{<}}},postaction={decorate}}}
\tikzstyle{bwSpider}=[
 \tikzstyle{wbSpider}=[
\tikzstyle{epiCopoint}=[regular polygon,regular polygon sides=3,draw,scale=0.75,inner sep=-0.5pt,minimum width=5mm,fill=white,regular polygon rotate=0,line width=1pt]
\tikzstyle{epiPoint}=[regular polygon,regular polygon sides=3,draw,scale=0.75,inner sep=-0.5pt,minimum width=5mm,fill=white,regular polygon rotate=180,line width=1pt]
\tikzstyle{epiPointWide}=[regular polygon,regular polygon sides=3,draw,scale=0.75,inner sep=-0.5pt,minimum width=8mm,fill=white,regular polygon rotate=180,line width=1pt]
\tikzstyle{epiBox}=[fill=white,draw, line width = 1pt,inner sep=0.6mm,font=\footnotesize,minimum height=3mm,minimum width=3mm]
\tikzstyle{epiBoxWide}=[fill=white,draw, line width = 1pt,inner sep=0.6mm,font=\footnotesize,minimum height=3mm,minimum width=5mm]
\tikzstyle{epiBoxVeryWide}=[fill=white,draw, line width = 1pt,inner sep=0.6mm,font=\footnotesize,minimum height=3mm,minimum width=7mm]
\tikzstyle{qWire}=[line width = 1pt, color=black]
\tikzstyle{cWire}=[color=gray,line width = .75pt]
\tikzstyle{CqWire}=[color=gray,line width = .75pt,->-]
\tikzstyle{CcWire}=[color=gray,line width = .75pt,->-]
\tikzstyle{RqWire}=[line width = 1pt, color=black,-<-]
\tikzstyle{RcWire}=[color=gray,line width = .75pt,-<-]
\tikzstyle{env}=[copoint,regular polygon rotate=0,minimum width=0.2cm, fill=black]
\tikzstyle{probs}=[shape=semicircle,fill=white,draw=black,shape border rotate=180,minimum width=1.2cm]
\tikzstyle{every picture}=[baseline=-0.25em,scale=0.5]
\tikzstyle{dotpic}=[] 
\tikzstyle{diredges}=[every to/.style={diredge}]
\tikzstyle{math matrix}=[matrix of math nodes,left delimiter=(,right delimiter=),inner sep=2pt,column sep=1em,row sep=0.5em,nodes={inner sep=0pt},text height=1.5ex, text depth=0.25ex]
\tikzstyle{inline text}=[text height=1.5ex, text depth=0.25ex,yshift=0.5mm]
\tikzstyle{label}=[font=\footnotesize,text height=1.5ex, text depth=0.25ex,yshift=0.5mm]
\tikzstyle{left label}=[label,anchor=east,xshift=1.5mm]
\tikzstyle{right label}=[label,anchor=west,xshift=-1mm]
\tikzstyle{up label}=[label,anchor=south,yshift=-1mm]
\tikzstyle{braceedge}=[decorate,decoration={brace,amplitude=2mm,raise=-1mm}]
\tikzstyle{small braceedge}=[decorate,decoration={brace,amplitude=1mm,raise=-1mm}]
\tikzstyle{doubled}=[line width=1.6pt] 
\tikzstyle{boldedge}=[doubled,shorten <=-0.17mm,shorten >=-0.17mm]
\tikzstyle{boldedgegray}=[doubled,gray,shorten <=-0.17mm,shorten >=-0.17mm]
\tikzstyle{singleedgegray}=[gray]
\tikzstyle{semidoubled}=[line width=1.4pt] 
\tikzstyle{semiboldedgegray}=[semidoubled,gray,shorten <=-0.17mm,shorten >=-0.17mm]
\tikzstyle{boxedge}=[semiboldedgegray]
\tikzstyle{boldedgedashed}=[very thick,dashed,shorten <=-0.17mm,shorten >=-0.17mm]
\tikzstyle{vboldedgedashed}=[doubled,dashed,shorten <=-0.17mm,shorten >=-0.17mm]
\tikzstyle{left hook arrow}=[left hook-latex]
\tikzstyle{right hook arrow}=[right hook-latex]
\tikzstyle{sembracket}=[line width=0.5pt,shorten <=-0.07mm,shorten >=-0.07mm]
\tikzstyle{causal edge}=[->,thick,gray]
\tikzstyle{causal nondir}=[thick,gray]
\tikzstyle{timeline}=[thick,gray, dashed]
\tikzstyle{cedge}=[<->,thick,gray!70!white]
\tikzstyle{empty diagram}=[draw=gray!40!white,dashed,shape=rectangle,minimum width=1cm,minimum height=1cm]
\tikzstyle{empty diagram small}=[draw=gray!50!white,dashed,shape=rectangle,minimum width=0.6cm,minimum height=0.5cm]
\tikzstyle{dot}=[inner sep=0mm,minimum width=2mm,minimum height=2mm,draw,shape=circle]
\tikzstyle{bigdot}=[inner sep=0mm,minimum width=5mm,minimum height=5mm,draw,shape=circle]
\tikzstyle{leak}=[white dot, shape=regular polygon, minimum size=3.3 mm, regular polygon sides=3, outer sep=-0.2mm, regular polygon rotate=270]
\tikzstyle{proj}=[regular polygon,regular polygon sides=4,draw,scale=0.75,inner sep=-0.5pt,minimum width=6mm,fill=white]
\tikzstyle{projOut}=[regular polygon,regular polygon sides=3,draw,scale=0.75,inner sep=-0.5pt,minimum width=7.5mm,fill=white,regular polygon rotate=180]
\tikzstyle{projIn}=[regular polygon,regular polygon sides=3,draw,scale=0.75,inner sep=-0.5pt,minimum width=7.5mm,fill=white]
\tikzstyle{Vleak}=[white dot, shape=regular polygon, minimum size=3.3 mm, regular polygon sides=3, outer sep=-0.2mm, regular polygon rotate=90]
\tikzstyle{dleak}=[white dot, line width=1.6pt, shape=regular polygon, minimum size=3.3 mm, regular polygon sides=3, outer sep=-0.2mm, regular polygon rotate=270]
\tikzstyle{Wsquare}=[white dot, shape=regular polygon, rounded corners=0.8 mm, minimum size=3.3 mm, regular polygon sides=3, outer sep=-0.2mm]
\tikzstyle{Wsquareadj}=[white dot, shape=regular polygon, rounded corners=0.8 mm, minimum size=3.3 mm, regular polygon sides=3, outer sep=-0.2mm, regular polygon rotate=180]
\tikzstyle{ddot}=[inner sep=0mm, doubled, minimum width=2.5mm,minimum height=2.5mm,draw,shape=circle]
\tikzstyle{clear dot}=[dot,fill=none,text depth=-0.2mm,draw=gray, line width = .75pt]
\tikzstyle{tall clear dot}=[dot,fill=none,text depth=-0.2mm,draw=gray, line width = .75pt,shape=ellipse, minimum height=5mm]
\tikzstyle{wide clear dot}=[dot,fill=none,text depth=-0.2mm,draw=gray, line width = .75pt, shape=ellipse, minimum width = 5mm]
\tikzstyle{very wide clear dot}=[dot,fill=none,text depth=-0.2mm,draw=gray, line width = .75pt, shape=ellipse, minimum width = 7mm ]
\tikzstyle{black dot}=[dot,fill=black]
\tikzstyle{white dot}=[dot,fill=white,,text depth=-0.2mm]
\tikzstyle{white Wsquare}=[Wsquare,fill=gray,,text depth=-0.2mm]
\tikzstyle{white Wsquareadj}=[Wsquareadj,fill=white,,text depth=-0.2mm]
\tikzstyle{green dot}=[white dot] 
\tikzstyle{gray dot}=[dot,fill=gray!40!white,,text depth=-0.2mm]
\tikzstyle{red dot}=[gray dot] 
\tikzstyle{black ddot}=[ddot,fill=black]
\tikzstyle{white ddot}=[ddot,fill=white]
\tikzstyle{gray ddot}=[ddot,fill=gray!40!white]
\tikzstyle{gray edge}=[gray!60!white]
\tikzstyle{small dot}=[inner sep=0.2mm,minimum width=0pt,minimum height=0pt,draw,shape=circle]
\tikzstyle{small black dot}=[small dot,fill=black]
\tikzstyle{small white dot}=[small dot,fill=white]
\tikzstyle{small gray dot}=[small dot,fill=gray,draw=gray]
\tikzstyle{causal dot}=[inner sep=0.4mm,minimum width=0pt,minimum height=0pt,draw=white,shape=circle,fill=gray!40!white]
\tikzstyle{phase dimensions}=[minimum size=5mm,font=\footnotesize,rectangle,rounded corners=2.5mm,inner sep=0.2mm,outer sep=-2mm]
\tikzstyle{dphase dimensions}=[minimum size=5mm,font=\footnotesize,rectangle,rounded corners=2.5mm,inner sep=0.2mm,outer sep=-2mm]
\tikzstyle{white phase dot}=[dot,fill=white,phase dimensions]
\tikzstyle{white phase ddot}=[ddot,fill=white,dphase dimensions]
\tikzstyle{white rect ddot}=[draw=black,fill=white,doubled,minimum size=5mm,font=\footnotesize,rectangle,rounded corners=2.5mm,inner sep=0.2mm]
\tikzstyle{gray rect ddot}=[draw=black,fill=gray!40!white,doubled,minimum size=6mm,font=\footnotesize,rectangle,rounded corners=3mm]
\tikzstyle{gray phase dot}=[dot,fill=gray!40!white,phase dimensions]
\tikzstyle{gray phase ddot}=[ddot,fill=gray!40!white,dphase dimensions]
\tikzstyle{grey phase dot}=[gray phase dot]
\tikzstyle{grey phase ddot}=[gray phase ddot]
\tikzstyle{small phase dimensions}=[minimum size=4mm,font=\tiny,rectangle,rounded corners=2mm,inner sep=0.2mm,outer sep=-2mm]
\tikzstyle{small dphase dimensions}=[minimum size=4mm,font=\tiny,rectangle,rounded corners=2mm,inner sep=0.2mm,outer sep=-2mm]
\tikzstyle{small gray phase dot}=[dot,fill=gray!40!white,small phase dimensions]
\tikzstyle{small gray phase ddot}=[ddot,fill=gray!40!white,small dphase dimensions]
\tikzstyle{small map}=[draw,shape=rectangle,minimum height=4mm,minimum width=4mm,fill=white]
\tikzstyle{cnot}=[fill=white,shape=circle,inner sep=-1.4pt]
\tikzstyle{asym hadamard}=[fill=white,draw,shape=NEbox,inner sep=0.6mm,font=\footnotesize,minimum height=4mm]
\tikzstyle{asym hadamard conj}=[fill=white,draw,shape=NWbox,inner sep=0.6mm,font=\footnotesize,minimum height=4mm]
\tikzstyle{asym hadamard dag}=[fill=white,draw,shape=SEbox,inner sep=0.6mm,font=\footnotesize,minimum height=4mm]
\tikzstyle{hadamard}=[fill=white,draw,inner sep=0.6mm,font=\footnotesize,minimum height=4mm,minimum width=4mm]
\tikzstyle{small hadamard}=[fill=white,draw,inner sep=0.6mm,minimum height=1.5mm,minimum width=1.5mm]
\tikzstyle{small hadamard rotate}=[small hadamard,rotate=45]
\tikzstyle{dhadamard}=[hadamard,doubled]
\tikzstyle{small dhadamard}=[small hadamard,doubled]
\tikzstyle{small dhadamard rotate}=[small hadamard rotate,doubled]
\tikzstyle{antipode}=[white dot,inner sep=0.3mm,font=\footnotesize]
\tikzstyle{scalar}=[diamond,draw,inner sep=0.5pt,font=\small]
\tikzstyle{dscalar}=[diamond,doubled, draw,inner sep=0.5pt,font=\small]
\tikzstyle{small box}=[rectangle,inline text,fill=white,draw,minimum height=5mm,yshift=-0.5mm,minimum width=5mm,font=\small]
\tikzstyle{small gray box}=[small box,fill=gray!30]
\tikzstyle{medium box}=[rectangle,inline text,fill=white,draw,minimum height=5mm,yshift=-0.5mm,minimum width=10mm,font=\small]
\tikzstyle{square box}=[small box] 
\tikzstyle{medium gray box}=[small box,fill=gray!30]
\tikzstyle{semilarge box}=[rectangle,inline text,fill=white,draw,minimum height=5mm,yshift=-0.5mm,minimum width=12.5mm,font=\small]
\tikzstyle{large box}=[rectangle,inline text,fill=white,draw,minimum height=5mm,yshift=-0.5mm,minimum width=15mm,font=\small]
\tikzstyle{large gray box}=[small box,fill=gray!30]
\tikzstyle{Bayes box}=[rectangle,fill=black,draw, minimum height=3mm, minimum width=3mm]
\tikzstyle{gray square point}=[small box,fill=gray!50]
\tikzstyle{dphase box white}=[dhadamard]
\tikzstyle{dphase box gray}=[dhadamard,fill=gray!50!white]
\tikzstyle{phase box white}=[hadamard]
\tikzstyle{phase box gray}=[hadamard,fill=gray!50!white]
\tikzstyle{point}=[regular polygon,regular polygon sides=3,draw,scale=0.75,inner sep=-0.5pt,minimum width=9mm,fill=white,regular polygon rotate=180]
\tikzstyle{infpoint}=[regular polygon,regular polygon sides=3,draw,scale=0.75,inner sep=-0.5pt,minimum width=9mm,fill=white,regular polygon rotate=90]
\tikzstyle{point nosep}=[regular polygon,regular polygon sides=3,draw,scale=0.75,inner sep=-2pt,minimum width=9mm,fill=white,regular polygon rotate=180]
\tikzstyle{infcopoint}=[regular polygon,regular polygon sides=3,draw,scale=0.75,inner sep=-0.5pt,minimum width=9mm,fill=white,regular polygon rotate=270]
\tikzstyle{copoint}=[regular polygon,regular polygon sides=3,draw,scale=0.75,inner sep=-0.5pt,minimum width=9mm,fill=white]
\tikzstyle{dpoint}=[point,doubled]
\tikzstyle{dcopoint}=[copoint,doubled]
\tikzstyle{pointgrow}=[shape=cornerpoint,kpoint common,scale=0.75,inner sep=3pt]
\tikzstyle{pointgrow dag}=[shape=cornercopoint,kpoint common,scale=0.75,inner sep=3pt]
\tikzstyle{wide copoint}=[fill=white,draw,shape=isosceles triangle,shape border rotate=90,isosceles triangle stretches=true,inner sep=0pt,minimum width=1.5cm,minimum height=6.12mm]
\tikzstyle{wide point}=[fill=white,draw,shape=isosceles triangle,shape border rotate=-90,isosceles triangle stretches=true,inner sep=0pt,minimum width=1.5cm,minimum height=6.12mm,yshift=-0.0mm]
\tikzstyle{wide point plus}=[fill=white,draw,shape=isosceles triangle,shape border rotate=-90,isosceles triangle stretches=true,inner sep=0pt,minimum width=1.74cm,minimum height=7mm,yshift=-0.0mm]
\tikzstyle{wide dpoint}=[fill=white,doubled,draw,shape=isosceles triangle,shape border rotate=-90,isosceles triangle stretches=true,inner sep=0pt,minimum width=1.5cm,minimum height=6.12mm,yshift=-0.0mm]
\tikzstyle{tinypoint}=[regular polygon,regular polygon sides=3,draw,scale=0.55,inner sep=-0.15pt,minimum width=6mm,fill=white,regular polygon rotate=180]
\tikzstyle{white point}=[point]
\tikzstyle{white dpoint}=[dpoint]
\tikzstyle{green point}=[white point] 
\tikzstyle{white copoint}=[copoint]
\tikzstyle{gray point}=[point,fill=gray!40!white]
\tikzstyle{gray dpoint}=[gray point,doubled]
\tikzstyle{red point}=[gray point] 
\tikzstyle{gray copoint}=[copoint,fill=gray!40!white]
\tikzstyle{gray dcopoint}=[gray copoint,doubled]
\tikzstyle{white point guide}=[regular polygon,regular polygon sides=3,font=\scriptsize,draw,scale=0.65,inner sep=-0.5pt,minimum width=9mm,fill=white,regular polygon rotate=180]
\tikzstyle{black point}=[point,fill=black,font=\color{white}]
\tikzstyle{black copoint}=[copoint,fill=black,font=\color{white}]
\tikzstyle{tiny gray point}=[tinypoint,fill=gray!40!white]
\tikzstyle{diredge}=[->]
\tikzstyle{ddiredge}=[<->]
\tikzstyle{rdiredge}=[<-]
\tikzstyle{thickdiredge}=[->, very thick]
\tikzstyle{pointer edge}=[->,very thick,gray]
\tikzstyle{pointer edge part}=[very thick,gray]
\tikzstyle{dashed edge}=[dashed]
\tikzstyle{thick dashed edge}=[very thick,dashed]
\tikzstyle{thick gray dashed edge}=[thick dashed edge,gray!40]
\tikzstyle{thick map edge}=[very thick,|->]
\newcommand{\boxshape}[3]{%
\pgfdeclareshape{#1}{
\inheritsavedanchors[from=rectangle] 
\inheritanchorborder[from=rectangle]
\inheritanchor[from=rectangle]{center}
\inheritanchor[from=rectangle]{north}
\inheritanchor[from=rectangle]{south}
\inheritanchor[from=rectangle]{west}
\inheritanchor[from=rectangle]{east}
\backgroundpath{
\southwest \pgf@xa=\pgf@x \pgf@ya=\pgf@y
\northeast \pgf@xb=\pgf@x \pgf@yb=\pgf@y

\@tempdima=#2
\@tempdimb=#3

\pgfpathmoveto{\pgfpoint{\pgf@xa - 5pt + \@tempdima}{\pgf@ya}}
\pgfpathlineto{\pgfpoint{\pgf@xa - 5pt - \@tempdima}{\pgf@yb}}
\pgfpathlineto{\pgfpoint{\pgf@xb + 5pt + \@tempdimb}{\pgf@yb}}
\pgfpathlineto{\pgfpoint{\pgf@xb + 5pt - \@tempdimb}{\pgf@ya}}
\pgfpathlineto{\pgfpoint{\pgf@xa - 5pt + \@tempdima}{\pgf@ya}}
\pgfpathclose
}
}}
\tikzstyle{cloud}=[shape=cloud,draw,minimum width=1.5cm,minimum height=1.5cm]
\tikzstyle{map}=[draw,shape=NEbox,inner sep=1pt,minimum height=4mm,fill=white]
\tikzstyle{dashedmap}=[draw,dashed,shape=NEbox,inner sep=2pt,minimum height=6mm,fill=white]
\tikzstyle{mapdag}=[draw,shape=SEbox,inner sep=1pt,minimum height=4mm,fill=white]
\tikzstyle{mapadj}=[draw,shape=SEbox,inner sep=2pt,minimum height=6mm,fill=white]
\tikzstyle{maptrans}=[draw,shape=SWbox,inner sep=2pt,minimum height=6mm,fill=white]
\tikzstyle{mapconj}=[draw,shape=NWbox,inner sep=2pt,minimum height=6mm,fill=white]
\tikzstyle{medium map}=[draw,shape=NEbox,inner sep=2pt,minimum height=6mm,fill=white,minimum width=7mm]
\tikzstyle{medium map dag}=[draw,shape=SEbox,inner sep=2pt,minimum height=6mm,fill=white,minimum width=7mm]
\tikzstyle{medium map adj}=[draw,shape=SEbox,inner sep=2pt,minimum height=6mm,fill=white,minimum width=7mm]
\tikzstyle{medium map trans}=[draw,shape=SWbox,inner sep=2pt,minimum height=6mm,fill=white,minimum width=7mm]
\tikzstyle{medium map conj}=[draw,shape=NWbox,inner sep=2pt,minimum height=6mm,fill=white,minimum width=7mm]
\tikzstyle{semilarge map}=[draw,shape=NEbox,inner sep=2pt,minimum height=6mm,fill=white,minimum width=9.5mm]
\tikzstyle{semilarge map trans}=[draw,shape=SWbox,inner sep=2pt,minimum height=6mm,fill=white,minimum width=9.5mm]
\tikzstyle{semilarge map adj}=[draw,shape=SEbox,inner sep=2pt,minimum height=6mm,fill=white,minimum width=9.5mm]
\tikzstyle{semilarge map dag}=[draw,shape=SEbox,inner sep=2pt,minimum height=6mm,fill=white,minimum width=9.5mm]
\tikzstyle{semilarge map conj}=[draw,shape=NWbox,inner sep=2pt,minimum height=6mm,fill=white,minimum width=9.5mm]
\tikzstyle{large map}=[draw,shape=NEbox,inner sep=2pt,minimum height=6mm,fill=white,minimum width=12mm]
\tikzstyle{large map conj}=[draw,shape=NWbox,inner sep=2pt,minimum height=6mm,fill=white,minimum width=12mm]
\tikzstyle{very large map}=[draw,shape=NEbox,inner sep=2pt,minimum height=6mm,fill=white,minimum width=17mm]
\tikzstyle{medium dmap}=[draw,doubled,shape=NEbox,inner sep=2pt,minimum height=6mm,fill=white,minimum width=7mm]
\tikzstyle{medium dmap dag}=[draw,doubled,shape=SEbox,inner sep=2pt,minimum height=6mm,fill=white,minimum width=7mm]
\tikzstyle{medium dmap adj}=[draw,doubled,shape=SEbox,inner sep=2pt,minimum height=6mm,fill=white,minimum width=7mm]
\tikzstyle{medium dmap trans}=[draw,doubled,shape=SWbox,inner sep=2pt,minimum height=6mm,fill=white,minimum width=7mm]
\tikzstyle{medium dmap conj}=[draw,doubled,shape=NWbox,inner sep=2pt,minimum height=6mm,fill=white,minimum width=7mm]
\tikzstyle{semilarge dmap}=[draw,doubled,shape=NEbox,inner sep=2pt,minimum height=6mm,fill=white,minimum width=9.5mm]
\tikzstyle{semilarge dmap trans}=[draw,doubled,shape=SWbox,inner sep=2pt,minimum height=6mm,fill=white,minimum width=9.5mm]
\tikzstyle{semilarge dmap adj}=[draw,doubled,shape=SEbox,inner sep=2pt,minimum height=6mm,fill=white,minimum width=9.5mm]
\tikzstyle{semilarge dmap dag}=[draw,doubled,shape=SEbox,inner sep=2pt,minimum height=6mm,fill=white,minimum width=9.5mm]
\tikzstyle{semilarge dmap conj}=[draw,doubled,shape=NWbox,inner sep=2pt,minimum height=6mm,fill=white,minimum width=9.5mm]
\tikzstyle{large dmap}=[draw,doubled,shape=NEbox,inner sep=2pt,minimum height=6mm,fill=white,minimum width=12mm]
\tikzstyle{large dmap conj}=[draw,doubled,shape=NWbox,inner sep=2pt,minimum height=6mm,fill=white,minimum width=12mm]
\tikzstyle{large dmap trans}=[draw,doubled,shape=SWbox,inner sep=2pt,minimum height=6mm,fill=white,minimum width=12mm]
\tikzstyle{large dmap adj}=[draw,doubled,shape=SEbox,inner sep=2pt,minimum height=6mm,fill=white,minimum width=12mm]
\tikzstyle{large dmap dag}=[draw,doubled,shape=SEbox,inner sep=2pt,minimum height=6mm,fill=white,minimum width=12mm]
\tikzstyle{very large dmap}=[draw,doubled,shape=NEbox,inner sep=2pt,minimum height=6mm,fill=white,minimum width=19.5mm]
\tikzstyle{muxbox}=[draw,shape=rectangle,minimum height=3mm,minimum width=3mm,fill=white]
\tikzstyle{dmuxbox}=[muxbox,doubled]
\tikzstyle{box}=[draw,shape=rectangle,inner sep=2pt,minimum height=6mm,minimum width=6mm,fill=white]
\tikzstyle{dbox}=[draw,doubled,shape=rectangle,inner sep=2pt,minimum height=6mm,minimum width=6mm,fill=white]
\tikzstyle{dmap}=[draw,doubled,shape=NEbox,inner sep=2pt,minimum height=6mm,fill=white]
\tikzstyle{dmapdag}=[draw,doubled,shape=SEbox,inner sep=2pt,minimum height=6mm,fill=white]
\tikzstyle{dmapadj}=[draw,doubled,shape=SEbox,inner sep=2pt,minimum height=6mm,fill=white]
\tikzstyle{dmaptrans}=[draw,doubled,shape=SWbox,inner sep=2pt,minimum height=6mm,fill=white]
\tikzstyle{dmapconj}=[draw,doubled,shape=NWbox,inner sep=2pt,minimum height=6mm,fill=white]
\tikzstyle{ddmap}=[draw,doubled,dashed,shape=NEbox,inner sep=2pt,minimum height=6mm,fill=white]
\tikzstyle{ddmapdag}=[draw,doubled,dashed,shape=SEbox,inner sep=2pt,minimum height=6mm,fill=white]
\tikzstyle{ddmapadj}=[draw,doubled,dashed,shape=SEbox,inner sep=2pt,minimum height=6mm,fill=white]
\tikzstyle{ddmaptrans}=[draw,doubled,dashed,shape=SWbox,inner sep=2pt,minimum height=6mm,fill=white]
\tikzstyle{ddmapconj}=[draw,doubled,dashed,shape=NWbox,inner sep=2pt,minimum height=6mm,fill=white]
\tikzstyle{smap}=[draw,shape=sNEbox,fill=white]
\tikzstyle{smapdag}=[draw,shape=sSEbox,fill=white]
\tikzstyle{smapadj}=[draw,shape=sSEbox,fill=white]
\tikzstyle{smaptrans}=[draw,shape=sSWbox,fill=white]
\tikzstyle{smapconj}=[draw,shape=sNWbox,fill=white]
\tikzstyle{dsmap}=[draw,dashed,shape=sNEbox,fill=white]
\tikzstyle{dsmapdag}=[draw,dashed,shape=sSEbox,fill=white]
\tikzstyle{dsmaptrans}=[draw,dashed,shape=sSWbox,fill=white]
\tikzstyle{dsmapconj}=[draw,dashed,shape=sNWbox,fill=white]
\tikzstyle{mmap}=[draw,shape=mNEbox]
\tikzstyle{mmapdag}=[draw,shape=mSEbox]
\tikzstyle{mmaptrans}=[draw,shape=mSWbox]
\tikzstyle{mmapconj}=[draw,shape=mNWbox]
\tikzstyle{mmapgray}=[draw,fill=gray!40!white,shape=mNEbox]
\tikzstyle{smapgray}=[draw,fill=gray!40!white,shape=sNEbox]
\pgfmathsetmacro{\pgf@shorten@left}{\pgfkeysvalueof{/tikz/shorten left}}
\pgfmathsetmacro{\pgf@shorten@right}{\pgfkeysvalueof{/tikz/shorten right}}
\pgfmathsetmacro{\pgf@shorten@left}{\pgfkeysvalueof{/tikz/shorten left}}
\pgfmathsetmacro{\pgf@shorten@right}{\pgfkeysvalueof{/tikz/shorten right}}
\tikzstyle{kpoint common}=[draw,fill=white,inner sep=1pt,minimum height=4mm]
\tikzstyle{kpoint sc}=[shape=cornerpoint,kpoint common]
\tikzstyle{kpoint adjoint sc}=[shape=cornercopoint,kpoint common]
\tikzstyle{kpoint}=[shape=cornerpoint,shorten left=5pt,kpoint common]
\tikzstyle{kpoint adjoint}=[shape=cornercopoint,shorten left=5pt,kpoint common]
\tikzstyle{kpoint conjugate}=[shape=cornerpoint,shorten right=5pt,kpoint common]
\tikzstyle{kpoint transpose}=[shape=cornercopoint,shorten right=5pt,kpoint common]
\tikzstyle{kpoint symm}=[shape=cornerpoint,shorten left=5pt,shorten right=5pt,kpoint common]
\tikzstyle{wide kpoint sc}=[shape=cornerpoint,kpoint common, minimum width=1 cm]
\tikzstyle{wide kpointdag sc}=[shape=cornercopoint,kpoint common, minimum width=1 cm]
\tikzstyle{black kpoint}=[shape=cornerpoint,shorten left=5pt,kpoint common,fill=black,font=\color{white}]
\tikzstyle{black kpoint sm}=[shape=cornerpoint,shorten left=5pt,kpoint common,fill=black,font=\color{white},scale=0.75]
\tikzstyle{black kpoint adjoint}=[shape=cornercopoint,shorten left=5pt,kpoint common,fill=black,font=\color{white}]
\tikzstyle{black kpointadj}=[shape=cornercopoint,shorten left=5pt,kpoint common,fill=black,font=\color{white}]
\tikzstyle{black kpointadj sm}=[shape=cornercopoint,shorten left=5pt,kpoint common,fill=black,font=\color{white},scale=0.75]
\tikzstyle{black dkpoint}=[shape=cornerpoint,shorten left=5pt,kpoint common,fill=black, doubled,font=\color{white}]
\tikzstyle{black dkpoint adjoint}=[shape=cornercopoint,shorten left=5pt,kpoint common,fill=black, doubled,font=\color{white}]
\tikzstyle{black dkpointadj}=[shape=cornercopoint,shorten left=5pt,kpoint common,fill=black, doubled,font=\color{white}]
\tikzstyle{black dkpoint sm}=[shape=cornerpoint,shorten left=5pt,kpoint common,fill=black, doubled,font=\color{white},scale=0.75]
\tikzstyle{black dkpointadj sm}=[shape=cornercopoint,shorten left=5pt,kpoint common,fill=black, doubled,font=\color{white},scale=0.75]
\tikzstyle{kpointdag}=[kpoint adjoint]
\tikzstyle{kpointadj}=[kpoint adjoint]
\tikzstyle{kpointconj}=[kpoint conjugate]
\tikzstyle{kpointtrans}=[kpoint transpose]
\tikzstyle{big kpoint}=[kpoint, minimum width=1.2 cm, minimum height=8mm, inner sep=4pt, text depth=3mm]
\tikzstyle{wide kpoint}=[kpoint, minimum width=1 cm, inner sep=2pt]
\tikzstyle{wide kpointdag}=[kpointdag, minimum width=1 cm, inner sep=2pt]
\tikzstyle{wide kpointconj}=[kpointconj, minimum width=1 cm, inner sep=2pt]
\tikzstyle{wide kpointtrans}=[kpointtrans, minimum width=1 cm, inner sep=2pt]
\tikzstyle{wider kpoint}=[kpoint, minimum width=1.25 cm, inner sep=2pt]
\tikzstyle{wider kpointdag}=[kpointdag, minimum width=1.25 cm, inner sep=2pt]
\tikzstyle{wider kpointconj}=[kpointconj, minimum width=1.25 cm, inner sep=2pt]
\tikzstyle{wider kpointtrans}=[kpointtrans, minimum width=1.25 cm, inner sep=2pt]
\tikzstyle{gray kpoint}=[kpoint,fill=gray!50!white]
\tikzstyle{gray kpointdag}=[kpointdag,fill=gray!50!white]
\tikzstyle{gray kpointadj}=[kpointadj,fill=gray!50!white]
\tikzstyle{gray kpointconj}=[kpointconj,fill=gray!50!white]
\tikzstyle{gray kpointtrans}=[kpointtrans,fill=gray!50!white]
\tikzstyle{gray dkpoint}=[kpoint,fill=gray!50!white,doubled]
\tikzstyle{gray dkpointdag}=[kpointdag,fill=gray!50!white,doubled]
\tikzstyle{gray dkpointadj}=[kpointadj,fill=gray!50!white,doubled]
\tikzstyle{gray dkpointconj}=[kpointconj,fill=gray!50!white,doubled]
\tikzstyle{gray dkpointtrans}=[kpointtrans,fill=gray!50!white,doubled]
\tikzstyle{white label}=[draw,fill=white,rectangle,inner sep=0.7 mm]
\tikzstyle{gray label}=[draw,fill=gray!50!white,rectangle,inner sep=0.7 mm]
\tikzstyle{black label}=[draw,fill=black,rectangle,inner sep=0.7 mm]
\tikzstyle{dkpoint}=[kpoint,doubled]
\tikzstyle{wide dkpoint}=[wide kpoint,doubled]
\tikzstyle{dkpointdag}=[kpoint adjoint,doubled]
\tikzstyle{wide dkpointdag}=[wide kpointdag,doubled]
\tikzstyle{dkcopoint}=[kpoint adjoint,doubled]
\tikzstyle{dkpointadj}=[kpoint adjoint,doubled]
\tikzstyle{dkpointconj}=[kpoint conjugate,doubled]
\tikzstyle{dkpointtrans}=[kpoint transpose,doubled]
\tikzstyle{kscalar}=[kpoint common, shape=EBox, inner xsep=-1pt, inner ysep=3pt,font=\small]
\tikzstyle{kscalarconj}=[kpoint common, shape=WBox, inner xsep=-1pt, inner ysep=3pt,font=\small]
\tikzstyle{spekpoint}=[kpoint sc,minimum height=5mm,inner sep=3pt]
\tikzstyle{spekcopoint}=[kpoint adjoint sc,minimum height=5mm,inner sep=3pt]
\tikzstyle{dspekpoint}=[spekpoint,doubled]
\tikzstyle{dspekcopoint}=[spekcopoint,doubled]
 \tikzstyle{upground}=[circuit ee IEC,thick,ground,rotate=90,scale=2.5]
 \tikzstyle{downground}=[circuit ee IEC,thick,ground,rotate=-90,scale=2.5]
 \tikzstyle{infupground}=[circuit ee IEC,thick,ground,rotate=0,scale=2.5]
 \tikzstyle{infdownground}=[circuit ee IEC,thick,ground,rotate=180,scale=2.5]
 \tikzstyle{bigground}=[regular polygon,regular polygon sides=3,draw=gray,scale=0.50,inner sep=-0.5pt,minimum width=10mm,fill=gray]
\tikzstyle{arrs}=[-latex,font=\small,auto]
\tikzstyle{arrow plain}=[arrs]
\tikzstyle{arrow dashed}=[dashed,arrs]
\tikzstyle{arrow bold}=[very thick,arrs]
\tikzstyle{arrow hide}=[draw=white!0,-]
\tikzstyle{arrow reverse}=[latex-]
\tikzstyle{cdnode}=[]
\let\olddagger\dagger
\renewcommand{\dagger}{\ensuremath{\olddagger}\xspace}
\def\bR{\begin{color}{red}}
\def\bB{\begin{color}{blue}}
\def\bM{\begin{color}{magenta}}
\def\bC{\begin{color}{cyan}}
\def\bW{\begin{color}{white}}
\def\bBl{\begin{color}{black}}
\def\bG{\begin{color}{green}}
\def\bY{\begin{color}{yellow}}
\def\e{\end{color}\xspace}
\newcommand{\bit}{\begin{itemize}}
\newcommand{\eit}{\end{itemize}\par\noindent}
\newcommand{\ben}{\begin{enumerate}}
\newcommand{\een}{\end{enumerate}\par\noindent}
\newcommand{\beq}{\begin{equation}}
\newcommand{\eeq}{\end{equation}\par\noindent}
\newcommand{\beqa}{\begin{eqnarray*}}
\newcommand{\eeqa}{\end{eqnarray*}\par\noindent}
\newcommand{\beqn}{\begin{eqnarray}}
\newcommand{\eeqn}{\end{eqnarray}\par\noindent}
\def\jR{\begin{color}{black}}
\def\jB{\begin{color}{black}}
\def\jM{\begin{color}{magenta}}
\def\jC{\begin{color}{cyan}}
\def\jW{\begin{color}{white}}
\def\jBl{\begin{color}{black}}
\def\jG{\begin{color}{green}}
\def\jY{\begin{color}{yellow}}
\begin{document}

\title{A hierarchy of semidefinite programs for generalised Einstein-Podolsky-Rosen scenarios}

\author{Matty J.~Hoban}
\affiliation{Quantum Group, Department of Computer Science, University of Oxford, United Kingdom}
\author{Tom Drescher}
\affiliation{University of Innsbruck, Department of Mathematics,
A-6020 Innsbruck, Austria}
\author{Ana Bel\'en Sainz}
\affiliation{International Centre for Theory of Quantum Technologies, University of  Gda{\'n}sk, 80-309 Gda{\'n}sk, Poland}
\affiliation{Basic Research Community for Physics e.V., Germany}

\begin{abstract}
Correlations in Einstein-Podolsky-Rosen (EPR) scenarios, captured by \textit{assemblages} of unnormalised quantum states, have recently caught the attention of the community, both from a foundational and an information-theoretic perspective. The set of quantum-realisable assemblages, or abbreviated to quantum assemblages, are those that arise from multiple parties performing local measurements on a shared quantum system. In general, deciding whether or not a given assemblage is a quantum assemblage, i.e. membership of the set of quantum assemblages, is a hard problem, and not always solvable. In this paper we introduce a hierarchy of tests where each level either determines non-membership of the set of quantum assemblages or is inconclusive. The higher the level of the hierarchy the better one can determine non-membership, and this hierarchy converges to a particular set of assemblages. Furthermore, this set to which it converges contains the quantum assemblages. Each test in the hierarchy is formulated as a semidefinite program. This hierarchy allows one to upper bound the quantum violation of a steering inequality and the quantum advantage provided by quantum EPR assemblages in a communication or information-processing task. 
\end{abstract}

\maketitle

\newpage

\tableofcontents

\newpage

\section{Introduction}

Einstein-Podolsky-Rosen (EPR) `steering' is a remarkable nonclassical feature of quantum theory \cite{einstein1935can,wiseman2007steering}, first discussed by Schr\"odinger \cite{schrodinger1935discussion}. 
Some people understand EPR nonclassicality as a phenomenon where Alice, by performing measurements on half of a shared system, seemingly remotely ‘steers’ the quantum state of a distant Bob, in a way which has no classical explanation. A different understanding of this -- endorsed by EPR themselves –- is that Alice has no causal influence on the
physical state of Bob’s system, and just merely updates her knowledge of the state of Bob’s system by performing a measurement on a system correlated with his. 
Regardless of how one should view the EPR phenomenon, the correlations that Alice and Bob observe between her measurement choices \& outcomes and the state of his system are relevant not only for exploring the foundations of nature but also for the quantum information community: on the one hand,  EPR scenarios show a particular way in which the world is nonclassical, opening up new pathways toward pursuing an understanding of nature itself. On the other hand, these correlations in EPR scenarios enable the certification of entanglement under relaxed assumptions \cite{cavalcanti2015detection,mattar2017experimental}, and constitute an information-theoretic resource for various cryptographic tasks \cite{branciard2012one,gianissecretsharing}. These correlations in an EPR steering experiment are usually mathematically captured by an \textit{assemblage} of possibly unnormalised quantum states, each of which representing the state of Bob's quantum system for each of Alice's measurement choice \& outcome \cite{pusey2013negativity}.

A recent line of research explores the possible assemblages that hypothetical theories which supersede quantum theory may allow. This has been pursued from the perspective of ``constraining the possible assemblages only by the operational constraints of the EPR scenario'' \cite{gisin1989stochastic,hughston1993complete,pqs,bpqs}, and also from the perspective of identifying the assemblages that certain toy theories allow \cite{cavalcanti2021witworld}. In both cases, the minimal constraint that the assemblages need to satisfy is the so-called No Signalling principle, which for the case of traditional bipartite EPR scenarios reads as follows. Let $\sigma_{a|x}$ denote the assemblage elements, where $a$ denotes Alice's outcome, and $x$ denotes her choice of measurement. The No Signalling principle demands that $\sum_{a} \sigma_{a|x} = \sum_{a} \sigma_{a|x'}$ for all possible pairs $(x,x')$, that is, Alice can not transfer information to Bob by leveraging her choice of measurement. It is known that there exist assemblages that satisfy the No Signalling principle, however cannot be produced by performing a quantum EPR experiment, that is, do not admit a \textit{quantum explanation} \cite{pqs,bpqs}. The phenomenon of non-signalling assemblages which do not admit a quantum explanation is referred to as \textit{post-quantum steering} or \textit{post-quantum EPR inference}. 

The logical possibility of post-quantum steering has then opened a few questions: if Alice and Bob observe an assemblage, how can they check if it has a quantum explanation without making various assumptions about their experimental setup? This question has been studied at length for Bell experiments and Kochen-Specker contextuality experiments, and suspected to be not possible to answer in full generality. Hence, tools have been developed to certify when correlations in Bell or Kochen-Specker ``definitely do not admit of a quantum explanation''. These tools come as hierarchies of tests, each defined as a semidefinite program: testing one level tells you either ``it does not have a quantum explanation'' or ``it's unclear if it has or not a quantum explanation''. If the level you're testing tells you the second answer, then you move on to a higher level of the hierarchy, and so on. For the case of Bell scenarios, the hierarchy of semidefinite tests was developed by Navascu\'es, Pironio, and Ac\'in \cite{NPA}, and for the case of Kochen-Specker scenarios it was developed by Ac\'in, Fritz, Leverrier, and Sainz \cite{acin2015combinatorial}. These hierarchies are in practice essential when assessing the best performance of quantum correlations in communication and information processing tasks and even for upper-bounding the maximum quantum violation of a Bell or Kochen-Specker inequality \cite{BellRev}. 

For the case of EPR scenarios, one hierarchy of semidefinite tests was developed by Johnston, Mittal, Russo, and Watrous \cite{ENG}, in the particular case of a traditional tripartite EPR scenario with two untrusted parties. This technique was applied in the context of the so-called \textit{extended nonlocal games}, to bound the maximum chance of success when playing these games with quantum resources. In our work we develop a new hierarchy of tests (defined each as a semidefinite program) that can check whether an assemblage has no quantum realisation in all the EPR scenarios we know of so far, such as traditional multipartite EPR scenarios and generalised bipartite scenarios, including Bob-with-Input scenarios and Instrumental EPR scenarios. We show that this hierarchy converges to a particular set of assemblages, but we leave it as an open question whether there is a natural definition of this set. Our hierarchy has already proven useful for identifying post-quantum assemblages in recent work \cite{bpqs}.

\section{Bipartite Bob-with-Input EPR scenario}

We begin by introducing our hierarchy for semidefinite tests for the simplest case of a so-called Bob-with-Input scenario: that with only one so-called black-box party in the setup. First, we recap the main concepts for the scenario, and then we define the hierarchy of assemblages sets. We show that quantum assemblages belong to the set of assemblages defined by any level of this hierarchy, and moreover show that this is a convergent hierarchy. 

\subsection{Definition of this EPR scenario}\label{se:bwi}

The particular type of bipartite EPR scenario that we focus on here is called \textit{Bob-with-Input} \cite{bpqs}, which is defined as follows. Just like in a traditional EPR scenario, two distant parties (Alice and Bob) share a physical system. Alice performs a measurement (chosen from a set) on her share of the system, in a space-like separated way from Bob. When obtaining the outcome, Alice updates and refines her knowledge of the state of Bob's subsystem. The additional feature of this EPR scenario, relative to the traditional one, is that here Bob can moreover influence the effective state preparation of his quantum system by choosing (from a finite set) the value of a classical variable $y$. This situation is depicted in Figure \ref{fig:steebip}. 

\begin{figure}
\begin{center}
	    \begin{tikzpicture}[scale=1.2]
		\node at (-2,1.3) {Alice};
		\shade[draw, thick, ,rounded corners, inner color=white,outer color=gray!50!white] (-1.7,-0.3) rectangle (-2.3,0.3) ;
		\draw[thick, ->] (-2.5,0.5) to [out=180, in=90] (-2,0.3);
		\node at (-2.7,0.5) {$x$};
		\draw[thick, -<] (-2,-0.3) to [out=-90, in=180] (-2.5,-0.5);
		\node at (-2.8,-0.5) {$a$};

		\node at (2,1.3) {Bob};
		\shade[draw, thick, ,rounded corners, inner color=white,outer color=gray!50!white] (1.7,-0.3) rectangle (2.3,0.3) ;		
		\draw[thick, ->] (2.5,0.5) to [out=180, in=90] (2,0.3);
		\node at (2.7,0.5) {$y$};
		\draw[thick, ->] (2,-0.3) -- (2,-0.7);
		\node at (2,-1) {$\sigma_{a|xy}$};
		
		\node at (0,0) {s};
		\draw[thick, dashed, color=gray!70!white] (0,0) circle (0.3cm);
		\draw[thick, dashed, color=gray!70!white, ->] (-0.3,0) -- (-1.6,0);
		\draw[thick, dashed, color=gray!70!white, ->] (0.3,0) -- (1.6,0);		
	    \end{tikzpicture}	    
\end{center}
\caption{Bob-with-Input steering scenario, where Bob now has an input, which is used to determine the production of a quantum system.}
\label{fig:steebip}
\end{figure}
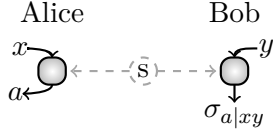

Formally, let $x\in \{1, \ldots, m_A\} =: \X$ denote the choice of Alice's measurement, $a \in \{0, \ldots, o_A-1\} =: \A$ denote the outcome of Alice's measurement\footnote{In principle, the measurements may not need to have all the same number of outcomes, although for the purpose of this manuscript one may assume that is indeed the case without loss of generality \cite{BellRev}.}, $y \in \{1, \ldots, m_B\} =: \Y$ be the choice of Bob's input, and $d$ the dimension of the Hilbert space of the quantum system whose state is prepared by this protocol. Then, the collection of states that Bob's system may be prepared into, called \textit{assemblage} and denoted by $\As$, consists of the following elements: $\{\sigma_{a|xy}\}_{a \in \A, \, x \in \X, \, y \in \Y} \equiv \As$. 

Notice that this experiment is not assumed to be describable by quantum theory in full generality. Indeed, the system that Alice and Bob share may as well belong to a (hypothetical) physical theory that supersedes quantum. The only constraint here is that the state of the system after the protocol is performed corresponds to a quantum state\footnote{Indeed, in traditional EPR scenarios, the staring point is that Bob's system admits an effective quantum description, even if the joint system shared with Alice may be outside quantum theory.}. 

Now, depending on what the nature of the system shared by Alice and Bob is, the assemblage may (i) be classically explainable -- called \textit{LHS Bob-with-Input assemblages}, (ii) be quantumly explainable -- called \textit{Quantum Bob-with-Input assemblages}, or (iii) simply comply with the operational constraints imposed by the mere definition of the steering scenario: that the assemblage is compatible with a common cause\footnote{Originally, an assemblage was defined to comply with the operational constraints of the steering scenario if it satisfied the No Signalling principle between the two parties. This has now been shown to be equivalent to demanding that the assemblage has a common cause explanation within a General Probabilistic Theory that supersedes quantum \cite{cavalcanti2023every}. Here we adopt the latter viewpoint, for the arguments presented in Refs.~\cite{Cowpie, DavidBwI, rtLOSE, anom}.} -- called \textit{Non-signalling Bob-with-Input assemblages} (see Sec.~\ref{se:bwiMul} for the specification of the so-called non-signalling constraints which define these assemblages).

This paper is motivated by the question of how to certify that an assemblage is (or rather, is not) quantumly explainable. Following the ideas of the Navascu\'es-Pironio-Ac\'in hierarchy \cite{NPA} for Bell Scenarios, and the Steering-hierarchy for traditional steering scenarios \cite{ENG}, we define a collection of sets of Non-signalling assemblages, which satisfy the following properties: 
\begin{compactitem}
\item the collection forms a hierarchy, in the sense that one may order the sets so that each is contained within all the sets that precedes it in the order,
\item testing membership to each set is one instance of a semidefinite program (SDP), 
\item the set of quantum assemblages is included within all the sets in the hierarchy. 
\end{compactitem}
Similarly to the hierarchies that have been developed for other scenarios \cite{ENG}, our method serves as a useful tool to certify post-quantum assemblages, i.e., to certify when an assemblage may not be quantumly explainable. In addition, our method also poses as a useful tool to bound the maximum quantum violation of steering inequalities in the Bob-with-Input scenario. 

In the following sections we will define the hierarchy of sets, show that quantum assemblages belong to all of them, and prove that the hierarchy converges.

\subsection{Definition of the hierarchy}\label{se:BwIhier}

Here we define the hierarchy of sets of assemblages in the Bob-with-Input scenario that we motivated previously. 
\blk The definition of each set of assemblages will be rather mathematical and rely on the specification of (mathematical rather than physical) constraints that the assemblage elements must satisfy, where these constraints will be specified by demanding the existence of peculiar matrices with particular properties. \blk The main ingredients in the definition of \blk (the set of assemblages for) \blk each level are:
\begin{compactitem}
\item the definition of a set of words,
\item the specification of equivalence relations between those words, 
\item the definition of a moment matrix.
\end{compactitem}

\blk
Broadly speaking, the \textit{set of words} will help us index the rows and columns of the so-called \textit{moment matrix} (which we will define later on), and the \textit{equivalence relations between words} will allow us to specify which entries of the moment matrix shoulc be equal to one another. 
\blk 

The alphabet $\Upsilon$ that we will use to define the words is composed of the following elements (letters\footnote{{Strictly speaking, $\emptyset$ is not a letter but rather an \textit{empty string/word}, although for simplicity in the narrative we will not always insist on this jargon.}}): 
\begin{align}
\Upsilon := \{\emptyset\} \cup \{ a|x \}_{{ x \in \X\,, a \in \A \setminus \{0\}}} \cup \{ y \}_{y \in \Y} \,. 
\end{align}

A \textit{word} then is a concatenation of elements drawn from $\Upsilon$. Examples of words are $1|x \, \emptyset \, y$, $y \, \emptyset \, y' \, 1|x$, $a|x \, y \, 1|x' $, and $a|xy$. Hereon we will denote words by the symbols $\mathbf{v}, \mathbf{w}, \mathbf{u}$; $\mathbf{v}\mathbf{w}$ will denote the concatenation of the two words (which forms a new word), and $\mathbf{v}^\dagger$ will denote the word defined by the letters of $\mathbf{v}$ written in reverse order. Finally, we denote by $\cS^*$ the set of all words of arbitrary length with letters drawn from $\Upsilon$.

To determine equivalence relations between words, denoted by the symbol $\equiv$, we first need to establish the symmetry operations on the elements of the alphabet:
\begin{compactitem} 
\item $\mathbf{v}\mathbf{w} \equiv \mathbf{v} \emptyset \mathbf{w} $ for all $\mathbf{u},\mathbf{w} \in \cS^*$,
\item $\mathbf{v}\mathbf{v} \equiv \mathbf{v}$ for all $\mathbf{v} \in \Upsilon$,
\item $a|xy \equiv y a|x$ for all $y,a|x \in \Upsilon$.
\end{compactitem} 
We say then that $\mathbf{v} \equiv \mathbf{w}$ if there exists a sequence of symmetry operations that can take $\mathbf{v}$ to $\mathbf{w}$ (and vice-versa). 
For example, $y \, \emptyset \, a|x \equiv a|x \, y$ and $y \, \emptyset \, y' \, 1|x \equiv 1|x \, y \, y'$, but $a|x \, y \, y' \not\equiv y' \, a|x \, y$ when $y \neq y'$. 

Moreover, we say that a word $\mathbf{v}$ is \textit{null} if, after applying a sequence of symmetry operations, one may find a letter $a|x$ followed by a letter $a'|x$ for some $x \in \X$ and $a \neq a' \in \A \setminus \{0\}$. For example, the word $1|x \, y \, 2|x$ is null, whereas the word $1|x \, y \, 2|x'$ with $x \neq x'$ is not.

The next ingredient we need to set up before we can define the hierarchy is the following:
\begin{defn} \textbf{The set of words} $\boldsymbol{\cS_n}$.--\\
A word $\mathbf{v} \in \cS^*$ belongs to $\cS_n$ if it may arise from concatenating at most $n$ letters of the alphabet $\Upsilon$.
\end{defn}
\noindent For example, $a|x \in \cS_3$, while $a|x \, y \, a|x' \not\in \cS_2$ when $x \neq x'$. By $\lvert \cS_n \lvert$ we will denote the cardinality of $\cS_n$. \blk The idea is that, for each $n$, the words in $\cS_n$ will index the rows and columns for a different moment matrix, as we see next. \blk

Now we can define the concept of a moment matrix within the context of our hierarchy:
\begin{defn}\label{def:MMn}\textbf{Moment matrix of order $\mathbf{n}$: $\boldsymbol{\Gamma_n}$}.--\\
Let $\Gamma_n$ be a matrix of size $\lvert \cS_n \lvert \times \lvert \cS_n \lvert$, whose entries are $d \times d$ complex matrices, and whose rows and columns are indexed by the words in $\cS_n$. This matrix is a moment matrix of order $n$ iff it satisfies the following properties: 
\begin{align}
\Gamma_n \geq 0 \,, \label{eq:2}\\
\Gamma_n(\emptyset,\emptyset) &= \id_d \,,\label{eq:3}\\
\Gamma_n(\mathbf{v},\mathbf{w}) &= \Gamma_n(\mathbf{v'},\mathbf{w'}) \quad \text{if} \quad \mathbf{v}^\dagger\mathbf{w} \equiv \mathbf{v'}^\dagger\mathbf{w'} \,,\label{eq:4}\\
\Gamma_n(\mathbf{v},\mathbf{w}) &= \mathbf{0}_d  \quad \text{if} \quad \mathbf{v}^\dagger\mathbf{w}  \, \text{is null} \,,\label{eq:5}\\
\Gamma_n(a|x,a'|x') &\propto \id_d \quad \text{for all} \quad a|x,a'|x' \in \Upsilon \,, \label{eq:6}
\end{align}
where $\mathbf{0}_d$ is the $d \times d$ matrix whose all entries are 0. 
\end{defn}

 Let us now bring in the connection with assemblages in the Bob-with-Input scenario. A moment matrix $\Gamma_n$ is a \textit{certificate of order $n$} for the assemblage $\{\sigma_{a|xy}\}_{a \in \A, \, x \in \X, \, y \in \Y}$ when the following conditions are met: 
\begin{defn}\label{def:cert} \textbf{Certificate of order $n$ for an assemblage $\As$}.--\\
Let $\As = \{\sigma_{a|xy}\}_{a \in \A, \, x \in \X, \, y \in \Y}$ be a Non-signalling assemblage in the Bob-with-Input EPR scenario. 
A matrix $\Gamma_n$ is a certificate of order $n$ for $\As$ iff $\Gamma_n$ is a moment matrix of order $n$ and the following are satisfied: 
\begin{align}
\Gamma_n(\emptyset,a|x) &= \Tr{\sigma_{a|x1}} \, \id_d \quad \forall \, a|x \in \Upsilon\,, \label{eq:7}\\
\Gamma_n(\emptyset,y) &= \frac{1}{d} \, \sigma^\mathrm{T}_{y} \quad \forall \, y \in \Upsilon \,,\label{eq:8}\\
\Gamma_n(\emptyset, a|x \, y) &= \frac{1}{d} \, \sigma^\mathrm{T}_{a|xy} \quad \forall a|x,y \in \Upsilon\,,\label{eq:9}
\end{align}
where $\sigma_y = \sum_{a\in\A} \sigma_{a|xy}$ is the marginal state on Bob's side, and $(\cdot)^\mathrm{T}$ denotes transposition.
\end{defn}

\blk
This definition for a Certificate of order $n$ in a Bob-with-Input EPR scenario is inspired by those found in the literature for other phenomena (see, e.g., Refs.~\cite{acin2015combinatorial,fritz2013probabilistic} for Kochen-Speker contextuality). The idea is that the first row of the matrix $\Gamma_n$ contains all the information available from the assemblage elements $\{\sigma_{a|xy}\}$, and moreover, that the information be displayed in a way that is readily accessible in a systematic way. For example, Eq.~\eqref{eq:7} stipulates which matrix elements provide the information on the probabilities $p(a|x)$ alone, where the operator $\id_d$ is needed to satisfy the fact that the entries of $\Gamma_n$ are themselves $d$-dimensional matrices. Eq.~\eqref{eq:9} and Eq.~\eqref{eq:8} tell you how to relate the entries $a|xy$ and $y$ with the assemblage elements $\sigma_{a|xy}$ and their marginals $\sigma_y$ respectively: notice that the relationship is not direct but that the operators are transposed and re-normalised -- this is a necessary technicality to be able to easily construct a quantum model from the elements of $\Gamma_n$ later on. 
\blk

Now we can finally define the sets of assemblages that form our hierarchy. We will denote these sets by $\cQ^n$:
\begin{defn}\textbf{Set of assemblages in level $n$ of the hierarchy: $\cQ^n$}.--\\
An assemblage $\As$ in the Bob-with-Input EPR scenario belongs to the set $\cQ^n$ iff there exists a matrix $\Gamma_n$ that is a certificate of order $n$ for $\As$. 
\end{defn}

It is easy to see that $\As \in \cQ^n \, \Rightarrow \, \As \in \cQ^{n-1}$: this follows from the fact that one can take the certificate of order $n$ for $\As$, and construct from it a new matrix $\Gamma_{n-1}$ by keeping the rows/columns of $\Gamma_n$ which are labelled by the words of $\cS_{n-1}$. By definition, $\Gamma_{n-1}$ will be a moment matrix and satisfy the conditions to be a certificate of order $n-1$ for $\As$. It hence follows that $\cQ^n \subseteq \cQ^{n-1}$ for all $n>1$. 

Next we will show that quantum assemblages in the Bob-with-Input scenario belong to $\cQ^n$ for all $n$. 

\subsection{Quantum assemblages are $\cQ^n$ assemblages}\label{se:3}

In this section we will show that each level of the hierarchy is itself a relaxation of the set of quantum assemblages in the Bob-with-Input scenario. The main idea behind this proof is to, starting from a quantum assemblage, define a state and operators with which to construct a moment matrix $\Gamma_n$ that serves as a certificate of order $n$ for the original assemblage. 

Some of the construction will be presented in the diagrammatic language of Ref.~\cite{PT} for simplicity. Consider a quantum assemblage $\As_\mathrm{Q} = \{\sigma_{a|xy}\}_{a \in \A, \, x \in \X, \, y \in \Y}$, and let $\rho$, $\{\Pi_{a|x}\}_{a \in \A,\, x \in \X}$, and $\{\cE^y\}_{y \in \Y}$ be a state, measurements on Alice's system, and \blk completely-positive trace-reserving (CPTP) \blk  maps\footnote{\blk A linear map $\Lambda$ is \textit{positive} if it maps positive elements into positive elements. A linear map $\Lambda$ is in addition \textit{completely-positive} if the linear map $\Lambda \otimes \mathbb{I}$ is a positive map for any value of the dimension of the Hilbert space the identity operator $\mathbb{I}$ acts on. Finally, a linear operator $\Lambda$ is \textit{trace-preserving} if, for each element $v$ in its domain, the trace of the image of the element is equal to the trace of the element, i.e., $\Tr{\Lambda[v]}=\Tr{v}$. \blk} on Bob's system, respectively, that provide a realisation of the assemblage. That is,
\begin{align*}
\sigma_{a|xy} \quad = \quad %
\begin{tikzpicture}
	\begin{pgfonlayer}{nodelayer}
		\node [style=none] (0) at (-0.5, -1) {$\rho$};
		\node [style=none] (1) at (-4, -0.5) {};
		\node [style=none] (2) at (-0.5, -2) {};
		\node [style=none] (3) at (3, -0.5) {};
		\node [style=none] (4) at (-2.75, 1.5) {\tiny{$\Pi_{a|x}$}};
		\node [style=none] (5) at (-2.75, -0.5) {};
		\node [style=small box] (6) at (2, 1.75) {$\mathcal{E}^{y}$};
		\node [style=none] (7) at (2, -0.5) {};
		\node [style=none] (8) at (2, 3.25) {};
		\node [style=right label] (9) at (-2.75, 0.25) {$A$};
		\node [style=right label] (10) at (2, 0.25) {$B_{(i)}$};
		\node [style=right label] (11) at (2, 3) {$B$};
		\node [style=none] (12) at (-4, 1) {};
		\node [style=none] (13) at (-1.5, 1) {};
		\node [style=none] (14) at (-2.75, 2.75) {};
		\node [style=none] (15) at (-2.75, 1) {};
	\end{pgfonlayer}
	\begin{pgfonlayer}{edgelayer}
		\draw (1.center) to (2.center);
		\draw (2.center) to (3.center);
		\draw (3.center) to (1.center);
		\draw [qWire] (6) to (7.center);
		\draw [qWire] (8.center) to (6);
		\draw (14.center) to (12.center);
		\draw (12.center) to (13.center);
		\draw (13.center) to (14.center);
		\draw [qWire] (15.center) to (5.center);
	\end{pgfonlayer}
\end{tikzpicture}
}.
\end{align*}
Notice that, because of the Stinespring dilation theorem  \cite{vp,stine}, we can always take Alice's measurements to be projective and the state $\rho$ to be pure, without loss of generality. 

The next step is to realise that one may always find an equivalent realisation of the assemblage where Bob's transformations are unitary ones, by dilating the CPTP maps $\{\cE^y\}_{y \in \Y}$ and considering the corresponding auxiliary system as part of the original preparation procedure. That is, 
\begin{align}\label{eq:therhop}
\sigma_{a|xy} \quad = \quad %
\begin{tikzpicture}
	\begin{pgfonlayer}{nodelayer}
		\node [style=none] (0) at (-0.5, -1) {$\rho^\prime$};
		\node [style=none] (1) at (-4, -0.5) {};
		\node [style=none] (2) at (-0.5, -2) {};
		\node [style=none] (3) at (3, -0.5) {};
		\node [style=none] (4) at (-2.75, 1.5) {\tiny{$\Pi_{a|x}$}};
		\node [style=none] (5) at (-2.75, -0.5) {};
		\node [style=none] (7) at (2.25, -0.5) {};
		\node [style=none] (8) at (2.25, 3.25) {};
		\node [style=right label] (9) at (-2.75, 0.25) {$A$};
		\node [style=right label] (10) at (2.25, 0.25) {$B_{(i)}$};
		\node [style=right label] (11) at (2.25, 2.75) {$B$};
		\node [style=none] (12) at (-4, 1) {};
		\node [style=none] (13) at (-1.5, 1) {};
		\node [style=none] (14) at (-2.75, 2.75) {};
		\node [style=none] (15) at (-2.75, 1) {};
		\node [style=none] (16) at (0.5, 1) {};
		\node [style=none] (17) at (3, 1) {};
		\node [style=none] (18) at (3, 2) {};
		\node [style=none] (19) at (0.5, 2) {};
		\node [style=none] (20) at (1.75, 1.5) {$U_y$};
		\node [style=none] (21) at (2.25, 2) {};
		\node [style=none] (22) at (2.25, 1) {};
		\node [style=none] (23) at (1.25, -0.5) {};
		\node [style=left label] (24) at (1.25, 0.25) {$B_{aux}$};
		\node [style=none] (25) at (1.25, 1) {};
		\node [style=none] (26) at (1.25, 2.5) {};
		\node [style=none] (28) at (1.25, 2) {};
		\node [style=upground] (29) at (1.25, 2.75) {};
	\end{pgfonlayer}
	\begin{pgfonlayer}{edgelayer}
		\draw (1.center) to (2.center);
		\draw (2.center) to (3.center);
		\draw (3.center) to (1.center);
		\draw (14.center) to (12.center);
		\draw (12.center) to (13.center);
		\draw (13.center) to (14.center);
		\draw [qWire] (15.center) to (5.center);
		\draw (19.center) to (16.center);
		\draw (16.center) to (17.center);
		\draw (17.center) to (18.center);
		\draw (18.center) to (19.center);
		\draw [qWire] (21.center) to (8.center);
		\draw [qWire] (22.center) to (7.center);
		\draw [qWire] (25.center) to (23.center);
		\draw [qWire] (28.center) to (26.center);
	\end{pgfonlayer}
\end{tikzpicture}
},
\end{align}
where $\rho^\prime = \rho \otimes \rho_{aux}$ may always be taken to be a pure state, without loss of generality. 

Now, consider another system in Bob's lab, called $B'$, of the same dimension as $B$. Let $\{ \Phi_1, \Phi_2\}$ be the joint projective measurement on $BB'$, of two outcomes, defined as follows: 
\begin{align*}
\begin{cases}
\Phi_1 &= \ket{\phi^+}\bra{\phi^+} \,,\\
\Phi_2 &= \id - \Phi_1\,,
\end{cases}
\end{align*}
where $\ket{\phi^+} = \frac{1}{\sqrt{d}} \sum_{j=1:d} \ket{j}_B\otimes \ket{j}_{B'}$ is the LOCC-maximally-entangled\footnote{Where LOCC stands for Local Operations and Classical Communication.} state for $BB'$ \cite{anom,Cowpie}.

One may now define a family of dichotomic measurements $\{\Psi_{b|y}\}_{b \in \{1,2\},\, y \in \Y}$ on the composite system $B_{aux}\cdot B_{(i)}\cdot B'$ as follows: 
\begin{align*}
\begin{tikzpicture}
	\begin{pgfonlayer}{nodelayer}
		\node [style=none] (4) at (-2.5, 0.75) {\tiny{$\Psi_{b|y}$}};
		\node [style=none] (5) at (-3.5, -1.5) {};
		\node [style=left label] (9) at (-3.5, -0.75) {\tiny{$B_{aux}$}};
		\node [style=none] (12) at (-4, 0) {};
		\node [style=none] (13) at (-1, 0) {};
		\node [style=none] (14) at (-2.5, 2) {};
		\node [style=none] (15) at (-3.5, 0) {};
		\node [style=none] (16) at (-2.75, -1.5) {};
		\node [style=right label] (17) at (-2.75, -0.75) {\tiny{$B_{(i)}$}};
		\node [style=none] (18) at (-2.75, 0) {};
		\node [style=none] (19) at (-1.5, -1.5) {};
		\node [style=right label] (20) at (-1.5, -0.75) {\tiny{$B'$}};
		\node [style=none] (21) at (-1.5, 0) {};
	\end{pgfonlayer}
	\begin{pgfonlayer}{edgelayer}
		\draw (14.center) to (12.center);
		\draw (12.center) to (13.center);
		\draw (13.center) to (14.center);
		\draw [qWire] (15.center) to (5.center);
		\draw [qWire] (18.center) to (16.center);
		\draw [qWire] (21.center) to (19.center);
	\end{pgfonlayer}
\end{tikzpicture}
} := %
\begin{tikzpicture}
	\begin{pgfonlayer}{nodelayer}
		\node [style=none] (7) at (2.25, -2.25) {};
		\node [style=none] (8) at (2.25, 1.5) {};
		\node [style=right label] (10) at (2.25, -1.5) {$B_{(i)}$};
		\node [style=right label] (11) at (2.25, 1) {$B$};
		\node [style=none] (16) at (0.5, -0.75) {};
		\node [style=none] (17) at (3, -0.75) {};
		\node [style=none] (18) at (3, 0.25) {};
		\node [style=none] (19) at (0.5, 0.25) {};
		\node [style=none] (20) at (1.75, -0.25) {$U_y$};
		\node [style=none] (21) at (2.25, 0.25) {};
		\node [style=none] (22) at (2.25, -0.75) {};
		\node [style=none] (23) at (1.25, -2.25) {};
		\node [style=left label] (24) at (1.25, -1.5) {$B_{aux}$};
		\node [style=none] (25) at (1.25, -0.75) {};
		\node [style=none] (26) at (1.25, 0.75) {};
		\node [style=none] (28) at (1.25, 0.25) {};
		\node [style=upground] (29) at (1.25, 1) {};
		\node [style=none] (30) at (4.25, -2.25) {};
		\node [style=right label] (31) at (4.25, -1.5) {$B'$};
		\node [style=none] (32) at (4.25, 1.5) {};
		\node [style=none] (33) at (1.75, 1.5) {};
		\node [style=none] (34) at (4.75, 1.5) {};
		\node [style=none] (35) at (3.25, 2.75) {};
		\node [style=none] (36) at (3.25, 2) {$\Phi_b$};
	\end{pgfonlayer}
	\begin{pgfonlayer}{edgelayer}
		\draw (19.center) to (16.center);
		\draw (16.center) to (17.center);
		\draw (17.center) to (18.center);
		\draw (18.center) to (19.center);
		\draw [qWire] (21.center) to (8.center);
		\draw [qWire] (22.center) to (7.center);
		\draw [qWire] (25.center) to (23.center);
		\draw [qWire] (28.center) to (26.center);
		\draw [qWire] (32.center) to (30.center);
		\draw (35.center) to (33.center);
		\draw (33.center) to (34.center);
		\draw (34.center) to (35.center);
	\end{pgfonlayer}
\end{tikzpicture}
}\,.
\end{align*}
\blk The idea is that these dichotomic measurements will allow us to construct the matrix elements of $\Gamma_n$ that correspond to rows and columns whose labels contain the letter $y$ (see Eq.~\eqref{eq:theOy} below). For this to be the case, it is crucial to notice \blk that the measurement $\{\Psi_{1|y}, \Psi_{2|y} \}$, for each $y \in \Y$, is a projective measurement. Indeed, in conventional notation the corresponding operator may be written as:
\begin{align}\label{eq:bobmeas}
\Psi_{b|y} = \left(U^\dagger_y \otimes \id_{B'}\right) \left( \id_{B_{aux}} \otimes  \Phi_b \right) \left(U_y \otimes \id_{B'}\right)\,. 
\end{align}
Therefore, the following are satisfied: \blk
\begin{compactitem}
\item[(i)] Normalisation: $\{\Psi_{1|y}, \Psi_{2|y} \}$ is a complete measurement for each $y \in \Y$, since: 
\begin{align*}
\sum_{b=1:2} \Psi_{b|y} &=  \left(U^\dagger_y \otimes \id_{B'}\right) \left( \id_{B_{aux}} \otimes  \id_{BB'} \right) \left(U_y \otimes \id_{B'}\right) = \id_{B_{aux}B_{(i)}B'}\,.
\end{align*}
\item[(ii)] Hermiticity: $\Psi_{b|y}$ is effectively a unitary applied to a state, and since unitaries preserve hermiticity then $\Psi_{b|y}$ is Hermitian,  
\item[(iii)] Idempotency:  $\Psi_{b|y}^\dagger\Psi_{b|y} = \Psi_{b|y}$ since: 
\begin{align*}
\Psi_{b|y}^\dagger\Psi_{b|y} &= \left(U^\dagger_y \otimes \id_{B'}\right) \left( \id_{B_{aux}} \otimes  \Phi_b \right) \left(U_y \otimes \id_{B'}\right) \left(U^\dagger_y \otimes \id_{B'}\right) \left( \id_{B_{aux}} \otimes  \Phi_b \right) \left(U_y \otimes \id_{B'}\right) \\
&= \left(U^\dagger_y \otimes \id_{B'}\right) \left( \id_{B_{aux}} \otimes  \Phi_b \right) \left( \id_{B_{aux}} \otimes  \Phi_b \right) \left(U_y \otimes \id_{B'}\right) \\
&= \left(U^\dagger_y \otimes \id_{B'}\right) \left( \id_{B_{aux}} \otimes  \Phi_b \right) \left(U_y \otimes \id_{B'}\right) \\
&= \Psi_{b|y}\,.
\end{align*}
\end{compactitem}
Now we have all the ingredients to prove the main theorem of this section:
\begin{thm}\label{thm:quantumInQn}\textbf{Quantum assemblages are $\cQ^n$ assemblages $\forall \, n \in \mathbb{N}$.}\\
Let $\As_\mathrm{Q} = \{\sigma_{a|xy}\}_{a \in \A, \, x \in \X, \, y \in \Y}$ be a quantum assemblage in the Bob-with-Input EPR scenario. Then $\As_\mathrm{Q} \in \cQ^n$ $\forall \, n \in \mathbb{N}$.
\end{thm}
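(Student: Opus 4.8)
The plan is to build, directly from the unitary-dilation realisation already constructed above, an explicit moment matrix $\Gamma_n$ and to verify that it meets both Definition~\ref{def:MMn} (eqs.~\eqref{eq:2}--\eqref{eq:6}) and Definition~\ref{def:cert} (eqs.~\eqref{eq:7}--\eqref{eq:9}). The governing idea is to assign an operator to each letter of $\Upsilon$---namely $\emptyset \mapsto \id$, $a|x \mapsto \Pi_{a|x}$ (acting on Alice's system), and $y \mapsto \Psi_{1|y}$ (acting on $B_{aux}B_{(i)}B'$)---and to extend this multiplicatively to words by $O_{\mathbf{w}} = O_{\ell_1}\cdots O_{\ell_k}$ for $\mathbf{w} = \ell_1\cdots\ell_k$. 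One then defines the entries as a Gram-type expression in which the ancillary system $B'$ is left open, so that each entry is a $d\times d$ matrix:
\[
\Gamma_n(\mathbf{v},\mathbf{w}) := \left(\bra{\psi}\otimes\id_{B'}\right) O_{\mathbf{v}}^\dagger O_{\mathbf{w}} \left(\ket{\psi}\otimes\id_{B'}\right),
\]
where $\ket{\psi}$ is the pure state $\rho'$ of the dilated realisation. Since each letter operator is Hermitian, one has $O_{\mathbf{v}}^\dagger = O_{\mathbf{v}^\dagger}$ and hence $O_{\mathbf{v}}^\dagger O_{\mathbf{w}} = O_{\mathbf{v}^\dagger\mathbf{w}}$, so each entry depends only on the single concatenated word $\mathbf{v}^\dagger\mathbf{w}$.

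First I would check the defining properties of a moment matrix. Positivity \eqref{eq:2} is immediate because the display above is the Gram matrix of the family $\{O_{\mathbf{w}}(\ket{\psi}\otimes\id_{B'})\}_{\mathbf{w}\in\cS_n}$, viewed as maps $B'\to\cH$; and \eqref{eq:3} follows from normalisation of $\ket{\psi}$. The well-definedness property \eqref{eq:4} reduces to showing the operator assignment respects the three symmetry operations: $O_\emptyset=\id$ handles insertion of $\emptyset$; the idempotencies $\Pi_{a|x}^2=\Pi_{a|x}$ and $\Psi_{1|y}^2=\Psi_{1|y}$ (the latter established above) handle $\mathbf{v}\mathbf{v}\equiv\mathbf{v}$; and the commutation $\Pi_{a|x}\Psi_{1|y}=\Psi_{1|y}\Pi_{a|x}$, valid since the two factors act on disjoint systems, handles $a|xy\equiv y\,a|x$. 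The nullity property \eqref{eq:5} follows because a null word contains (after symmetry operations) a factor $\Pi_{a|x}\Pi_{a'|x}=\mathbf{0}$ from orthogonality of the projective measurement $\{\Pi_{a|x}\}_a$; and \eqref{eq:6} holds because $\Pi_{a|x}\Pi_{a'|x'}$ acts on Alice's system alone, so its expectation in $\ket{\psi}$ is a scalar multiple of $\id_{B'}$.

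Next I would verify the certificate conditions. Condition \eqref{eq:7} is direct: $\Pi_{a|x}$ acts only on Alice's side, so $\Gamma_n(\emptyset,a|x)=\bra{\psi}\Pi_{a|x}\ket{\psi}\,\id_d$, and the scalar equals $\Tr{\sigma_{a|x1}}$ by the no-signalling marginal. The heart of the argument is conditions \eqref{eq:8} and \eqref{eq:9}. Here I would expand $\Psi_{1|y}$ using \eqref{eq:bobmeas}, absorb the unitary into the state by setting $\ket{\psi'_y}=(U_y\otimes\id_{B'})\ket{\psi}$, and then apply the maximally-entangled-state (transpose) identity $\bra{\phi^+}(M\otimes\id)\ket{\phi^+}=\tfrac1d\Tr{M}$, more precisely computing in components the open-$B'$ contraction of $\proj{\phi^+}_{B_{(i)}B'}$ against $\ket{\psi'_y}$. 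This yields exactly $\tfrac1d\,\sigma_y^{\mathrm{T}}$ for \eqref{eq:8}; inserting the extra commuting projector $\Pi_{a|x}$, which carves out the $a$-outcome component and turns the reduced state $\sigma_y$ into $\sigma_{a|xy}$, then gives \eqref{eq:9}.

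The main obstacle I anticipate is the bookkeeping in this transpose-trick step: one must track carefully which of the systems $A$, $B_{aux}$, $B_{(i)}$, $B'$ each factor acts on, and confirm that leaving $B'$ open produces precisely the transpose $\sigma^{\mathrm{T}}$ with the correct normalisation $1/d$, rather than $\sigma$ itself or a mistraced expression. Everything else---positivity together with the equivalence, nullity and proportionality relations---should then follow routinely from the idempotency, orthogonality and commutation properties already established for $\{\Pi_{a|x}\}$ and $\{\Psi_{b|y}\}$.
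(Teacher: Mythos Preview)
Your proposal is correct and matches the paper's own proof essentially line for line: the same operator assignment $\emptyset\mapsto\id$, $a|x\mapsto\Pi_{a|x}$, $y\mapsto\Psi_{1|y}$, the same Gram-type definition of $\Gamma_n$ (the paper writes it as a partial trace $\tr{\Op_{\mathbf{v}}^\dagger\Op_{\mathbf{w}}\,(\ket{\psi}\bra{\psi}\otimes\id_{B'})}{AB_{aux}B_{(i)}}$, which is exactly your open-$B'$ expectation), and the same verification of each condition via idempotency, commutation, orthogonality, and the maximally-entangled-state transpose identity. The bookkeeping you flag as the main obstacle is precisely what the paper works through explicitly in its check of Eq.~\eqref{eq:9}.
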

\begin{proof}
The proof strategy is to construct a moment matrix $\Gamma_n$ that serves as a certificate of order $n$ for the assemblage $\As_\mathrm{Q}$.
For this, let us first define a set of operators $\{\Op_\mathbf{v}\}_{\mathbf{v} \in \cS_n}$, labelled by the words of the set $\cS_n$, which act on the Hilbert space $\cH=\cH_A \otimes \cH_{B_{aux}} \otimes \cH_{B_{(i)}}\otimes \cH_{B'}$. For the case of single-letter words (i.e., $\mathbf{v} \in \Upsilon$), the operators are defined as:
\begin{align}
\Op_\emptyset &:= \id_{AB_{aux}B_{(i)}B'} \,,\\
\Op_{a|x} &:= \Pi_{a|x} \otimes \id_{B_{aux}B_{(i)}B'} \quad \forall \, a|x \in \Upsilon \,,\\
\Op_{y} &:= \id_A \otimes \Psi_{1|y} \quad \forall \, y \in \Upsilon\,. \label{eq:theOy}
\end{align}
For the case of works with multiple letters, i.e., $\mathbf{v} = \mathbf{v}_1 \ldots \mathbf{v}_k$ with $1 < k \leq n$ and $\mathbf{v}_j \in \Upsilon \quad \forall 1 \leq j \leq k$, the operators are defined as:
\begin{align}
\Op_{\mathbf{v}} := \Op_{\mathbf{v}_1} \ldots \Op_{\mathbf{v}_k} \,.
\end{align}
Hence, 
\begin{align}
\Op_{\mathbf{v}} \Op_{\mathbf{w}} = \Op_{\mathbf{vw}}.
\end{align}
Notice also that:
\begin{align}
\Op_\emptyset \Op_{\mathbf{v}} &= \Op_{\mathbf{v}} = \Op_{\mathbf{v}} \Op_{\emptyset} \quad \forall \, \mathbf{v} \in \cS_{n-1} \,,\\
\Op_{a|x} \Op_y &= \Op_y \Op_{a|x} \quad \forall \, a|x , y \in \Upsilon \,\\
\Op_{\mathbf{v}} \Op_{\mathbf{v}} &= \Op_{\mathbf{v}} \quad \forall \, \mathbf{v} \in \Upsilon\,,
\end{align} 
therefore 
\begin{align}
\mathbf{v} \equiv \mathbf{w} \, \Rightarrow \, \Op_{\mathbf{v}} = \Op_{\mathbf{w}}.
\end{align}
In addition, $\Op_{a|x} \Op_{a'|x} = \mathbf{0}_{\cH}$ (i.e., the zero matrix in the Hilbert space $\cH$) whenever $a\neq a'$ for all $x \in \X$. Therefore
\begin{align}
\Op_{\mathbf{v}} =  \mathbf{0}_{\cH} \quad \text{when} \,\, \mathbf{v} \,\, \text{is null} \,.
\end{align}
Finally, since the operators for single-lettered words are Hermitian, it follows that:
\begin{align}
\Op_\mathbf{v}^\dagger = \Op_{\mathbf{v}^\dagger}\,.
\end{align}

The last object that we need to define to set up the ingredients for the proof, is a state $\tilde{\rho}$ for the joint system $A \cdot B_{aux} \cdot B_{(i)} \cdot B'$. 
Following the notation of this section, let $\ket{\psi}\bra{\psi} := \rho'$ be the pure quantum state that realises the assemblage $\As_{\mathrm{Q}}$ as per Eq.~\eqref{eq:therhop}. Then, define  the state $\tilde{\rho}$ as
\begin{align}
\tilde{\rho} := \ket{\psi}\bra{\psi} \otimes \id_{B'}\,.
\end{align}
With these, define a square matrix $\Gamma_n$ of dimension $|\cS_n|$, whose rows and columns are indexed by the words in $\cS_n$, as:
\begin{align}
\Gamma_n(\mathbf{v},\mathbf{w}) := \tr{\Op_{\mathbf{v}}^\dagger \, \Op_{\mathbf{w}} \, \tilde{\rho}}{AB_{aux}B_{(i)}}\,.
\end{align}
Notice that the elements of $\Gamma_n$ are $d \times d$ complex matrices, where $d$ is the dimension of $\cH_{B'}$. Next we will show that $\Gamma_n$ is a certificate of order $n$ for the assemblage $\As_\mathrm{Q}$.

Let us first begin by showing that $\Gamma_n$ is indeed a moment matrix of order $n$. For this, we need to show that its element satisfy the five sets of conditions listed in Definition \ref{def:MMn} via Eqs.~\eqref{eq:2} to \eqref{eq:6}:
\begin{compactitem}
\item The $(i,j)$ entry of the element $\Gamma_n(\mathbf{v},\mathbf{w})$ is 
\begin{align*}
\Gamma_n^{i,j}(\mathbf{v},\mathbf{w}) &= \bra{i} \tr{\Op_{\mathbf{v}}^\dagger \, \Op_{\mathbf{w}} \, \tilde{\rho}}{AB_{aux}B_{(i)}} \ket{j} \\
&=\bra{i} \tr{\Op_{\mathbf{v}}^\dagger \, \Op_{\mathbf{w}} \, \ket{\psi} \bra{\psi} \otimes \id_d^{B'} }{AB_{aux}B_{(i)}} \ket{j} \\
&= \bra{i} \bra{\psi} \Op_{\mathbf{v}}^\dagger \, \Op_{\mathbf{w}} \ket{\psi} \ket{j}\,.
\end{align*}
Now define the vectors $\ket{\psi_{i,\mathbf{v}}} = \Op_{\mathbf{v}} \ket{\psi} \ket{i}$. Hence, $\Gamma_n^{i,j}(\mathbf{v},\mathbf{w}) = \langle \psi_{i,\mathbf{v}} \lvert \psi_{i,\mathbf{w}} \rangle$. This shows that $\Gamma$ is a Gramian matrix, and therefore positive semidefinite. Hence, Eq.~\eqref{eq:2} is satisfied.
\item $\Gamma_n(\emptyset,\emptyset) = \tr{\id_\cH \, \id_\cH \, \tilde{\rho}}{AB_{aux}B_{(i)}} =  \id_{B'}$, hence Eq.~\eqref{eq:3} is satisfied.
\item On the one hand, $\Gamma_n(\mathbf{v},\mathbf{w}) = \tr{\Op_{\mathbf{v}}^\dagger \, \Op_{\mathbf{w}} \, \tilde{\rho}}{AB_{aux}B_{(i)}} = \tr{\Op_{\mathbf{v}^\dagger\mathbf{w}} \, \tilde{\rho}}{AB_{aux}B_{(i)}}$. \\
On the other hand, $\Gamma_n(\mathbf{v}',\mathbf{w}') = \tr{\Op_{\mathbf{v}'}^\dagger \, \Op_{\mathbf{w}'} \, \tilde{\rho}}{AB_{aux}B_{(i)}} = \tr{\Op_{\mathbf{v}'^\dagger\mathbf{w}'} \, \tilde{\rho}}{AB_{aux}B_{(i)}}$. \\ If $\mathbf{v}^\dagger\mathbf{w} \equiv \mathbf{v}'^\dagger\mathbf{w}'$, then $\Op_{\mathbf{v}^\dagger\mathbf{w}} = \Op_{\mathbf{v}'^\dagger\mathbf{w}'}$, and hence $\Gamma_n(\mathbf{v},\mathbf{w}) = \Gamma_n(\mathbf{v}',\mathbf{w}')$. It follows that Eq.~\eqref{eq:4} is satisfied.
\item If $\mathbf{v}^\dagger\mathbf{w}$ is null, then $\Op_{\mathbf{v}^\dagger\mathbf{w}} =  \mathbf{0}_{\cH} $. Hence, $\Gamma_n(\mathbf{v},\mathbf{w}) = \tr{\Op_{\mathbf{v}^\dagger\mathbf{w}} \, \tilde{\rho}}{AB_{aux}B_{(i)}} = \mathbf{0}_d$, and so Eq.~\eqref{eq:5} is satisfied. 
\item For any $a|x, a'|x' \in \Upsilon$, 
\begin{align*}
\Gamma_{a|x, a'|x'} &= \tr{ (\Pi_{a|x} \Pi_{a'|x'} \otimes \id_{B_{aux}B_{(i)}B'}  ) \, \tilde{\rho}}{AB_{aux}B_{(i)}} \\
&= \tr{ \Pi_{a|x} \Pi_{a'|x'} \otimes \id_{B_{aux}B_{(i)}}   \, \ket{\psi}\bra{\psi} }{} \, \id_{B'} \,.
\end{align*}
Hence, $\Gamma_{a|x, a'|x'} \propto \id_d$ and Eq.~\eqref{eq:6} follows. 
\end{compactitem}
Now that we've shown that $\Gamma_n$ is indeed a moment matrix of order $n$, we need to show that it is moreover a certificate of order $n$ for the assemblage $\As_{\mathrm{Q}}$. For this, we need to now show that $\Gamma_n$ satisfies the conditions given by Eqs.~\eqref{eq:7} to \eqref{eq:9} in Definition \ref{def:cert}. 
\begin{compactitem}
\item For any $a|x \in \Upsilon$,
\begin{align*}
\Gamma_n(\emptyset,a|x) &= \tr{\id_\cH \, \Pi_{a|x} \otimes \id_{B_{aux}B_{(i)}B'} \, \tilde{\rho}}{AB_{aux}B_{(i)}} \\
&= \tr{\Pi_{a|x} \otimes \id_{B_{aux}B_{(i)}B'} \, \ket{\psi}\bra{\psi} \otimes \id_{B'}}{AB_{aux}B_{(i)}} \\
&= \tr{\Pi_{a|x} \otimes \id_{B_{aux}B_{(i)}} \, \ket{\psi}\bra{\psi}}{} \,  \id_{B'} \\
&= \tr{\Pi_{a|x} \otimes \id_{B_{aux}B_{(i)}} \, (\id_A \otimes U_y) \ket{\psi}\bra{\psi} (\id_A \otimes U_y^\dagger) }{} \,  \id_{B'}\,,
\end{align*}
since unitary operations are trace preserving. Hence,
\begin{align*}
\Gamma_n(\emptyset,a|x) = \tr{\sigma_{a|xy}}{B} \,  \id_{B'}  = \tr{\sigma_{a|x1}}{} \,  \id_{d}  \,,
\end{align*}
where the last equality follows from the fact that the assemblage $\As_\mathrm{Q}$ is no-signalling from Bob to Alice, by definition. Hence, Eq.~\eqref{eq:7} is satisfied. 
\item For any $a|x, y \in \Upsilon$,
\begin{align*}
\Gamma_n(\emptyset,a|x y) &= \tr{\id_\cH \, \Pi_{a|x} \otimes \Psi_{1|y} \, \tilde{\rho}}{AB_{aux}B_{(i)}} \\
&= \tr{\Pi_{a|x} \otimes \left( \left(U^\dagger_y \otimes \id_{B'}\right) \left( \id_{B_{aux}} \otimes  \Phi_1 \right) \left(U_y \otimes \id_{B'}\right) \right) \, \tilde{\rho}}{AB_{aux}B_{(i)}} \\
&= \tr{\left( \Pi_{a|x}^\dagger \otimes  U^\dagger_y \otimes \id_{B'} \right)   \left( \id_{AB_{aux}} \otimes  \Phi_1 \right) \left( \Pi_{a|x} \otimes  U_y \otimes \id_{B'} \right)    \, \tilde{\rho}    }{AB_{aux}B_{(i)}} \\
&= \tr{   \left( \id_{AB_{aux}} \otimes  \Phi_1 \right)   \left( \Pi_{a|x} \otimes  U_y \otimes \id_{B'} \right)    \, \tilde{\rho}  \,  \left( \Pi_{a|x}^\dagger \otimes  U^\dagger_y \otimes \id_{B'} \right)   }{AB_{aux}B_{(i)}} \\
&= \tr{   \left( \id_{AB_{aux}} \otimes  \Phi_1 \right)   \left( \left( \Pi_{a|x} \otimes  U_y \right)   \ket{\psi}\bra{\psi}  \left( \Pi_{a|x}^\dagger \otimes  U^\dagger_y  \right) \otimes \id_{B'} \right)   }{AB_{aux}B_{(i)}} \\
&= \tr{   \Phi_1   \left( \sigma_{a|xy} \otimes \id_{B'} \right)   }{B} \\
&= \frac{1}{d} \, \sigma_{a|xy}^{\mathrm{T}}\,.
\end{align*}
Hence, Eq.~\eqref{eq:9} follows. 
\item This last condition follows similarly from the previous one plus the fact that Alice's measurements $\{\Pi_{a|x}\}$ are complete:
\begin{align*}
\Gamma_n(\emptyset,y) &= \tr{\id_\cH \, \id_A \otimes \Psi_{1|y} \, \tilde{\rho}}{AB_{aux}B_{(i)}} \\
&= \sum_{a \in \A}  \tr{\id_\cH \, \Pi_{a|x} \otimes \Psi_{1|y} \, \tilde{\rho}}{AB_{aux}B_{(i)}} \\
&= \sum_{a \in \A} \frac{1}{d} \, \sigma_{a|xy}^{\mathrm{T}}\\
&= \frac{1}{d} \, \sigma_{y}^{\mathrm{T}}\,,
\end{align*}
It hence follows that the matrix $\Gamma_n$ satisfies the conditions of Eq.~\eqref{eq:8}. 
\end{compactitem}

We have hence shown that the matrix $\Gamma_n$ that we constructed is a certificate of order $n$ for the assemblage $\As_{\mathrm{Q}}$, which implies that $\As_{\mathrm{Q}} \in \cQ^n$, for all $n \in \mathbb{N}$.
\end{proof}

\subsection{Convergence}\label{se:convergence}

Let $\cQ$ denote the set of all quantumly-realisable assemblage in a Bob-with-Input scenario. In the previous section we showed that $\cQ \subseteq \cQ^n \,\, \forall \, n \in \mathbb{N}$. Here, we will show that the hierarchy $\{\cQ^n\}_{ n \in \mathbb{N}}$ converges to a set of assemblages $\cQ_C$ that therefore contains the quantum set. We leave it an open question to check under which conditions $\cQ_C \equiv \cQ$. \blk The proof is rather technical, and leverages algebraic techniques similar to those used in related hierarchies of semidefinite tests \cite{NPA,Navascu_s_2008,acin2015combinatorial}. \blk

We start with some preliminary algebraic constructions and notations. Let $(\mathcal{H}_B,\braket{.|.}_B)$ be the $d$-dimensional Hilbert space for Bob. First, we define a sequence of Hilbert spaces $(\mathcal{H}_m)_{m\in\mathbb{N}}$ such that for each $m$ we can regard a moment matrix of order $m$ as a bounded linear operator on $\mathcal{H}_m$. Let
\begin{equation}
    \mathcal{M}(\mathcal{S},\mathcal{H}_B)
\end{equation}
be the vector space of all maps from the set of words $\mathcal{S}$ to Bob's Hilbert space $\mathcal{H}_B$ with addition and scalar multiplication defined pointwise, i.e.\ adding two such maps means adding the values for each argument. For a word $v\in\mathcal{S}$ and a vector $x\in\mathcal{H}_B$ we define the element
\begin{equation}
    (x\cdot[v])\in\mathcal{M}(\mathcal{S},\mathcal{H}_B)
\end{equation}
as follows: For $w\in\mathcal{S}$ let
\begin{equation}
    (x\cdot[v])(w):=\begin{cases}
        \begin{array}{cl}
            x & :\text{ if $w=v$}\\
            0 & :\text{ if $w\neq v$}
        \end{array}
    \end{cases}
\end{equation}
Now we define $\mathcal{H}_m$ as
\begin{equation}
    \mathcal{H}_m:=\operatorname{span}\set{(x\cdot[v])| x\in\mathcal{H}_B, v\in\mathcal{S}_m}\subseteq \mathcal{M}(\mathcal{S},\mathcal{H}_B),
\end{equation}
i.e.\ $\mathcal{H}_m$ is the set of all maps from $\mathcal{S}$ to $\mathcal{H}_B$ that are $0$ outside of $\mathcal{S}_m$. The inner product on $\mathcal{H}_m$ is defined as follows: For $f,g\in\mathcal{H}_m$ let
\begin{equation}
    \braket{f|g}:=\sum_{v\in\mathcal{S}_m}\braket{f(v)|g(v)}_B.
\end{equation}
Note that the sum on the right is finite.

Now let $\Sigma$ be an assemblage such that $\Sigma\in\cQ^n$ for all $n$. Then for each $n$ there is a certificate $\Gamma_n$ of order $n$ for $\Sigma$. For $v,w\in\mathcal{S}\backslash\mathcal{S}_n$ let $\Gamma_n(v,w):=0$. Then we can regard each $\Gamma_n$ as a bounded linear operator $\Gamma_n^{(m)}\in\mathcal{B}(\mathcal{H}_m)$ as follows: For $f\in\mathcal{H}_m$ let
\begin{equation}
    \label{eq:gammaoperator}
    \Gamma_n^{(m)}(f):=\sum_{v,w\in\mathcal{S}_m} ((\Gamma_n(v,w)f(w))\cdot[v]).
\end{equation}
Note that we will use the same definition for limit operators $\Gamma_{\infty}^{(m)}$ later on.

\begin{lem}
    \label{lem:diagonal<=id}
    For all natural numbers $n$ and for all words $v\in\mathcal{S}$ we have
    \begin{equation}
        \Gamma_n(v,v)\leq\operatorname{id}_{\mathcal{H}_B}.
    \end{equation}
\end{lem}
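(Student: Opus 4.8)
The plan is to prove the bound by induction on the length of the word $v$, peeling off one letter at a time and exploiting the idempotency of the alphabet together with the positivity of $\Gamma_n$. First I would dispose of the trivial case: if $v\notin\mathcal{S}_n$ then $\Gamma_n(v,v)=0$ by the convention fixed above, so the inequality holds automatically. Hence I may assume $v\in\mathcal{S}_n$ and induct on the number of letters of $v$. The base case is $v=\emptyset$, where $\Gamma_n(\emptyset,\emptyset)=\id_d$ by Eq.~\eqref{eq:3}, so equality holds.

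For the inductive step I would write $v=\ell\,u$ with $\ell\in\Upsilon$ the first letter and $u$ the remaining word, which is strictly shorter and hence still in $\mathcal{S}_n$. The key observation is that, since $\ell^\dagger=\ell$ and $\ell\ell\equiv\ell$, one has
\begin{align*}
v^\dagger v = u^\dagger\,\ell\ell\,u \equiv u^\dagger\,\ell\,u = u^\dagger v, \qquad v^\dagger u = u^\dagger\,\ell\,u \equiv u^\dagger v .
\end{align*}
By the equivalence-invariance property Eq.~\eqref{eq:4}, this forces $\Gamma_n(v,v)=\Gamma_n(u,v)=\Gamma_n(v,u)$, so the $2\times2$ principal block of the positive semidefinite matrix $\Gamma_n$ (Eq.~\eqref{eq:2}) indexed by the words $u$ and $v$ collapses to the degenerate form
\begin{align*}
\begin{pmatrix} \Gamma_n(u,u) & \Gamma_n(v,v) \\ \Gamma_n(v,v) & \Gamma_n(v,v) \end{pmatrix} \geq 0 .
\end{align*}
Testing this block against the vector $\xi\oplus(-\xi)$ for arbitrary $\xi\in\mathcal{H}_B$ then yields $\braket{\xi|\Gamma_n(u,u)|\xi}-\braket{\xi|\Gamma_n(v,v)|\xi}\geq 0$, i.e.\ $\Gamma_n(v,v)\leq\Gamma_n(u,u)$. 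Combining this with the inductive hypothesis $\Gamma_n(u,u)\leq\operatorname{id}_{\mathcal{H}_B}$ closes the induction.

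The conceptual heart of the argument, and the step I expect to require the most care, is the reduction $v^\dagger v\equiv u^\dagger v$: it is precisely the idempotency relation $\ell\ell\equiv\ell$ that turns the naive Gram-type $2\times2$ block into the degenerate shape $\left(\begin{smallmatrix}A&B\\B&B\end{smallmatrix}\right)$, from which $A\geq B$ is immediate. Everything else, namely positivity of principal submatrices and the normalisation $\Gamma_n(\emptyset,\emptyset)=\id_d$, is routine. I would also note that the argument uses only the moment-matrix axioms Eqs.~\eqref{eq:2}, \eqref{eq:3} and \eqref{eq:4} together with the defining relations on $\Upsilon$, so the bound in fact holds for every moment matrix of order $n$, not only for certificates.
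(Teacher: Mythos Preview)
Your proof is correct and follows essentially the same route as the paper: induction on the length of $v$, peeling off one letter, using the idempotency relation to collapse the $2\times 2$ principal block to the shape $\left(\begin{smallmatrix}A&B\\B&B\end{smallmatrix}\right)$, and reading off $B\leq A$. Your version is in fact slightly more explicit (you dispose of the case $v\notin\mathcal{S}_n$ and exhibit the test vector $\xi\oplus(-\xi)$ rather than asserting the block equivalence), but the argument is the same.
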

\begin{proof}
    We prove the claim by induction on the length $l$ of $v$. For $l=0$ we have $v=\emptyset$ and $\Gamma(\emptyset,\emptyset)=\operatorname{id}_{\mathcal{H}_B}$ by the definition of a moment matrix. Now let $l>0$. Then there is a word $w$ of length $l-1$ and a letter $c\in\Upsilon$ such that
    \begin{equation}
        v=cw.
    \end{equation}
    Let
    \begin{equation}
        \begin{split}
            A &:= \Gamma_n(w,w)\\
            B &:= \Gamma_n(v,v).
        \end{split}
    \end{equation}
    Then we have
    \begin{equation}
        \label{eq:A<=id}
        A\leq\operatorname{id}_{\mathcal{H}_B}
    \end{equation}
    by the induction hypothesis and
    \begin{equation}
        \Gamma_n(w,v)=\Gamma_n(v,w)=\Gamma_n(v,v)=B
    \end{equation}
    because
    \begin{equation}
        v^{\dag}v=w^{\dag}ccw\equiv w^{\dag}cw=w^{\dag}v.
    \end{equation}
    It follows that
    \begin{equation}
        0\leq\begin{pmatrix}
            A & B\\
            B & B
        \end{pmatrix}
    \end{equation}
    since the matrix on the right is a contraction of the positive semidefinite matrix $\Gamma_n$. Now this is equivalent to
    \begin{equation}
        0\leq B\leq A\stackrel{\eqref{eq:A<=id}}{\leq} \operatorname{id}_{\mathcal{H}_B}
    \end{equation}
    and thus
    \begin{equation}
        \Gamma_n(v,v)=B\leq\operatorname{id}_{\mathcal{H}_B}
    \end{equation}
    as desired.
\end{proof}

\begin{lem}
    \label{lem:normbounded}
    For all natural numbers $n$ and for all words $v,w\in\mathcal{S}$ we have
    \begin{equation}
        \lVert \Gamma_n(v,w)\rVert\leq 1.
    \end{equation}
\end{lem}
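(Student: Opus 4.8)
The plan is to derive the off-diagonal bound directly from the diagonal bound of Lemma \ref{lem:diagonal<=id}, by restricting attention to the two indices $v$ and $w$ and exploiting the positive semidefiniteness of $\Gamma_n$. First I would record that, because $\Gamma_n\geq 0$, the block matrix $\Gamma_n$ is Hermitian, so that $\Gamma_n(w,v)=\Gamma_n(v,w)^{\dagger}$. Writing $A:=\Gamma_n(v,v)$, $B:=\Gamma_n(w,w)$ and $C:=\Gamma_n(v,w)$, the $2\times 2$ block
\[
M:=\begin{pmatrix} A & C\\ C^{\dagger} & B\end{pmatrix}
\]
is the compression of $\Gamma_n$ to the two coordinates labelled by $v$ and $w$, i.e.\ a principal submatrix, and hence $M\geq 0$. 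If $v=w$ the claim is immediate, since then $C=A$ and Lemma \ref{lem:diagonal<=id} together with $A\geq 0$ gives $\lVert C\rVert\leq 1$; so I assume $v\neq w$ from now on.

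Next I would invoke Lemma \ref{lem:diagonal<=id} to get $A\leq\operatorname{id}_{\mathcal H_B}$ and $B\leq\operatorname{id}_{\mathcal H_B}$; since $A,B\geq 0$ (they are diagonal blocks of the positive matrix $M$), this yields $\lVert A\rVert\leq 1$ and $\lVert B\rVert\leq 1$. It then remains to bound $\lVert C\rVert$ by $\lVert A\rVert$ and $\lVert B\rVert$. Concretely, I would test positivity against vectors of the form $\xi\oplus t\eta$, with $\xi,\eta\in\mathcal H_B$ unit vectors and $t\in\mathbb{C}$; equivalently, apply the positive operator $\Gamma_n^{(m)}$ of Eq.~\eqref{eq:gammaoperator} to $f:=(\xi\cdot[v])+t\,(\eta\cdot[w])$. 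Expanding $\braket{f|\Gamma_n^{(m)}f}\geq 0$ and using $\Gamma_n(w,v)=C^\dagger$ gives
\[
0\leq \braket{\xi|A\xi}_B+2\,\operatorname{Re}\!\big(t\,\braket{\xi|C\eta}_B\big)+\lvert t\rvert^2\,\braket{\eta|B\eta}_B .
\]
Choosing $t$ to be a unit-modulus phase making the cross term equal to $-2\lvert\braket{\xi|C\eta}_B\rvert$, and bounding the two diagonal terms by $1$, yields $2\lvert\braket{\xi|C\eta}_B\rvert\leq 2$, i.e.\ $\lvert\braket{\xi|C\eta}_B\rvert\leq 1$. Taking the supremum over unit $\xi,\eta$ gives $\lVert C\rVert\leq 1$, which is the assertion.

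The only genuine content beyond bookkeeping is the passage from positivity of the block $M$ and the diagonal bounds $\lVert A\rVert,\lVert B\rVert\leq 1$ to the off-diagonal bound $\lVert C\rVert\leq 1$. This is the standard fact that a positive block matrix factors as $C=A^{1/2}KB^{1/2}$ with $\lVert K\rVert\leq 1$, whence $\lVert C\rVert\leq\lVert A\rVert^{1/2}\lVert B\rVert^{1/2}$; the test-vector computation above is just a self-contained proof of exactly this inequality. I therefore expect no real obstacle, only the need to state the block-matrix estimate carefully, to use Hermiticity of $\Gamma_n$, and to dispatch the degenerate case $v=w$ separately.
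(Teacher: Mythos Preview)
Your proposal is correct and follows essentially the same approach as the paper: extract the $2\times 2$ principal block indexed by $v,w$, invoke Lemma~\ref{lem:diagonal<=id} to bound the diagonal blocks by the identity, and then read off $\lVert \Gamma_n(v,w)\rVert\leq 1$ from positivity. The only cosmetic difference is that the paper replaces the diagonal blocks by $\operatorname{id}_{\mathcal H_B}$ and concludes $\Gamma_n(v,w)^{\dagger}\Gamma_n(v,w)\leq\operatorname{id}_{\mathcal H_B}$ directly, whereas you unpack the same block-matrix fact via the test-vector computation.
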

\begin{proof}
    For $v=w$ this follows immediately from Lemma~\ref{lem:diagonal<=id}. For $v\neq w$ we consider the contraction of $\Gamma_n$ for the rows and columns corresponding to $v$ and $w$. Then we have
    \begin{equation}
        0\leq\begin{pmatrix}
            \Gamma_n(v,v) & \Gamma_n(v,w)\\
            \Gamma_n(w,v) & \Gamma_n(w,w)
        \end{pmatrix}\leq\begin{pmatrix}
            \operatorname{id}_{\mathcal{H}_B} & \Gamma_n(v,w)\\
            \Gamma_n(w,v) & \operatorname{id}_{\mathcal{H}_B},
        \end{pmatrix}
    \end{equation}
    where the first inequality follows from the positivity of $\Gamma_n$ and the second from Lemma~\ref{lem:diagonal<=id}. This implies
    \begin{equation}
        \Gamma_n(v,w)^{\dag}\Gamma_n(v,w)=\Gamma_n(w,v)\Gamma_n(v,w)\leq\operatorname{id}_{\mathcal{H}_B}.
    \end{equation}
    This in turn implies
    \begin{equation}
        \lVert\Gamma_n(v,w)\rVert^2=\lVert\Gamma_n(v,w)^{\dag}\Gamma_n(v,w)\rVert\leq 1
    \end{equation}
    from which the claim follows.
\end{proof}

\begin{thm}
    \label{thm:limitoperator}
    There is a subsequence of indices $(n_k)_{k\in\mathbb{N}}$ and an infinite matrix
    \begin{equation}
        \Gamma_{\infty}:=(\Gamma_{\infty}(v,w))_{v,w\in\mathcal{S}}
    \end{equation}
    whose entries are in $\mathcal{B}(\mathcal{H}_B)$, and whose rows and columns are indexed by the words in $\mathcal{S}$ such that for every natural number $m$ the sequence $(\Gamma_{n_k}^{(m)})_{k\in\mathbb{N}}$ converges in the operator norm to an operator $\Gamma_{\infty}^{(m)}\in\mathcal{B}(\mathcal{H}_B)$ that is defined as in Eq.~\eqref{eq:gammaoperator} for $n=\infty$ and is a certificate of order $m$ for $\Sigma$.
\end{thm}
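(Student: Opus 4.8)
The plan is to construct $\Gamma_\infty$ by a compactness-plus-diagonalisation argument, extracting a single subsequence along which \emph{every} matrix entry converges, and then to check that all of the defining properties of a certificate are closed under such entrywise limits.

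First I would use the uniform bound supplied by Lemma~\ref{lem:normbounded}: each entry $\Gamma_n(v,w)$ lies in the closed unit ball of $\mathcal{B}(\mathcal{H}_B)$. Since $\mathcal{H}_B$ is $d$-dimensional, $\mathcal{B}(\mathcal{H}_B)\cong M_d(\mathbb{C})$ is finite-dimensional, so this ball is compact. The alphabet $\Upsilon$ is finite, hence $\mathcal{S}$ is countable and so is the index set $\mathcal{S}\times\mathcal{S}$. Enumerating the pairs as $(v_i,w_i)_{i\in\mathbb{N}}$, I would apply Bolzano--Weierstrass successively — extract a subsequence along which $\Gamma_n(v_1,w_1)$ converges, then a further subsequence handling $(v_2,w_2)$, and so on — and take the diagonal subsequence $(n_k)_k$. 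By construction $\Gamma_{n_k}(v,w)$ then converges for \emph{every} pair $(v,w)$ simultaneously, and I would define $\Gamma_\infty(v,w)$ to be this entrywise limit. This yields a single infinite matrix with entries in $\mathcal{B}(\mathcal{H}_B)$.

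Next, fix a natural number $m$. The space $\mathcal{H}_m$ is finite-dimensional, of dimension $d\,\lvert\mathcal{S}_m\rvert$, and the operator $\Gamma_{n_k}^{(m)}$ is determined through Eq.~\eqref{eq:gammaoperator} by the finite block $(\Gamma_{n_k}(v,w))_{v,w\in\mathcal{S}_m}$. For $k$ large enough that $n_k\geq m$, this block consists of genuine certificate entries (and by the restriction argument already used to show $\cQ^n\subseteq\cQ^{n-1}$ it satisfies the order-$m$ conditions), each of which converges to $\Gamma_\infty(v,w)$. Because convergence of finitely many matrix entries is equivalent to operator-norm convergence on a finite-dimensional space, I conclude $\Gamma_{n_k}^{(m)}\to\Gamma_\infty^{(m)}$ in operator norm, where $\Gamma_\infty^{(m)}$ is given by Eq.~\eqref{eq:gammaoperator} with the limit entries. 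This establishes the stated convergence.

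Finally I would verify that $\Gamma_\infty^{(m)}$ is a certificate of order $m$ by checking that each condition in Definitions~\ref{def:MMn} and~\ref{def:cert} survives the limit. Positivity \eqref{eq:2} passes because the cone of positive semidefinite operators is norm-closed; the equalities \eqref{eq:3}, \eqref{eq:4}, \eqref{eq:5}, \eqref{eq:7}, \eqref{eq:8}, \eqref{eq:9} are linear constraints on finitely many entries and so are preserved under entrywise convergence; and for the proportionality \eqref{eq:6}, writing $\Gamma_{n_k}(a|x,a'|x')=c_{n_k}\,\id_d$, the scalars $c_{n_k}$ converge since the operators do, so the limit is again a multiple of $\id_d$. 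The step requiring the most care is the bookkeeping in the diagonal extraction: one must guarantee a single subsequence valid for all countably many entries at once, so that $\Gamma_\infty$ is well-defined independently of $m$ and the operators $\Gamma_\infty^{(m)}$ for different $m$ are mutually consistent restrictions of one infinite object. Everything else reduces to routine closedness checks.
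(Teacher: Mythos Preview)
Your proposal is correct and follows essentially the same approach as the paper: use the uniform bound from Lemma~\ref{lem:normbounded} plus countability of $\mathcal{S}$ to extract a subsequence with entrywise convergence, then observe that for each fixed $m$ only finitely many entries are involved so convergence is in operator norm and all certificate constraints pass to the limit. The only cosmetic difference is that the paper packages the compactness step via Banach--Alaoglu on $\ell^\infty$ (with predual $\ell^1$), whereas you use Bolzano--Weierstrass plus a diagonal extraction; your version is more elementary and exploits directly that $\mathcal{H}_B$ is finite-dimensional, but the underlying argument is identical.
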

\begin{proof}
    First, we choose an arbitrary basis of the Hilbert space $\mathcal{H}_B$ and an enumeration of the countable set $\mathcal{S}^d\times\mathcal{S}^d$. Then by Lemma~\ref{lem:normbounded} we may regard each moment matrix $\Gamma_n=(\Gamma_n(v,w))_{v,w\in\mathcal{S}}$ as an element of the Banach space
    \begin{equation}
        \ell^{\infty}:=\ell^{\infty}(\mathbb{N}),
    \end{equation}
    the space of all bounded complex sequences, equipped with the supremum norm. Note that $\ell^1(\mathbb{N})$ is a continuous pre-dual space of $\ell^{\infty}$. By the Banach-Alaoglu-Theorem the closed unit ball of $\ell^{\infty}$ is therefore compact with respect to the weak-* topology. Thus, there is a subsequence of indices $(n_k)_{k\in\mathbb{N}}$ such that $\Gamma_{n_k}$ converges in the weak-* topology to an element in $\ell^{\infty}$ that we can identify with a matrix $\Gamma_{\infty}:=(\Gamma_{\infty}(v,w))_{v,w\in\mathcal{S}}$. Now define the operators $\Gamma_{\infty}^{(m)}$ as in Eq.~\eqref{eq:gammaoperator}. Note that for each fixed $m$ only finitely many words are actually involved in each operator. Since convergence in the weak-* topology in particular implies pointwise convergence, we get that the sequence $(\Gamma_{n_k}^{(m)})_{k\in\mathbb{N}}$ converges to $\Gamma_{\infty}^{(m)}$ in the operator norm of $\mathcal{B}(\mathcal{H}_m)$. Because all constraints on a certificate behave continuously, $\Gamma_{\infty}^{(m)}$ must be a certificate of order $m$ for $\Sigma$.
\end{proof}

Theorem~\ref{thm:limitoperator} shows that our hierarchy $\{\cQ^n\}_{n\in\mathbb{N}}$ converges to a set of assemblages $\cQ_C$ that contains the set of quantum assemblages by Theorem~\ref{thm:quantumInQn}. 

In order to show that every assemblage $\Sigma=\{\sigma_{a|xy}\}_{a \in \A, \, x \in \X, \, y \in \Y}\in\cQ_C$ is quantum -- a question that we leave open -- we have to find a Hilbert space $\mathcal{K}$, a shared state $\rho\in\mathcal{B}(\mathcal{K})$, measurements for Alice $\Pi_{a|x}$, CPTP maps $\cE^y\colon\mathbb{B}(\mathcal{H}_B)\to\mathcal{B}(\mathcal{H}_B)$ for Bob and a CPTP map $\tau\colon\mathcal{B}(\mathcal{K})\to\mathcal{B}(\mathcal{H}_B)$ such that
\begin{equation}\label{eq:theneed}
    \sigma_{a|xy}=\cE^y(\tau(\Pi_{a|x}\rho))
\end{equation}
for all $a \in \A, \, x \in \X, \, y \in \Y$.

In the remainder of this section we will construct some promising candidates for all the ingredients in the right-hand-side of Eq.~\eqref{eq:theneed} but the map $\tau$, based on the limit $\Gamma_{\infty}=(\Gamma_{\infty}(v,w))_{v,w\in\mathcal{S}}$ established in Theorem~\ref{thm:limitoperator}. Similar to the construction of the Hilbert spaces $\mathcal{H}_m$ above, we construct a $*$-algebra $\mathcal{A}$ from the set of words $\mathcal{S}$. Let
\begin{equation}
    \mathcal{M}(\mathcal{S},\C)
\end{equation}
be the vector space of all maps from the set of words $\mathcal{S}$ to the complex numbers with addition and scalar multiplication defined pointwise. Now let
\begin{equation}
    \mathcal{A}:=\{f\in\mathcal{M}(\mathcal{S},\C)| \operatorname{supp}(f)\text{ is finite}\}
\end{equation}
be the subspace of all maps that vanish for all except finitely many words. For $v\in\mathcal{S}$ we define the element
\begin{equation}
    [v]\in\mathcal{A}
\end{equation}
as follows: For $w\in\mathcal{S}$ let
\begin{equation}
    [v](w):=\begin{cases}
        \begin{array}{cl}
            1 & :\text{ if $w=v$}\\
            0 & :\text{ if $w\neq v$}
        \end{array}
    \end{cases}
\end{equation}
We further equip $\mathcal{A}$ with a multiplication and an involution as follows: For $f,g\in\mathcal{A}$ and $v\in\mathcal{S}$ let
\begin{equation}
    (f\cdot g)(v):=\sum_{\substack{{u,w\in\mathcal{S}}\\{uw=v}}} f(u)g(w)
\end{equation}
and
\begin{equation}
    f^{\dag}(v):=f(v^{\dag})^*.
\end{equation}
This makes $\mathcal{A}$ a unital $*$-algebra with identity element $[\emptyset]$. Finally, let $\mathcal{I}$ be the subspace of $\mathcal{A}$ spanned by all elements of the form
\begin{equation}
    [uvvw]-[uvw],\quad\quad [u\,a|x\,y\, w]-[u\,y\,a|x\,w],\quad\quad [u\,a|x\,a'|x\,w]
\end{equation}
for $u,w\in\mathcal{S}$, $v\in\Upsilon$, $a,a'\in\mathbb{A}$ with $a\neq a'$ and $y\in\mathbb{Y}$. Then $\mathcal{I}$ is a twosided $*$-ideal of $\mathcal{A}$ that represents the equivalence relation between words.

Now we are ready to define a $*$-linear map on our $*$-algebra $\mathcal{A}$. Let $\Gamma=\Gamma_{\infty}$ be as in Theorem~\ref{thm:limitoperator} and define
\begin{equation}
    \varphi\colon\mathcal{A}\to\mathcal{B}(\mathcal{H}_B),\quad f\mapsto \sum_{v\in\mathcal{S}} f(v)\cdot\Gamma(\emptyset,v).
\end{equation}
Note that this is well-defined because each $f\in\mathcal{A}$ has finite support by definition. The linearity of this map is obvious. For the compatibility with the involution we compute
\begin{equation}
    \begin{split}
        \varphi(f^{\dag}) &= \sum_{v\in\mathcal{S}} f^{\dag}(v)\cdot\Gamma(\emptyset,v) = \sum_{v\in\mathcal{S}} f(v^{\dag})^*\cdot\Gamma(\emptyset,v)\\
        &= \sum_{v\in\mathcal{S}} f(v^{\dag})^*\cdot\Gamma(v^{\dag},\emptyset) = \sum_{v\in\mathcal{S}} f(v^{\dag})^*\cdot\Gamma(\emptyset,v^{\dag})^{\dag}\\
        &= \varphi(f)^{\dag},
    \end{split}
\end{equation}
where the third equality follows from $(v^{\dag})^{\dag}\emptyset\equiv\emptyset^{\dag}v$.

\begin{prop}
    \label{prop:completelypositive}
    The $*$-linear map $\varphi$ is completely positive and $\mathcal{I}\subseteq\operatorname{ker}(\varphi)$.
\end{prop}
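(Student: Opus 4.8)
The plan is to prove the two assertions separately, in each case reducing the claim to properties that $\Gamma_\infty$ inherits from being a certificate of every order $m$ (Theorem~\ref{thm:limitoperator}), so that the moment-matrix identities \eqref{eq:2}, \eqref{eq:4} and \eqref{eq:5} hold for $\Gamma_\infty(v,w)$ on all words $v,w$. The single computational fact that organises everything is that $\varphi([v])=\Gamma_\infty(\emptyset,v)$ by definition, and that $\Gamma_\infty(\emptyset,uw)=\Gamma_\infty(u^{\dag},w)$ for all words $u,w$: this is \eqref{eq:4} applied to the (trivial) equivalence $\emptyset^{\dag}(uw)\equiv(u^{\dag})^{\dag}w$. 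Since the ideal half is the cheaper one, I would dispatch $\mathcal{I}\subseteq\operatorname{ker}(\varphi)$ first and then turn to complete positivity.

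For $\mathcal{I}\subseteq\operatorname{ker}(\varphi)$: because $\varphi$ is linear and $\mathcal{I}$ is the \emph{span} of the three listed families of generators (and is automatically an ideal since $u,w$ range freely), it suffices to verify that $\varphi$ annihilates each generator. I would compute $\varphi([uvvw]-[uvw])=\Gamma_\infty(\emptyset,uvvw)-\Gamma_\infty(\emptyset,uvw)$ and observe it vanishes by \eqref{eq:4}, since $uvvw\equiv uvw$ (the relation $vv\equiv v$ for $v\in\Upsilon$ used as a congruence). Likewise $\varphi([u\,a|x\,y\,w]-[u\,y\,a|x\,w])=0$, since $u\,a|x\,y\,w\equiv u\,y\,a|x\,w$ from $a|xy\equiv y\,a|x$. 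For the third family, the word $u\,a|x\,a'|x\,w$ with $a\neq a'$ is null, so \eqref{eq:5} gives $\varphi([u\,a|x\,a'|x\,w])=\Gamma_\infty(\emptyset,u\,a|x\,a'|x\,w)=\mathbf{0}_d$.

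For complete positivity I would use the standard criterion for a $*$-linear map into $\mathcal{B}(\mathcal{H}_B)$: it is enough to show that for every finite family $f_1,\dots,f_N\in\mathcal{A}$ and vectors $\xi_1,\dots,\xi_N\in\mathcal{H}_B$ one has $\sum_{i,j}\braket{\xi_i|\varphi(f_i^{\dag}f_j)|\xi_j}\geq 0$. Expanding the convolution product and the involution, and then reindexing $u\mapsto u^{\dag}$ together with the identity above, I get $\varphi(f_i^{\dag}f_j)=\sum_{u,w}\overline{f_i(u)}\,f_j(w)\,\Gamma_\infty(u,w)$. Substituting and collecting terms into the finitely supported vectors $\zeta_u:=\sum_i f_i(u)\,\xi_i\in\mathcal{H}_B$ converts the whole double sum into $\sum_{u,w}\braket{\zeta_u|\Gamma_\infty(u,w)|\zeta_w}$, which I would recognise as $\braket{f|\Gamma_\infty^{(m)}f}$ for the element $f=\sum_u(\zeta_u\cdot[v])$, taking any $m$ large enough that all words occurring in the supports lie in $\mathcal{S}_m$.

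The final step, which I expect to carry the real content, is the nonnegativity of this quadratic form. Here I would invoke that each $\Gamma_{n_k}^{(m)}$ is positive semidefinite by \eqref{eq:2} and that $\Gamma_{n_k}^{(m)}\to\Gamma_\infty^{(m)}$ in operator norm by Theorem~\ref{thm:limitoperator}, whence $\braket{f|\Gamma_\infty^{(m)}f}=\lim_k\braket{f|\Gamma_{n_k}^{(m)}f}\geq 0$. The main difficulty is not conceptual but bookkeeping: one must apply the reindexing $u\mapsto u^{\dag}$ consistently, keep track of the conjugation when moving scalars $\overline{f_i(u)}$ into the bra, and check that the finitely many words appearing across all the $f_i$ can be accommodated in a single $\mathcal{H}_m$ on which positivity of the limit operator has already been secured. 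With that, complete positivity follows and the proposition is established.
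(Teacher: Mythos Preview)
Your proposal is correct and follows essentially the same route as the paper. The only cosmetic difference is that the paper phrases complete positivity via a full $n\times n$ matrix $F=(f_{i,j})$ and computes $\braket{x|(\operatorname{id}_n\otimes\varphi)(F^{\dag}F)|x}$, which introduces an extra summation index $k$ over the rows of $F$; you use the equivalent row-vector criterion $\sum_{i,j}\braket{\xi_i|\varphi(f_i^{\dag}f_j)|\xi_j}\geq 0$ directly, so your bookkeeping is one index lighter, but the key identity $\Gamma_\infty(\emptyset,uw)=\Gamma_\infty(u^{\dag},w)$, the reindexing $u\mapsto u^{\dag}$, the packaging into an element of $\mathcal{H}_m$, and the appeal to positivity of $\Gamma_\infty^{(m)}$ from Theorem~\ref{thm:limitoperator} are all identical. (Minor typo: in your definition of $f$ you wrote $(\zeta_u\cdot[v])$ where you meant $(\zeta_u\cdot[u])$.)
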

\begin{proof}
    Let $n$ be a natural number and let $F=(f_{i,j})_{i,j}\in\operatorname{Mat}_n(\mathcal{A})$. We have to show that
    \begin{equation}
        (\operatorname{id}_n\otimes\varphi)(F^{\dag}F)\in\operatorname{Mat}_n(\mathcal{B}(\mathcal{H}_B))
    \end{equation}
    is positive. Here $\operatorname{id}_n$ denotes the identity map on $\operatorname{Mat}_n(\mathbb{C})$. We choose a vector $x=(x_1,\dots,x_n)^t\in\mathcal{H}_B^n$. Now we compute
    \begin{equation}
        \begin{split}
            \braket{ x|(\operatorname{id}_n\otimes\varphi)(F^{\dag}F)| x} &= \sum_{i,j=1}^{n}\braket{ x_i|\varphi((F^{\dag}F)_{i,j})| x_j} = \sum_{i,j,k=1}^{n} \braket{ x_i|\varphi(f_{k,i}^{\dag}f_{k,j})| x_j}\\
            &= \sum_{i,j,k=1}^{n}\sum_{v\in\mathcal{S}} \braket{ x_i|(f_{k,i}^{\dag}\cdot f_{j,k})(v)\cdot\Gamma(\emptyset,v)| x_j}\\
            &= \sum_{i,j,k=1}^{n}\sum_{v,w\in\mathcal{S}} \braket{ x_i| f_{k,i}^{\dag}(v)\cdot f_{k,j}(w)\cdot\Gamma(\emptyset,vw)| x_j}\\
            &= \sum_{i,j,k=1}^{n}\sum_{v,w\in\mathcal{S}} \braket{ x_i| f_{k,i}(v^{\dag})^*\cdot f_{k,j}(w)\cdot\Gamma(v^{\dag},w)| x_j}\\
            &= \sum_{i,j,k=1}^{n}\sum_{v,w\in\mathcal{S}} \braket{ x_i| f_{k,i}(v)^*\cdot f_{k,j}(w)\cdot\Gamma(v,w)| x_j}.
        \end{split}
    \end{equation}
    The second to last equality follows from the fact that $(v^{\dag})^{\dag}w\equiv\emptyset^{\dag}vw$. Now consider the set
    \begin{equation}
        \mathcal{W}=\bigcup_{i,j=1}^{n}\operatorname{supp}(f_{i,j}).
    \end{equation}
    Then $\mathcal{W}$ is a finite set. Thus there is a natural number $m$ such that $\mathcal{W}\subseteq\mathcal{S}_m$. For $k=1,\dots,n$ we define
    \begin{equation}
        y_k:=\sum_{v\in\mathcal{S}_m} \left(\sum_{i=1}^{n} f_{k,i}(v)\cdot x_i\right)\cdot[v]\in\mathcal{H}_m.
    \end{equation}
    Resuming our computation from above we have
    \begin{equation}
        \begin{split}
            \braket{ x|(\operatorname{id}_n\otimes\varphi)(F^{\dag}F)| x} &= \sum_{i,j,k=1}^{n}\sum_{v,w\in\mathcal{S}} \braket{ x_i| f_{k,i}(v)^*\cdot f_{k,j}(w)\cdot\Gamma(v,w)| x_j}\\
            &= \sum_{k=1}^{n} \sum_{v,w\in\mathcal{S}_m} \braket{ y_k(v)|\Gamma(v,w)| y_k(w)}\\
            &= \sum_{k=1}^{n} \braket{ y_k| \Gamma_{\infty}^{(m)}| y_k}\geq 0.
        \end{split}
    \end{equation}
    The inequality follows from the fact that $\Gamma_{\infty}^{(m)}$ is a certificate of order $m$ and thus a positive operator by Theorem~\ref{thm:limitoperator}. It remains to show that $\mathcal{I}\subseteq\operatorname{ker}(\varphi)$. Since $\varphi$ is a linear map, we only have to show that the generators of $\mathcal{I}$ are elements of $\operatorname{ker}(\varphi)$. However, this easily follows from the definition of a moment matrix.
\end{proof}

By the Homomorphism Theorem and since $\mathcal{I}\subseteq\operatorname{ker}(\varphi)$, the map
\begin{equation}\label{eq:62}
    \psi\colon \mathcal{A}/\mathcal{I}\to\mathcal{B}(\mathcal{H}_B),\quad x+\mathcal{I}\mapsto\varphi(x)
\end{equation}
is well-defined, $*$-linear, and completely positive. It is well known that $\psi$ is completely positive if and only if the map
\begin{equation}
    s_{\psi}\colon(\mathcal{A}/\mathcal{I})\otimes\mathcal{B}(\mathcal{H}_B)\to\C,\quad (x+\mathcal{I})\otimes T\mapsto\Tr{\psi(a+\mathcal{I})T^t}
\end{equation}
is a positive $*$-linear functional. By the GNS-construction there is a Hilbert space $\mathcal{K}$, a cyclic vector $\xi\in\mathcal{K}$, and a $*$-algebra homomorphism
\begin{equation}
    \pi\colon(\mathcal{A}/\mathcal{I})\otimes\mathcal{B}(\mathcal{H}_B)\to\mathcal{B}(\mathcal{K})
\end{equation}
such that for all $x\in(\mathcal{A}/\mathcal{I})\otimes\mathcal{B}(\mathcal{H}_B)$ we have
\begin{equation}
    s_{\psi}(x)=\braket{\xi|\pi(x)|\xi}.
\end{equation}

Now let $\mathcal{K}$ be the shared Hilbert space,
\begin{equation}
    \rho:=\ket{\xi}\bra{\xi}\in\mathcal{B}(\mathcal{K}),
\end{equation}
and
\begin{equation}
    \Pi_{a|x}:=\pi(((a|x)+\mathcal{I})\otimes\operatorname{id}_{\mathcal{H}_B}).
\end{equation}
Once we can find an appropriate CPTP map $\tau\colon\mathcal{B}(\mathcal{K})\to\mathcal{B}(\mathcal{H}_B)$ we can further define
\begin{equation}\label{eq:67}
    \cE^y(T):=\tau(\pi((y+\mathcal{I})\otimes T)).
\end{equation}

The open question then is how to define $\tau$. Take for example the case where we start from the quantum data of Theorem~\ref{thm:quantumInQn} and, as constructed there, obtain its representation in terms of a moment matrix of each order. Then, one can take such moment matrices and reverse-engineer the quantum data by applying the construction outlined between Eqs.~\eqref{eq:62} to \eqref{eq:67} and taking the map $\tau$ to be `tracing out $\mathcal{H}_A$ and contracting the auxiliary space $\mathcal{H}_{B_{aux}}$ of Bob with a vector'. Whether this sheds light on a general construction for $\tau$ is an open question. 

\subsection{The hierarchy as a tool to bound quantum violations of steering inequalities}

\blk

One of the main technical uses of the hierarchy of semidefinite tests that we have just defined is to bound the quantum violation of steering inequalities. The first use of our hierarchy for these purposes was presented in Ref.~\cite{bpqs}, and here we recall their results. 

Consider a Bob-with-Input scenario with $|\A|=2$, $|\X|=3$, and $|\Y|=2$. Now consider the steering inequality 
\begin{align}
I[\As] = \Tr{\sum_{a\in\A,x\in\X,y\in\Y} \mathcal{F}_{axy} \, \sigma_{a|xy} }
\end{align}
with $\mathcal{F}_{axy} = \frac{1}{2} (\id - (-1)^a \sigma_x)^{T^y} $, where $(\sigma_1,\sigma_2,\sigma_3) = (X,Y.Z)$ are the Pauli operators, and $T^y$ denotes the transpose operation when $y=1$ (and do nothing otherwise). 

The value of inequality is bounded from below, as shown in Ref.~\cite{bpqs}. Indeed, the minimum value it can attain for non-signalling assemblages is $0$, whilst the minimum value it can attain for LHS assemblages is $1.2679$. The minimum value attainable with quantum assemblages is not known, but a lower bound for it can be set by using our hierarchy of semidefinite tests. As shown in Ref.~\cite{bpqs}, the first level of our hierarchy yields a value:
\begin{align}
\min_{\As \in \mathcal{Q}^1} \, I[\As] = 0.4135 \,.
\end{align}
Therefore, whenever $\As$ is a quantum assemblage it will evaluate the steering functional to $I[\As] \geq 0.4135 $. Conversely, whenever $I[\As] < 0.4135 $ it follows that $\As$ does not admit of a quantum realisation, i.e., it is post-quantum. 

\blk

\section{Multipartite Bob-with-Input EPR scenarios}

We now generalise our hierarchy of semidefinite tests and results to multipartite \blk Bob-with-Input \blk EPR scenario. The outline of this section follows similarly to the one for bipartite scenarios.

\subsection{Definition of this EPR scenario}\label{se:bwiMul}

A natural question to ask is what happens if the Bob-with-Input EPR scenario includes more than one black-box party, i.e., many `Alices' (see Fig.~\ref{fig:mbwi}). This is a straightforward generalisation of the original Bob-with-Input EPR scenario, by incorporating the formal elements of multipartite traditional EPR scenarios \cite{oform,chan}. In this section we will specify these multipartite Bob-with-Input scenario by presenting, for simplicity, the case of two black-box parties  (Alice${}_1$ and Alice${}_2$). The case for more than two Alices follows straightforwardly. 

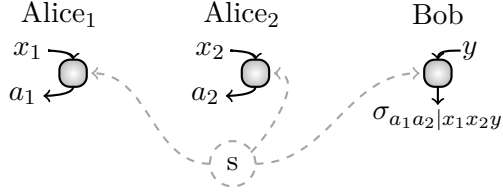
\begin{figure}
\begin{center}
	    \begin{tikzpicture}[scale=1.2]
		\node at (-6.3,1.3) {Alice${}_1$};
		\shade[draw, thick, ,rounded corners, inner color=white,outer color=gray!50!white] (-5.7,-0.3) rectangle (-6.3,0.3) ;
		\draw[thick, ->] (-6.5,0.5) to [out=180, in=90] (-6,0.3);
		\node at (-6.99,0.5) {$x_1$};
		\draw[thick, -<] (-6,-0.3) to [out=-90, in=180] (-6.5,-0.5);
		\node at (-7.1,-0.5) {$a_1$};

		\node at (-2.3,1.3) {Alice${}_2$};
		\shade[draw, thick, ,rounded corners, inner color=white,outer color=gray!50!white] (-1.7,-0.3) rectangle (-2.3,0.3) ;
		\draw[thick, ->] (-2.5,0.5) to [out=180, in=90] (-2,0.3);
		\node at (-2.99,0.5) {$x_2$};
		\draw[thick, -<] (-2,-0.3) to [out=-90, in=180] (-2.5,-0.5);
		\node at (-3.1,-0.5) {$a_2$};

		\node at (2,1.3) {Bob};
		\shade[draw, thick, ,rounded corners, inner color=white,outer color=gray!50!white] (1.7,-0.3) rectangle (2.3,0.3) ;		
		\draw[thick, ->] (2.5,0.5) to [out=180, in=90] (2,0.3);
		\node at (2.7,0.5) {$y$};
		\draw[thick, ->] (2,-0.3) -- (2,-0.7);
		\node at (2,-1) {$\sigma_{a_1a_2|x_1x_2y}$};
		
		\node at (-2.5,-2) {s};
		\draw[thick, dashed, color=gray!70!white] (-2.5,-2) circle (0.5cm);
		\draw[thick, dashed, color=gray!70!white, ->] (-3,-2) to [out=180, in=0] (-5.6,0);
		\draw[thick, dashed, color=gray!70!white, ->] (-2.1,-1.75) to [out=45, in=0] (-1.6,0);
		\draw[thick, dashed, color=gray!70!white, ->] (-2,-2) to [out=0, in=180] (1.6,0);		
	    \end{tikzpicture}	    
\end{center}
\caption{Multipartite Bob-with-Input steering scenario: example where two black-box parties (Alice${}_1$ and Alice${}_2$) steer the state of Bob's system, and where Bob has an input which is used to determine the production of a his quantum system.}
\label{fig:mbwi}
\end{figure}

Similarly to Section \ref{se:bwi}, for $k \in \{1,2\}$ let $x_k\in \{1, \ldots, m_A\} =: \X$ denote the choice of Alice${}_k$'s measurement, $a_k \in \{0, \ldots, o_A-1\} =: \A$ denote the outcome of Alice${}_k$'s measurement\footnote{For simplicity in the presentation we will take the sets $\X$ and $\A$ to be the same for all the Alices. The case when these sets are different follows directly.}, $y \in \{1, \ldots, m_B\} =: \Y$ be the choice of Bob's input, and $d$ the dimension of the Hilbert space of the quantum system whose state is prepared by this protocol. Then, the assemblage $\As_2$ (where $2$ refers to the number of Alices) consists of the following elements: $\{\sigma_{a_1a_2|x_1x_2y}\}_{a_1,a_2 \in \A, \, x_1,x_2 \in \X, \, y \in \Y} \equiv \As_2$. We see then that a multipartite Bob-with-Input EPR scenario is specified by the number of Alices, the dimension $d$, and the cardinalities of the sets $\A$, $\X$, and $\Y$. 

Similarly to the bipartite case, in these multipartite setups the parties -- which are distant -- are assumed to perform space-like separated actions on their share of the system. At the level of the assemblages $\As_2$, this is captured by the so-called No Signalling condition that the elements of $\As_2$ must satisfy. In a nutshell, (i) no Alice can signal any other party by the means of her measurement choice (i.e., $\sum_{a_1 \in \A} \sigma_{a_1a_2|x_1x_2y} = \sum_{a_1 \in \A} \sigma_{a_1a_2|x_1'x_2y}$ for all $x_1,x'_1,x_2 \in \X$ and $y \in \Y$, and similarly when summing over Alice$_{2}$'s outcomes instead), and (ii) Bob cannot signal to any of the Alices by means of his choice of $y$ (i.e., $\tr{\sigma_{a_1a_2|x_1'x_2y}}{}=\tr{\sigma_{a_1a_2|x_1'x_2y'}}{}\,, \, \forall \, a_1,a_2 \in \A \,,\, x_1,x_2\in\X\,,\, y,y'\in\Y$). The assemblages $\As_2$ that such an EPR scenario admits are therefore known as \textit{non-signalling assemblages}.

An assemblage $\As_2$ is then defined to have a quantum realisation in the Multipartite Bob-with-Input EPR scenario if the following conditions are satisfied: 
\begin{defn}\textbf{Quantum multipartite Bob-with-Input assemblages.--}\label{def:qmbwia}\\
An assemblage $\As_2 = \{\sigma_{a_1a_2|x_1x_2y}\}_{a_1,a_2 \in \A, \, x_1,x_2 \in \X, \, y \in \Y}$ in a Multipartite Bob-with-Input EPR scenario with two Alices has a quantum realisation if there exist:
\begin{compactitem}
\item For each Alice: a Hilbert space $\cH_k$ ($k \in \{1,2\}$),
\item For each Alice: a collection of $m_A$ projective measurements of $o_A$ outcomes each on $\cH_k$ -- $\{\Pi^{(k)}_{a_k|x_k}\}_{a_k \in \A}$ for each $x_k \in \X$, 
\item A tripartite quantum system $\rho$ in the Hilbert space $\cH_1 \otimes \cH_2 \otimes \cH_{\mathrm{B}_\mathrm{(i)}}$,
\item a CPTP map $\cE_y$ ($ \cH_{\mathrm{B}_\mathrm{(i)}} \rightarrow  \cH_{\mathrm{B}}$) for Bob for each $y \in \Y$,
\end{compactitem}
such that
\begin{align}
\sigma_{a_1a_2|x_1x_2y} = \cE_y\left[\tr{\Pi^{(1)}_{a_1|x_1} \otimes \Pi^{(2)}_{a_2|x_2} \otimes \id_B  \, \rho}{A_1A_2}\right] \quad \forall \, a_1,a_2 \in \A, \, x_1,x_2 \in \X, \, y \in \Y\,.
\end{align}
The set of all quantum realisable assemblages in a specified multipartite Bob-with-Input EPR scenario is denoted by $\cQ_{2}$, where here $2$ refers to the number of black-box parties\footnote{In truth, the notation for the set should also include the parameters $(d,|\A|,|\X|,|\Y|)$. Since this information will be clear from context in the current manuscript, we will omit such subscripts for clarity in the presentation.}.
\end{defn}

In the following section we show how to generalise to these multipartite scenarios the hierarchy defined in Section \ref{se:BwIhier}.

\subsection{A hierarchy for the multipartite Bob-with-Input scenario}

Here we  define a hierarchy of sets of assemblages in the multipartie Bob-with-Input EPR scenario, generalising the ideas of Section \ref{se:BwIhier}. Hereon, the number of black-box parties in the scenario will be denoted by $\Nb$. 

The alphabet $\Upsilon_{\Nb}$ that we  use to define the words is composed of the following elements: 
\begin{align}
\Upsilon_{\Nb} := \{\emptyset\} \cup \{ a_1|x_1 \}_{{ x_1 \in \X\,, a_1 \in \A \setminus \{0\}}} \cup \ldots \cup \{ a_\Nb |x_\Nb \}_{{ x_\Nb \in \X\,, a_\Nb \in \A \setminus \{0\}}} \cup \{ y \}_{y \in \Y} \,. 
\end{align}

The definition of words, their concatenation, and their ``dagger'' follows similarly from Sec.~\ref{se:BwIhier}. We denote by $\cS^*_\Nb$ the set of all words of arbitrary length with letters drawn from $\Upsilon_\Nb$.

The symmetry operations on the elements of the alphabet generalise to the following:
\begin{compactitem} 
\item $\mathbf{v}\mathbf{w} \equiv \mathbf{v} \emptyset \mathbf{w} $ for all $\mathbf{u},\mathbf{w} \in \cS^*_\Nb$,
\item $\mathbf{v}\mathbf{v} \equiv \mathbf{v}$ for all $\mathbf{v} \in \Upsilon_\Nb$,
\item $a_k|x_k y \equiv y a_k|x_k$ for all $y,a_k|x_k \in \Upsilon_\Nb$ and $k \in \{1,\ldots,\Nb\}$,
\item $a_k|x_k a_{k'}|x_{k'}  \equiv a_{k'}|x_{k'} a_k|x_k$ for all $a_k|x_k,a_{k'}|x_{k'} \in \Upsilon_\Nb$ and $k\neq k' \in \{1,\ldots,\Nb\}$.
\end{compactitem} 
Two words are then said to be equivalent if there exists a sequence of symmetry operations that can take one to the other, and vice-versa, just like in Sec.~\ref{se:BwIhier}. Moreover, a word $\mathbf{v}$ is now \textit{null} if, after applying a sequence of symmetry operations, one may find a letter $a_k|x_k$ followed by a letter $a'_k|x_k$ for some $x_k \in \X$, $a_k \neq a'_k \in \A \setminus \{0\}$, and $k \in \{1,\ldots,\Nb\}$.

With all this, we can now generalise to multipartite scenarios the definitions of ``set of words of certain length'' and ``moment matrix of order $n$'':
\begin{defn} \textbf{The set of words} $\boldsymbol{\cS_n^\Nb}$.--\\
A word $\mathbf{v} \in \cS^*_\Nb$ belongs to $\cS_n^\Nb$ if it may arise from concatenating at most $n$ letters of the alphabet $\Upsilon_\Nb$.
\end{defn}

\begin{defn}\label{def:MMnM}\textbf{Moment matrix of order $\mathbf{n}$ with $\Nb$ Alices: $\boldsymbol{\Gamma_n^{(\Nb)}}$}.--\\
Let $\Gamma_n^{(\Nb)}$ be a matrix of size $\lvert \cS_n^\Nb \lvert \times \lvert \cS_n^\Nb \lvert$, whose entries are $d \times d$ complex matrices, and whose rows and columns are indexed by the words in $\cS_n^\Nb$. This matrix is a moment matrix of order $n$ iff it satisfies the following properties: 
\begin{align}
\Gamma_n^{(\Nb)} \geq 0 \,, \label{eq:2ag}\\
\Gamma_n^{(\Nb)}(\emptyset,\emptyset) &= \id_d \,,\label{eq:3ag}\\
\Gamma_n^{(\Nb)}(\mathbf{v},\mathbf{w}) &= \Gamma^\Nb_n(\mathbf{v'},\mathbf{w'}) \quad \text{if} \quad \mathbf{v}^\dagger\mathbf{w} \equiv \mathbf{v'}^\dagger\mathbf{w'} \,,\label{eq:4ag}\\
\Gamma_n^{(\Nb)}(\mathbf{v},\mathbf{w}) &= \mathbf{0}_d  \quad \text{if} \quad \mathbf{v}^\dagger\mathbf{w}  \, \text{is null} \,,\label{eq:5ag}\\
\Gamma_n^{(\Nb)}(a_k|x_k,a'_{k'}|x'_{k'}) &\propto \id_d \quad \text{for all} \quad a_k|x_k,a'_{k'}|x'_{k'} \in \Upsilon_\Nb \quad \text{with} \quad  k,k' \in \{1,\ldots,\Nb\} \,,  \label{eq:6ag}
\end{align}
where $\mathbf{0}_d$ is the $d \times d$ matrix whose all entries are 0. 
\end{defn}

Finally, the definitions of ``Certificate of order $n$ for an assemblage'' and ``Set of assemblages in level $n$ of the hierarchy'' generalise as follows: 

\begin{defn}\label{def:certNb} \textbf{Certificate of order $n$ for an assemblage $\As_\Nb$}.--\\
Let $\As_\Nb = \{\sigma_{a_1 \ldots a_\Nb|x_1 \ldots x_\Nb y}\}_{a_k \in \A, \, x_k \in \X, \, k \in \{1,\ldots,\Nb\}, \, y \in \Y}$ be a non-signalling assemblage in the multipartite Bob-with-Input EPR scenario with $\Nb$ black-box parties.

Let $\Omega \subset \{1,\ldots,\Nb\}$ be a subset of the black-box parties, and denote by $\sigma_{\mathbf{a}_\Omega | \mathbf{x}_\Omega y}$ the assemblage elements that result when tracing out the parties not in $\Omega$, i.e.,
\begin{align}\label{eq:32}
\sigma_{\mathbf{a}_\Omega | \mathbf{x}_\Omega y} \stackrel{\text{def}}{=} \sum_{j \not\in \Omega}   \sum_{\phantom{a} a_j\in \A} \sigma_{a_1 \ldots a_\Nb|x_1 \ldots x_\Nb y}\,.
\end{align}

A matrix $\Gamma_n^{(\Nb)}$ is a certificate of order $n$ for $\As_\Nb$ iff $\Gamma_n^{(\Nb)}$ is a moment matrix of order $n$ for $\Nb$ Alices, and the following are satisfied: 
\begin{align}
\Gamma_n^{(\Nb)}(\emptyset,\mathbf{a}_\Omega | \mathbf{x}_\Omega) &= \Tr{\sigma_{\mathbf{a}_\Omega | \mathbf{x}_\Omega 1} } \, \id_d \quad \forall \, a_k|x_k \in \Upsilon_\Nb\,, \quad k \in \Omega\,,\quad \forall \Omega\,, \label{eq:7ag}\\
\Gamma_n^{(\Nb)}(\emptyset,y) &= \frac{1}{d} \, \sigma^\mathrm{T}_{y} \quad \forall \, y \in \Upsilon_\Nb \,,\label{eq:8ag}\\
\Gamma_n^{(\Nb)}(\emptyset, \mathbf{a}_\Omega | \mathbf{x}_\Omega \, y) &= \frac{1}{d} \, \sigma^\mathrm{T}_{\mathbf{a}_\Omega | \mathbf{x}_\Omega y} \quad \forall a_k|x_k,y \in \Upsilon_\Nb\,, \quad k \in \Omega\,,\quad \forall \Omega\,,\label{eq:9ag}
\end{align}
where $\sigma_y = \sum_{a_1\in\A\,,\ldots \,, a_\Nb\in\A}  \sigma_{a_1 \ldots a_\Nb|x_1 \ldots x_\Nb y}$ is the marginal state of Bob's system.
\end{defn}

\begin{defn}\textbf{Set of assemblages in level $n$ of the multipartite hierarchy: $\cQ^n_\Nb$}.--\\
An assemblage $\As_\Nb$ in the multipartite Bob-with-Input scenario with $\Nb$ black-box parties belongs to the set $\cQ^n_\Nb$ iff there exists a matrix $\Gamma_n^{(\Nb)}$ that is a certificate of order $n$ for $\As_\Nb$. 
\end{defn}

Next we will show that quantum assemblages in the multipartite Bob-with-Input scenario belong to $\cQ^n_\Nb$ for all $n$ and $\Nb$. 

\subsection{Multipartite Bob-with-Input quantum assemblages are $\cQ^n_\Nb$ assemblages}\label{multibwi}

\begin{thm}\textbf{$\cQ_\Nb \subset \cQ^n_\Nb \,, \quad \forall \, n \in \mathbb{N}\,, \forall \, \Nb$.}
\end{thm}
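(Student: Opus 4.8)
The plan is to mirror the bipartite argument of Theorem~\ref{thm:quantumInQn} almost verbatim, the only genuinely new feature being the extra commutation symmetry $a_k|x_k\,a_{k'}|x_{k'}\equiv a_{k'}|x_{k'}\,a_k|x_k$ between distinct Alices, which will be matched by operators acting on disjoint tensor factors. Concretely, starting from a quantum realisation of $\As_\Nb$ as in Definition~\ref{def:qmbwia}, I would first invoke Stinespring dilation as in Section~\ref{se:3} to take each Alice's measurements $\{\Pi^{(k)}_{a_k|x_k}\}$ to be projective and the shared state $\rho$ on $\cH_1\otimes\cdots\otimes\cH_\Nb\otimes\cH_{B_{(i)}}$ to be pure, and then dilate Bob's CPTP maps $\cE_y$ to unitaries $U_y$ on $\cH_{B_{aux}}\otimes\cH_{B_{(i)}}$, absorbing the ancilla into a pure state $\rho'=\proj{\psi}$. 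The ancilla $B'$, the state $\ket{\phi^+}$, the projectors $\Phi_b$, and the dichotomic measurements $\Psi_{b|y}$ are introduced exactly as in Section~\ref{se:3}, with no change.

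Next I would define the family $\{\Op_\mathbf{v}\}_{\mathbf{v}\in\cS^\Nb_n}$ on $\cH=\cH_1\otimes\cdots\otimes\cH_\Nb\otimes\cH_{B_{aux}}\otimes\cH_{B_{(i)}}\otimes\cH_{B'}$ by $\Op_\emptyset:=\id$, $\Op_{a_k|x_k}:=\id\otimes\Pi^{(k)}_{a_k|x_k}\otimes\id$ (the projector acting only on the $k$-th Alice's factor), and $\Op_y:=\id_{A_1\cdots A_\Nb}\otimes\Psi_{1|y}$, extended multiplicatively to longer words. The key observation is that operators for two distinct Alices act on different tensor factors and hence commute, so $\Op_{a_k|x_k}\Op_{a_{k'}|x_{k'}}=\Op_{a_{k'}|x_{k'}}\Op_{a_k|x_k}$ for $k\neq k'$; together with the relations already established in the bipartite case (idempotency, commutation of each $\Op_{a_k|x_k}$ with $\Op_y$, and $\Op_\mathbf{v}^\dagger=\Op_{\mathbf{v}^\dagger}$) this gives $\mathbf{v}\equiv\mathbf{w}\Rightarrow\Op_\mathbf{v}=\Op_\mathbf{w}$, now including the new symmetry. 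Orthogonality $\Pi^{(k)}_{a_k|x_k}\Pi^{(k)}_{a'_k|x_k}=0$ for $a_k\neq a'_k$ yields $\Op_\mathbf{v}=\mathbf{0}_\cH$ for null $\mathbf{v}$. Setting $\tilde\rho:=\proj{\psi}\otimes\id_{B'}$ and $\Gamma_n^{(\Nb)}(\mathbf{v},\mathbf{w}):=\tr{\Op_\mathbf{v}^\dagger\Op_\mathbf{w}\tilde\rho}{A_1\cdots A_\Nb B_{aux}B_{(i)}}$, the verification that $\Gamma_n^{(\Nb)}$ is a moment matrix, Eqs.~\eqref{eq:2ag}--\eqref{eq:6ag}, is then identical to the bipartite calculation (Gram-matrix positivity, normalisation, the equivalence and null relations, and proportionality to $\id_d$ of the two-single-letter entries).

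It remains to verify the certificate conditions \eqref{eq:7ag}--\eqref{eq:9ag}, the only place where the subset structure $\Omega\subset\{1,\ldots,\Nb\}$ enters. For a word $\mathbf{a}_\Omega|\mathbf{x}_\Omega$ the operator $\Op_{\mathbf{a}_\Omega|\mathbf{x}_\Omega}$ is the tensor product $\bigotimes_{k\in\Omega}\Pi^{(k)}_{a_k|x_k}$ with identities on the remaining factors; tracing out $A_j$ for $j\notin\Omega$ against this identity amounts to the sum $\sum_{a_j}\Pi^{(j)}_{a_j|x_j}=\id_{A_j}$ over the corresponding outcome, which reproduces the partial assemblage of Eq.~\eqref{eq:32}. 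Thus the computation of $\Gamma_n^{(\Nb)}(\emptyset,\mathbf{a}_\Omega|\mathbf{x}_\Omega\,y)$ is the bipartite computation of $\Gamma_n(\emptyset,a|x\,y)$ with $\Pi_{a|x}$ replaced by this tensor product, giving $\frac1d\sigma^\mathrm{T}_{\mathbf{a}_\Omega|\mathbf{x}_\Omega y}$ and hence \eqref{eq:9ag}; condition \eqref{eq:7ag} follows by dropping the $\Psi_{1|y}$ factor and using no-signalling from Bob to the Alices to remove the $y$-dependence, and \eqref{eq:8ag} follows from \eqref{eq:9ag} by summing over all Alices' outcomes via completeness of their measurements. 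I do not expect a genuine obstacle; the only points requiring care are confirming that the new commutation symmetry is faithfully represented, which is immediate from the disjoint-tensor-factor structure, and checking that $\sigma_{\mathbf{a}_\Omega|\mathbf{x}_\Omega y}$ is well defined and independent of the traced-out Alices' inputs, which is precisely the no-signalling among the Alices guaranteed for $\As_\Nb$.
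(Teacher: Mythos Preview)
Your proposal is correct and follows essentially the same approach as the paper's own proof: the paper likewise dilates to projective measurements, a pure state, and unitaries $U_y$, introduces the same $B'$, $\Phi_b$, $\Psi_{b|y}$, defines the word-operators $\Op_{\mathbf v}$ with each Alice's projector acting on its own tensor factor, and verifies the moment-matrix and certificate conditions via the same Gram-matrix argument and the same $\Omega$-subset bookkeeping. The only point you flag as new---the commutation $\Op_{a_k|x_k}\Op_{a_{k'}|x_{k'}}=\Op_{a_{k'}|x_{k'}}\Op_{a_k|x_k}$ for $k\neq k'$ from disjoint tensor factors---is exactly the additional relation the paper checks, and your treatment of the certificate conditions \eqref{eq:7ag}--\eqref{eq:9ag} via completeness $\sum_{a_j}\Pi^{(j)}_{a_j|x_j}=\id_{A_j}$ for the traced-out Alices matches the paper's computation.
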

\begin{proof}
The proof of the claim follows a straight parallelism to that of Sec.~\ref{se:3}. That is, we will take an arbitrary quantum assemblage $\As^{\mathrm{Q}}_\Nb \in \cQ_\Nb$ and show that $\As^{\mathrm{Q}}_\Nb \in \cQ^n_\Nb \,, \, \forall \, n \in \mathbb{N}\,, \forall \, \Nb$. 

Let $\ket{\psi}\bra{\psi} = \rho'$, $\{\Pi_{a_k|x_k}\}_{a_k \in \A\,,\, x_k \in \X\,,\, k \in \{1, \ldots, \Nb\}}$, and $\{U_y\}_{y \in \Y}$ be a pure state, projective measurements, and unitary transformations (dilations of the CPTP maps $\cE_y$) which realise the quantum assemblage $\As^{\mathrm{Q}}_\Nb$, similarly to Eq.~\eqref{eq:therhop}:

\begin{align}\label{eq:therhopm}
\sigma_{a_1 \ldots a_\Nb|x_1 \ldots x_\Nb y} \quad = \quad %
\begin{tikzpicture}
	\begin{pgfonlayer}{nodelayer}
		\node [style=none] (0) at (-2, -1) {$\rho^\prime$};
		\node [style=none] (1) at (-7, -0.5) {};
		\node [style=none] (2) at (-2, -2) {};
		\node [style=none] (3) at (3, -0.5) {};
		\node [style=none] (4) at (-5.7, 1.5) {\tiny{$\Pi_{a_1|x_1}$}};
		\node [style=none] (5) at (-5.75, -0.5) {};
		\node [style=none] (7) at (2.25, -0.5) {};
		\node [style=none] (8) at (2.25, 3.25) {};
		\node [style=right label] (9) at (-5.75, 0.25) {$A_1$};
		\node [style=right label] (10) at (2.25, 0.25) {$B_{(i)}$};
		\node [style=right label] (11) at (2.25, 2.75) {$B$};
		\node [style=none] (12) at (-7.25, 1) {};
		\node [style=none] (13) at (-4.25, 1) {};
		\node [style=none] (14) at (-5.75, 3) {};
		\node [style=none] (15) at (-5.75, 1) {};
		\node [style=none] (16) at (0.5, 1) {};
		\node [style=none] (17) at (3, 1) {};
		\node [style=none] (18) at (3, 2) {};
		\node [style=none] (19) at (0.5, 2) {};
		\node [style=none] (20) at (1.75, 1.5) {$U_y$};
		\node [style=none] (21) at (2.25, 2) {};
		\node [style=none] (22) at (2.25, 1) {};
		\node [style=none] (23) at (1.25, -0.5) {};
		\node [style=left label] (24) at (1.25, 0.25) {$B_{aux}$};
		\node [style=none] (25) at (1.25, 1) {};
		\node [style=none] (26) at (1.25, 2.5) {};
		\node [style=none] (28) at (1.25, 2) {};
		\node [style=upground] (29) at (1.25, 2.75) {};
		\node [style=none] (30) at (-1.65, 1.5) {\tiny{$\Pi_{a_\Nb|x_\Nb}$}};
		\node [style=none] (31) at (-1.75, -0.5) {};
		\node [style=right label] (32) at (-1.75, 0.25) {$A_\Nb$};
		\node [style=none] (33) at (-3.25, 1) {};
		\node [style=none] (34) at (-0.25, 1) {};
		\node [style=none] (35) at (-1.75, 3) {};
		\node [style=none] (36) at (-1.75, 1) {};
		\node [style=none] (37) at (-3.75, 1) {$\ldots$};
	\end{pgfonlayer}
	\begin{pgfonlayer}{edgelayer}
		\draw (1.center) to (2.center);
		\draw (2.center) to (3.center);
		\draw (3.center) to (1.center);
		\draw (14.center) to (12.center);
		\draw (12.center) to (13.center);
		\draw (13.center) to (14.center);
		\draw [qWire] (15.center) to (5.center);
		\draw (19.center) to (16.center);
		\draw (16.center) to (17.center);
		\draw (17.center) to (18.center);
		\draw (18.center) to (19.center);
		\draw [qWire] (21.center) to (8.center);
		\draw [qWire] (22.center) to (7.center);
		\draw [qWire] (25.center) to (23.center);
		\draw [qWire] (28.center) to (26.center);
		\draw (35.center) to (33.center);
		\draw (33.center) to (34.center);
		\draw (34.center) to (35.center);
		\draw [qWire] (36.center) to (31.center);
	\end{pgfonlayer}
\end{tikzpicture}
}\,.
\end{align}

The first step is to construct the projective measurement $\{\Psi_{1|y}, \Psi_{2|y} \}$ for Bob for each $y \in \Y$, just as in Eq.~\eqref{eq:bobmeas}.

Next, define the set of operators $\{\Op_\mathbf{v}\}_{\mathbf{v} \in \cS_n^\Nb}$, labelled by the words of the set $\cS_n^\Nb$, which act on the Hilbert space $\cH= \left(\bigotimes_{k \in \{1,\ldots,\Nb\}} \cH_{A_k} \right) \otimes \cH_{B_{aux}} \otimes \cH_{B_{(i)}}\otimes \cH_{B'}$. For the case of single-letter words (i.e., $\mathbf{v} \in \Upsilon_\Nb$), the operators are defined as:
\begin{align}
\Op_\emptyset &:= \id_{A_1 \ldots A_\Nb B_{aux}B_{(i)}B'} \,,\\
\Op_{a_k|x_k} &:= \left(\bigotimes_{j < k} \id_{A_j} \right)  \otimes \Pi_{a_k|x_k} \otimes \left(\bigotimes_{j > k} \id_{A_j} \right) \, \otimes \id_{B_{aux}B_{(i)}B'} \quad \forall \, a_k|x_k \in \Upsilon_\Nb \,, \, k \in \{1,\ldots,\Nb\} \,,\\
\Op_{y} &:=  \left(\bigotimes_{k=1:\Nb} \id_{A_k} \right) \, \otimes \Psi_{1|y} \quad \forall \, y \in \Upsilon_\Nb\,.
\end{align}
For the case of words with multiple letters, i.e., $\mathbf{v} = \mathbf{v}_1 \ldots \mathbf{v}_r$ with $1 < r \leq n$ and $\mathbf{v}_j \in \Upsilon_\Nb \quad \forall 1 \leq j \leq r$, the operators are defined as:
\begin{align}
\Op_{\mathbf{v}} := \Op_{\mathbf{v}_1} \ldots \Op_{\mathbf{v}_r} \,.
\end{align}
Hence, 
\begin{align}
\Op_{\mathbf{v}} \Op_{\mathbf{w}} = \Op_{\mathbf{vw}}.
\end{align}
Notice also that:
\begin{align}
\Op_\emptyset \Op_{\mathbf{v}} &= \Op_{\mathbf{v}} = \Op_{\mathbf{v}} \Op_{\emptyset} &\forall& \, \mathbf{v} \in \cS_{n-1}^\Nb \,,\\
\Op_{a_k|x_k} \Op_y &= \Op_y \Op_{a_k|x_k}  &\forall& \, a_k|x_k , y \in \Upsilon_\Nb\,, \, k \in \{1,\ldots,\Nb\} \,, \,\\
\Op_{a_k|x_k} \Op_{a_{k'}|x_{k'}} &= \Op_{a_{k'}|x_{k'}} \Op_{a_k|x_k} \, &\forall& \, a_k|x_k , a_{k'}|x_{k'} \in \Upsilon_\Nb\,, \, k \neq k' \in \{1,\ldots,\Nb\} \,, \,\\
\Op_{\mathbf{v}} \Op_{\mathbf{v}} &= \Op_{\mathbf{v}} \quad &\forall& \, \mathbf{v} \in \Upsilon_\Nb\,,
\end{align} 
therefore 
\begin{align}
\mathbf{v} \equiv \mathbf{w} \, \Rightarrow \, \Op_{\mathbf{v}} = \Op_{\mathbf{w}}.
\end{align}
In addition, for any $k \in \{1, \ldots , \Nb \}$, $\Op_{a_k|x_k} \Op_{a'_k|x_k} = \mathbf{0}_{\cH}$ (i.e., the zero matrix in the Hilbert space $\cH$) whenever $a_k\neq a'_k$ for all $x_k \in \X$. Therefore
\begin{align}
\Op_{\mathbf{v}} =  \mathbf{0}_{\cH} \quad \text{when} \,\, \mathbf{v} \,\, \text{is null} \,.
\end{align}
Finally, since the operators for single-lettered words are Hermitian, it follows that:
\begin{align}
\Op_\mathbf{v}^\dagger = \Op_{\mathbf{v}^\dagger}\,.
\end{align}

Now, define the quantum state $\tilde{\rho} = \ket{\psi}\bra{\psi} \otimes \id_{B'}$, and with it define a square matrix $\Gamma^\Nb_n$ of dimension $|\cS^\Nb_n|$, whose rows and columns are indexed by the words in $\cS_n^\Nb$, as:
\begin{align}
\Gamma^\Nb_n(\mathbf{v},\mathbf{w}) := \tr{\Op_{\mathbf{v}}^\dagger \, \Op_{\mathbf{w}} \, \tilde{\rho}}{A_1 \ldots A_\Nb B_{aux}B_{(i)}}\,.
\end{align}
 Next we will show that $\Gamma_n^\Nb$ is a certificate of order $n$ for the assemblage $\As^{\mathrm{Q}}_\Nb$.

Let us first begin by showing that $\Gamma_n^\Nb$ is indeed a moment matrix of order $n$ for $\Nb$ Alices. For this, we need to show that its element satisfy the five sets of conditions listed in Definition \ref{def:MMnM} via Eqs.~\eqref{eq:2ag} to \eqref{eq:6ag}:
\begin{compactitem}
\item The $(i,j)$ entry of the element $\Gamma^\Nb_n(\mathbf{v},\mathbf{w})$ is 
\begin{align*}
\Gamma_n^{\Nb (i,j)}(\mathbf{v},\mathbf{w}) &= \bra{i} \tr{\Op_{\mathbf{v}}^\dagger \, \Op_{\mathbf{w}} \, \tilde{\rho}}{A_1 \ldots A_\Nb B_{aux}B_{(i)}} \ket{j} \\
&=\bra{i} \tr{\Op_{\mathbf{v}}^\dagger \, \Op_{\mathbf{w}} \, \ket{\psi} \bra{\psi} \otimes \id_d^{B'} }{A_1 \ldots A_\Nb B_{aux}B_{(i)}} \ket{j} \\
&= \bra{i} \bra{\psi} \Op_{\mathbf{v}}^\dagger \, \Op_{\mathbf{w}} \ket{\psi} \ket{j}\,.
\end{align*}
Now define the vectors $\ket{\psi_{i,\mathbf{v}}} = \Op_{\mathbf{v}} \ket{\psi} \ket{i}$. Hence, $\Gamma_n^{\Nb (i,j)}(\mathbf{v},\mathbf{w}) = \langle \psi_{i,\mathbf{v}} \lvert \psi_{i,\mathbf{w}} \rangle$. This shows that $\Gamma^\Nb_n$ is a Gramian matrix, and therefore positive semidefinite. Hence, Eq.~\eqref{eq:2ag} is satisfied.
\item $\Gamma^\Nb_n(\emptyset,\emptyset) = \tr{\id_\cH \, \id_\cH \, \tilde{\rho}}{A_1 \ldots A_\Nb B_{aux}B_{(i)}} =  \id_{B'}$, hence Eq.~\eqref{eq:3ag} is satisfied.
\item On the one hand, $\Gamma^\Nb_n(\mathbf{v},\mathbf{w}) = \tr{\Op_{\mathbf{v}}^\dagger \, \Op_{\mathbf{w}} \, \tilde{\rho}}{A_1 \ldots A_\Nb B_{aux}B_{(i)}} = \tr{\Op_{\mathbf{v}^\dagger\mathbf{w}} \, \tilde{\rho}}{A_1 \ldots A_\Nb B_{aux}B_{(i)}}$. \\
On the other hand, $\Gamma^\Nb_n(\mathbf{v}',\mathbf{w}') = \tr{\Op_{\mathbf{v}'}^\dagger \, \Op_{\mathbf{w}'} \, \tilde{\rho}}{A_1 \ldots A_\Nb B_{aux}B_{(i)}} = \tr{\Op_{\mathbf{v}'^\dagger\mathbf{w}'} \, \tilde{\rho}}{A_1 \ldots A_\Nb B_{aux}B_{(i)}}$. \\ If $\mathbf{v}^\dagger\mathbf{w} \equiv \mathbf{v}'^\dagger\mathbf{w}'$, then $\Op_{\mathbf{v}^\dagger\mathbf{w}} = \Op_{\mathbf{v}'^\dagger\mathbf{w}'}$, and hence $\Gamma^\Nb_n(\mathbf{v},\mathbf{w}) = \Gamma^\Nb_n(\mathbf{v}',\mathbf{w}')$. It follows that Eq.~\eqref{eq:4ag} is satisfied.
\item If $\mathbf{v}^\dagger\mathbf{w}$ is null, then $\Op_{\mathbf{v}^\dagger\mathbf{w}} =  \mathbf{0}_{\cH} $. Hence, $\Gamma^\Nb_n(\mathbf{v},\mathbf{w}) = \tr{\Op_{\mathbf{v}^\dagger\mathbf{w}} \, \tilde{\rho}}{A_1 \ldots A_\Nb B_{aux}B_{(i)}} = \mathbf{0}_d$, and so Eq.~\eqref{eq:5ag} is satisfied. 
\item For any $a_k|x_k, a'_{k'}|x'_{k'} \in \Upsilon_\Nb$, define
\begin{align*}
\tilde{\Pi}_{a_k|x_k} := \left(\bigotimes_{j < k} \id_{A_j} \right)  \otimes \Pi_{a_k|x_k} \otimes \left(\bigotimes_{j > k} \id_{A_j} \right)\,,\\
\tilde{\Pi}_{a'_{k'}|x'_{k'}} := \left(\bigotimes_{j < k'} \id_{A_j} \right)  \otimes \Pi_{a'_{k'}|x'_{k'}} \otimes \left(\bigotimes_{j > k'} \id_{A_j} \right) \,.
\end{align*}
Hence,
\begin{align*}
\Gamma^\Nb_{a_k|x_k, a'_{k'}|x'_{k'}} &= \tr{ \left(  \tilde{\Pi}_{a_k|x_k} \,  \tilde{\Pi}_{a'_{k'}|x'_{k'}} \right) \otimes \id_{B_{aux}B_{(i)}B'}   \, \tilde{\rho}}{A_1 \ldots A_\Nb B_{aux}B_{(i)}} \\
&= \tr{ \left(   \tilde{\Pi}_{a_k|x_k} \,  \tilde{\Pi}_{a'_{k'}|x'_{k'}}  \right) \otimes \id_{B_{aux}B_{(i)}}   \, \ket{\psi}\bra{\psi} }{} \, \id_{B'} \,.
\end{align*}
Hence, $\Gamma^\Nb_{a|x, a'|x'} \propto \id_d$ and Eq.~\eqref{eq:6ag} follows. 
\end{compactitem}
Now that we've shown that $\Gamma^\Nb_n$ is indeed a moment matrix of order $n$, we need to show that it is moreover a certificate of order $n$ for the assemblage $\As_\Nb^{\mathrm{Q}}$. For this, we need to now show that $\Gamma^\Nb_n$ satisfies the conditions given by Eqs.~\eqref{eq:7ag} to \eqref{eq:9ag} in Definition \ref{def:certNb}. In the following, let $\Omega \subset \{1,\ldots, \Nb\}$ be a subset of the black-box parties.
\begin{compactitem}
\item Consider an arbitrary word $\mathbf{a}_\Omega | \mathbf{x}_\Omega$ composed by concatenating letters $a_{k}|x_{k} \in \Upsilon_\Nb$ where $k \in \Omega$. Define
\begin{align*}
\tilde{\Pi}_{\mathbf{a}_\Omega | \mathbf{x}_\Omega} := \bigotimes_{k=1:\Nb} \left( \id_{A_k} \delta_{k \not\in \Omega} + \Pi_{a_k|x_k} \delta_{k \in \Omega}  \right)\,.
\end{align*}
That is, $\tilde{\Pi}_{\mathbf{a}_\Omega | \mathbf{x}_\Omega}$ is a projector in the Hilbert space of the Alices, constructed as a product of operators for each Alice, where the local operator is $\id_{A_k}$ when $k \not\in \Omega$, and $\Pi_{a_k|x_k}$ when $k \in \Omega$ for the corresponding letter $a_k|x_k$ that appears in $\mathbf{a}_\Omega | \mathbf{x}_\Omega$. 
\item[] Hence,
\begin{align*}
\Gamma^\Nb_n(\emptyset,\mathbf{a}_\Omega | \mathbf{x}_\Omega) &= \tr{\id_\cH \, \tilde{\Pi}_{\mathbf{a}_\Omega | \mathbf{x}_\Omega} \otimes \id_{B_{aux}B_{(i)}B'} \, \tilde{\rho}}{A_1 \ldots A_\Nb B_{aux}B_{(i)}} \\
&= \tr{\tilde{\Pi}_{\mathbf{a}_\Omega | \mathbf{x}_\Omega} \otimes \id_{B_{aux}B_{(i)}B'} \, \ket{\psi}\bra{\psi} \otimes \id_{B'}}{A_1 \ldots A_\Nb B_{aux}B_{(i)}} \\
&= \tr{\tilde{\Pi}_{\mathbf{a}_\Omega | \mathbf{x}_\Omega} \otimes \id_{B_{aux}B_{(i)}} \, \ket{\psi}\bra{\psi}}{} \,  \id_{B'} \\
&= \tr{\tilde{\Pi}_{\mathbf{a}_\Omega | \mathbf{x}_\Omega} \otimes \id_{B_{aux}B_{(i)}} \, (\id_{A_1 \ldots A_\Nb} \otimes U_y) \ket{\psi}\bra{\psi} (\id_{A_1 \ldots A_\Nb} \otimes U_y^\dagger) }{} \,  \id_{B'}\,,
\end{align*}
since unitary operations are trace preserving. 
\item[] Notice also that $\sum_{a_r \in \A} \Pi_{a_k|x_k} = \id_{A_k}$ $\forall \, x_k\in\X\,, k\in\{1,\ldots,\Nb\}$. Hence,  
\begin{align*}
\tilde{\Pi}_{\mathbf{a}_\Omega | \mathbf{x}_\Omega} = \sum_{j \not\in \Omega} \, \sum_{a_j \in \A} \, \Pi_{a_1|x_1} \otimes \ldots \otimes \Pi_{a_\Nb |x_\Nb }\,,
\end{align*}
for any choice of $x_j \in \X$ for $j \not\in \Omega$. It follows that
\begin{align*}
&\Gamma^\Nb_n(\emptyset,\mathbf{a}_\Omega | \mathbf{x}_\Omega) = \tr{\tilde{\Pi}_{\mathbf{a}_\Omega | \mathbf{x}_\Omega} \otimes \id_{B_{aux}B_{(i)}} \, (\id_{A_1 \ldots A_\Nb} \otimes U_y) \ket{\psi}\bra{\psi} (\id_{A_1 \ldots A_\Nb} \otimes U_y^\dagger) }{} \,  \id_{B'}\,, \\
&= \sum_{j \not\in \Omega} \, \sum_{a_j \in \A} \, \tr{\Pi_{a_1|x_1} \otimes \ldots \otimes \Pi_{a_\Nb |x_\Nb } \otimes \id_{B_{aux}B_{(i)}} \, (\id_{A_1 \ldots A_\Nb} \otimes U_y) \ket{\psi}\bra{\psi} (\id_{A_1 \ldots A_\Nb} \otimes U_y^\dagger) }{} \,  \id_{B'}\,, \\
&= \sum_{j \not\in \Omega} \, \sum_{a_j \in \A} \, \tr{\sigma_{a_1\ldots a_\Nb | x_1 \ldots x_\Nb y}}{B}  \,  \id_{B'} \\
&= \tr{\sigma_{\mathbf{a}_\Omega | \mathbf{x}_\Omega y}}{}  \,  \id_{B'}\,.
\end{align*}
Since, by definition, $ \tr{\sigma_{\mathbf{a}_\Omega | \mathbf{x}_\Omega y}}{} = \tr{\sigma_{\mathbf{a}_\Omega | \mathbf{x}_\Omega 1}}{}$, Eq.~\eqref{eq:7ag} is satisfied. 

\item Consider an arbitrary word $\mathbf{a}_\Omega | \mathbf{x}_\Omega$ composed by concatenating letters $a_{k}|x_{k} \in \Upsilon_\Nb$ where $k \in \Omega$. For any such word $\mathbf{a}_\Omega | \mathbf{x}_\Omega$ and any  $y \in \Upsilon_\Nb$,
\begin{align*}
&\Gamma^\Nb_n(\emptyset,\mathbf{a}_\Omega | \mathbf{x}_\Omega y) = \tr{\id_\cH \, \tilde{\Pi}_{\mathbf{a}_\Omega | \mathbf{x}_\Omega} \otimes \Psi_{1|y} \, \tilde{\rho}}{A_1 \ldots A_\Nb B_{aux}B_{(i)}} \\
&= \tr{\tilde{\Pi}_{\mathbf{a}_\Omega | \mathbf{x}_\Omega}  \otimes \left( \left(U^\dagger_y \otimes \id_{B'}\right) \left( \id_{B_{aux}} \otimes  \Phi_1 \right) \left(U_y \otimes \id_{B'}\right) \right) \, \tilde{\rho}}{A_1 \ldots A_\Nb B_{aux}B_{(i)}} \\
&= \tr{\left( \tilde{\Pi}_{\mathbf{a}_\Omega | \mathbf{x}_\Omega}^\dagger \otimes  U^\dagger_y \otimes \id_{B'} \right)   \left( \id_{A_1 \ldots A_\Nb B_{aux}} \otimes  \Phi_1 \right) \left( \tilde{\Pi}_{\mathbf{a}_\Omega | \mathbf{x}_\Omega} \otimes  U_y \otimes \id_{B'} \right)    \, \tilde{\rho}    }{A_1 \ldots A_\Nb B_{aux}B_{(i)}} \\
&= \tr{  \left( \id_{A_1 \ldots A_\Nb B_{aux}} \otimes  \Phi_1 \right) \left( \tilde{\Pi}_{\mathbf{a}_\Omega | \mathbf{x}_\Omega} \otimes  U_y \otimes \id_{B'} \right)    \, \tilde{\rho}   \, \left( \tilde{\Pi}_{\mathbf{a}_\Omega | \mathbf{x}_\Omega}^\dagger \otimes  U^\dagger_y \otimes \id_{B'} \right)  }{A_1 \ldots A_\Nb B_{aux}B_{(i)}} \\
&= \tr{  \left( \id_{A_1 \ldots A_\Nb B_{aux}} \otimes  \Phi_1 \right) \left(  \left( \tilde{\Pi}_{\mathbf{a}_\Omega | \mathbf{x}_\Omega} \otimes  U_y \right)    \ket{\psi}\bra{\psi} \left( \tilde{\Pi}_{\mathbf{a}_\Omega | \mathbf{x}_\Omega}^\dagger \otimes  U^\dagger_y  \right) \otimes \id_{B'}  \right) }{A_1 \ldots A_\Nb B_{aux}B_{(i)}} \\
&= \sum_{j \not\in \Omega} \, \sum_{a_j \in \A} \,   \mathrm{tr}_{A_1 \ldots A_\Nb B_{aux}B_{(i)}} \Big\{  \left( \id_{A_1 \ldots A_\Nb B_{aux}} \otimes  \Phi_1 \right) \\ 
& \qquad  \left(  \left( \Pi_{a_1|x_1} \otimes \ldots \otimes \Pi_{a_\Nb |x_\Nb } \otimes  U_y \right)    \ket{\psi}\bra{\psi} \left( (\Pi_{a_1|x_1} \otimes \ldots \otimes \Pi_{a_\Nb |x_\Nb })^\dagger \otimes  U^\dagger_y  \right) \otimes \id_{B'}  \right) \Big\}
\end{align*}
for any choice of $x_j \in \X$ for $j \not\in \Omega$. Hence,
\begin{align*}
\Gamma^\Nb_n(\emptyset,\mathbf{a}_\Omega | \mathbf{x}_\Omega y) &= \sum_{j \not\in \Omega} \, \sum_{a_j \in \A} \,   \tr{ \Phi_1 \, \left( \sigma_{a_1\ldots a_\Nb | x_1 \ldots x_\Nb y} \otimes \id_{B'} \right) }{B} \\
&=  \tr{ \Phi_1 \, \left( \sigma_{  \mathbf{a}_\Omega | \mathbf{x}_\Omega y} \otimes \id_{B'} \right) }{B} \\
&= \frac{1}{d} \, \sigma_{\mathbf{a}_\Omega | \mathbf{x}_\Omega y}^{\mathrm{T}}\,.
\end{align*}
Hence, Eq.~\eqref{eq:9ag} follows. 
\item This last condition follows formally from the previous one, by taking $\Omega \equiv \emptyset$. Hence, the matrix $\Gamma^\Nb_n$ satisfies the conditions of Eq.~\eqref{eq:8ag}. 
\end{compactitem}

We have hence shown that the matrix $\Gamma^\Nb_n$ that we constructed is a certificate of order $n$ for the assemblage $\As_\Nb^{\mathrm{Q}}$, which implies that $\As_\Nb^{\mathrm{Q}} \in \cQ^n_\Nb$, for all $n \in \mathbb{N}$.
\end{proof}

\subsection{Convergence of the multipartite Bob-with-Input hierarchy}\label{se:convmul}

Let $\cQ_\Nb$ denote the set of quantumly-realisable assemblages in the multipatite Bob-with-Input scenario. In the previous section we showed that $\cQ_\Nb \subseteq \cQ^n_\Nb \,\, \forall \, n \in \mathbb{N}$. A corollary from Theorem \ref{thm:limitoperator} is then that the hierarchy $\{\cQ^n_\Nb\}_{ n \in \mathbb{N}}$ converges to a set of assemblages $\cQ_C^\Nb$ that therefore contains the quantum set.

\begin{cor} 
The hierarchy $\{\cQ^n_\Nb\}_{ n \in \mathbb{N}}$ converges to a set of assemblages $\cQ_C^\Nb$ that contains the quantum set.
\end{cor}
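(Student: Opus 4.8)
The plan is to observe that the entire argument of Section~\ref{se:convergence} was phrased purely in terms of the abstract properties of moment matrices and certificates, and so it transfers \emph{mutatis mutandis} to the multipartite setting once $\Upsilon$, $\cS^*$, and $\cS_n$ are replaced by $\Upsilon_\Nb$, $\cS^*_\Nb$, and $\cS_n^\Nb$. Concretely, I would first set $\cQ_C^\Nb := \bigcap_{n\in\mathbb{N}} \cQ^n_\Nb$ and record the nesting $\cQ^n_\Nb \subseteq \cQ^{n-1}_\Nb$, which follows by the same restriction-to-subwords argument as in the bipartite case: deleting the rows and columns of a certificate $\Gamma^{(\Nb)}_n$ indexed by words outside $\cS^\Nb_{n-1}$ leaves a certificate of order $n-1$.

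The heart of the matter is to re-run the boundedness lemmas. Lemma~\ref{lem:diagonal<=id} is proved by induction on word length using only (i) $\Gamma(\emptyset,\emptyset)=\operatorname{id}_{\cH_B}$, (ii) the single-letter idempotency relation $cc\equiv c$, and (iii) positivity of the $2\times 2$ principal submatrix indexed by the pair $(w,v)$ with $v=cw$. All three hold for $\Gamma^{(\Nb)}_n$: property (i) is Eq.~\eqref{eq:3ag}, property (ii) is the second multipartite symmetry operation $\mathbf{v}\mathbf{v}\equiv\mathbf{v}$, and (iii) is Eq.~\eqref{eq:2ag}. Crucially, the extra multipartite symmetry relation (commutation of letters belonging to different Alices) and the generalised null condition are never invoked in the induction, so it goes through verbatim, giving $\Gamma^{(\Nb)}_n(v,v)\leq\operatorname{id}_{\cH_B}$. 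Lemma~\ref{lem:normbounded} then follows exactly as before, yielding $\lVert\Gamma^{(\Nb)}_n(v,w)\rVert\leq 1$ for all $v,w\in\cS^*_\Nb$.

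With uniform boundedness in hand, I would reproduce the Banach--Alaoglu argument of Theorem~\ref{thm:limitoperator}: regard each $\Gamma^{(\Nb)}_n$ as a point in the closed unit ball of $\ell^\infty$ indexed by an enumeration of $\cS^*_\Nb\times\cS^*_\Nb$, extract a weak-$*$ convergent subsequence $(\Gamma^{(\Nb)}_{n_k})_k$, and identify its limit with an infinite matrix $\Gamma_\infty^{(\Nb)}$ whose entries lie in $\mathcal{B}(\cH_B)$. Because weak-$*$ convergence implies entrywise convergence and each fixed order $m$ involves only finitely many words, the induced finite-dimensional operators converge in the operator norm, and continuity of all the certificate constraints (Eqs.~\eqref{eq:2ag}--\eqref{eq:9ag}) forces $\Gamma_\infty^{(\Nb)}$ to be a certificate of every order $m$ for the given assemblage. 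This is precisely the statement that membership $\Sigma\in\cQ_C^\Nb$ is witnessed by a single limiting certificate, i.e.\ that the hierarchy converges to $\cQ_C^\Nb$.

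Finally, containment of the quantum set is immediate from the multipartite analogue of Theorem~\ref{thm:quantumInQn} established in Section~\ref{multibwi}, which gives $\cQ_\Nb\subseteq\cQ^n_\Nb$ for every $n$ and hence $\cQ_\Nb\subseteq\bigcap_n\cQ^n_\Nb=\cQ_C^\Nb$. I expect no genuine obstacle: the only point requiring care is verifying that the boundedness lemmas do not covertly rely on the bipartite alphabet, and the remark above settles this. The statement is therefore a direct corollary of Theorem~\ref{thm:limitoperator} applied to $\Gamma^{(\Nb)}_n$ in place of $\Gamma_n$.
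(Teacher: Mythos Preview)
Your proposal is correct and follows essentially the same approach as the paper: the paper's proof is the single line ``It follows as a corollary from Theorem~\ref{thm:limitoperator} by starting from the set of words $\cS^\Nb$,'' and your write-up simply unpacks this by checking explicitly that Lemmas~\ref{lem:diagonal<=id} and~\ref{lem:normbounded} and the Banach--Alaoglu argument use only alphabet-agnostic properties of moment matrices. Your additional observation that containment of the quantum set comes from Section~\ref{multibwi} is also exactly how the paper frames it in the surrounding text.
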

\begin{proof}
It follows as a corollary from Theorem \ref{thm:limitoperator} by starting from the set of words $\cS^\Nb$. 
\end{proof}

\begin{lem}
The hierarchy $\{\cQ^n_\Nb\}_{ n \in \mathbb{N}}$ cannot converge to the set of quantum assemblages.
\end{lem}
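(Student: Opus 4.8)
The plan is to exploit a topological mismatch: the limit set $\cQ_C^\Nb$ to which the hierarchy converges is \emph{closed}, whereas the set $\cQ_\Nb$ of genuinely quantum assemblages is \emph{not} (for $\Nb\geq 2$ and suitable cardinalities), so the two cannot coincide.

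First I would argue that each level $\cQ^n_\Nb$ is closed. For a fixed scenario the alphabet $\Upsilon_\Nb$ is finite, hence $\cS_n^\Nb$ is finite and every moment matrix $\Gamma_n^{(\Nb)}$ lives in a fixed finite-dimensional space of $\lvert\cS_n^\Nb\rvert\times\lvert\cS_n^\Nb\rvert$ arrays of $d\times d$ blocks. By Lemma~\ref{lem:normbounded} every admissible certificate has entries of norm at most $1$, so the set of certificates of order $n$ is a bounded subset cut out by the closed conditions \eqref{eq:2ag}--\eqref{eq:9ag} (positive semidefiniteness, the linear equalities, and the affine constraints tying $\Gamma_n^{(\Nb)}(\emptyset,\cdot)$ to the assemblage), and is therefore compact. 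Since the assemblage is recovered as a continuous (indeed linear) function of the certificate entries via \eqref{eq:7ag}--\eqref{eq:9ag}, the set $\cQ^n_\Nb$ is the continuous image of a compact set and hence closed. Consequently $\cQ_C^\Nb=\bigcap_{n}\cQ^n_\Nb$ is closed.

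Next I would show that $\cQ_\Nb$ fails to be closed by embedding the known non-closedness of the set of quantum correlations (Slofstra). Fix any state $\omega\in\mathcal{B}(\cH_B)$ and, given a two-party correlation $p(a_1a_2|x_1x_2)$, form the product assemblage $\sigma_{a_1a_2|x_1x_2y}:=p(a_1a_2|x_1x_2)\,\omega$. This is always a valid non-signalling assemblage, and it is quantum precisely when $p$ is a quantum correlation: if $p$ arises from the two Alices measuring a shared state $\rho_A$, then $\rho_A\otimes\omega$ with the trivial map on Bob realises $\sigma$; conversely, in any quantum realisation, tracing out Bob and using that each $\cE^y$ is trace-preserving yields $\Tr{\sigma_{a_1a_2|x_1x_2y}}=p(a_1a_2|x_1x_2)$, exhibiting $p$ as a quantum correlation. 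The assignment $p\mapsto\{p\,\omega\}$ is linear and continuous, so a sequence of quantum correlations $p^{(k)}$ converging to a non-quantum limit $p^{(\infty)}$ — which exists by Slofstra's theorem once $\lvert\A\rvert,\lvert\X\rvert$ are large enough — produces quantum assemblages $\{p^{(k)}\omega\}\in\cQ_\Nb$ converging to $\{p^{(\infty)}\omega\}\notin\cQ_\Nb$.

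Finally, combining the two parts: $\cQ_\Nb\subseteq\cQ_C^\Nb$ together with closedness of $\cQ_C^\Nb$ forces $\overline{\cQ_\Nb}\subseteq\cQ_C^\Nb$, and since $\cQ_\Nb$ is not closed this gives the strict inclusion $\cQ_\Nb\subsetneq\cQ_C^\Nb$, so the hierarchy cannot converge to the quantum set. The main obstacle is the non-closedness input: one must package a non-closed family of quantum correlations into assemblages and verify that taking the trace (which discards Bob's state but preserves the Alices' statistics) sends quantum assemblages to quantum correlations. The compactness, and hence closedness, of each $\cQ^n_\Nb$ is by contrast routine once the uniform norm bound of Lemma~\ref{lem:normbounded} is in hand.
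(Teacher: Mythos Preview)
Your argument is correct, but it proceeds differently from the paper. The paper argues via Prop.~\ref{prop:npa}: the Alice-marginal correlations of any assemblage in $\cQ_C^\Nb$ pass every NPA level, hence lie in the commuting-operator set; since by \cite{ji2021mip} the commuting set is strictly larger than the tensor-product set, an embedding of a commuting-but-not-tensor-product correlation as a product assemblage furnishes a point of $\cQ_C^\Nb\setminus\cQ_\Nb$.

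Your route is topological rather than via the NPA connection: you observe that $\cQ_C^\Nb=\bigcap_n\cQ^n_\Nb$ is closed (each level is the continuous image of a compact certificate set, using the uniform norm bound of Lemma~\ref{lem:normbounded}, whose proof extends verbatim to the multipartite alphabet), whereas $\cQ_\Nb$ is not closed, by transporting Slofstra's non-closedness result through the embedding $p\mapsto p\,\omega$. This buys you a weaker external input --- Slofstra's theorem rather than the full MIP*$=$RE separation --- at the cost of the extra compactness verification. The paper's route, by contrast, highlights the structural link to the NPA hierarchy and the commuting-operator paradigm, which is informative in its own right. Both arguments share the same reduction device (the product-assemblage embedding and its inverse via $\Tr{\cdot}$), and both implicitly require $\Nb\geq 2$ and sufficiently large $\lvert\A\rvert,\lvert\X\rvert$ for the cited Bell-scenario separations to apply.
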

\begin{proof}
As we show in Prop.~\ref{prop:npa} below, the correlations that the Alices obtain from an assemblage in $\cQ_C^\Nb$ admit a quantum realisation in the so-called commutativity paradigm. Since these are a strict superset of the quantum correlations in the tensor-product paradigm \cite{ji2021mip}, then the claim follows. 
\end{proof}

We leave it as an open question to check how to characterise $\cQ_C^\Nb$, for instance, whether it could be expressed as a quantum-like assemblage where the Alices perform quantum actions in the commuting paradigm rather than the tensor-product one. The challenges for proving this are fundamentally the same as in the bipartite case: how to recover a tensor product structure between the Alices and Bob's spaces. We refer the reader back to Section \ref{se:convergence}. 

\section{Instrumental EPR scenarios}

The Instrumental EPR scenario is one similar to the Bob-with-Input EPR scenario, but where Bob's choice of input depends on the Alices' measurement outcomes \cite{instrumentalsteering}. Despite naively seeming like this scenario allows for communication from the Alices to Bob, the particular way in which Bob uses this information (inputting it into his device, rather that actually reading it) assures that no communication happens \cite{instrumentalsteering}. For illustration, in the bipartite case an assemblage in this scenario reads $\{ \sigma_{a|x} \}_{a\in\A, x\in\X} \equiv \As^\mathrm{I}$.  

One possibility to deal formally with assemblages in an Instrumental EPR scenario is to think of them as a post-selected Bob-with-Input assemblage. That is, to think of the assemblage elements as $\sigma_{a|x} \leftrightarrow \sigma_{a|xa}$, where $\{\sigma_{a|xy}\}_{a\in\A, x\in\X,y\in\Y}$ with $\A=\Y$ is a non-signalling assemblage in a bipartite Bob-with-Input EPR scenario. With this in mind, one can leverage the tools from Ref.~\cite{instrumentalNPA} to apply our hierarchy of tests to assemblages in Instrumental EPR scenarios. The way to test quantum-explainability of instrumental assemblages is then as follows: 

\begin{defn}\textbf{Set of assemblages in level $n$ of the instrumental EPR hierarchy: $\cQ^n_{\mathrm{I}}$}.--\\
An assemblage $\As^\mathrm{I}$ in the Instrumental EPR scenario belongs to the set $\cQ^n_{\mathrm{I}}$ iff there exists an assemblage $\As$ in a Bob-with-Input EPR scenario such that 
\begin{compactitem}
\item[(i)] $\As \in \cQ^n$, and
\item[(ii)] $\sigma_{a|x} = \sigma_{a|xa}$ with $\sigma_{a|xa} \in \As$, $\forall \, \sigma_{a|x} \in \As^\mathrm{I}$. 
\end{compactitem}
The generalisation to instrumental EPR scenarios with more than one Alice is straightforward. 
\end{defn}

\section{Our hierarchy in the context of existing work}

In this section we discuss how our hierarchy fits within and compares to existing approaches to characterising quantum behaviours `from the outside' in other scenarios related to generalised EPR scenarios.

\subsection{The Navascu\'es-Pironio-Ac\'in hierarchy for correlations in Bell scenarios}\label{npa}

Bell scenarios explore correlations between the measurement outcomes of distant parties that are only connected by a common cause (which is not necessarily in the form of a shared classical system) \cite{BellRev, Cowpie}. Bell scenarios have a resemblance to EPR scenarios, in the sense that they fundamentally explore non-classical common causes, although the information that the parties use to probe the nature of the common cause they share is different from one scenario to the other: while in Bell scenario all the parties are black-box parties, in an EPR scenario some parties (here called Bob) represent the state of their local systems by a quantum description. 

Characterising quantum correlations in Bell scenarios has been an active area of research in the last couple of decades, specially for computing the maximum advantage that quantum correlations offer as a resource to enhance our performance in communication and information processing tasks. In Bell scenarios, Navascu\'es, Pironio, and Ac\'in \cite{NPA} defined a hierararchy of semidefinite tests that converge to the set of quantum correlations\footnote{In Ref.~\cite{NPA} the set of quantum correlations is defined within the so-called commutativity paradigm, rather than the tensor-product paradigm that the reader might be familiar with. Similarly to the discussion in Sec.~\ref{se:convmul}, the NPA hierarchy does not converge to the set of tensor-product quantum correlations in Bell scenarios \cite{ji2021mip}.}. This NPA hierarchy has proven quite useful when computing upper bounds to the performance of quantum resources at some of the above-mentioned tasks \cite{BellRev}. 

One way to relate an EPR scenario with a Bell scenario is by tracing out Bob from the picture: that is, we can focus on the correlations featured by the Alices in a multipartite Bob-with-Input scenario. More precisely, given an assemblage $\As_\Nb$ in a multipartite EPR scenario, one can compute the probabilities $p(a_1 \ldots a_\Nb|x_1 \ldots x_\Nb)=\tr{\sigma_{a_1 \ldots a_\Nb|x_1 \ldots x_\Nb y}}{}$, which do not depend on the value of $y$ given the no-signalling constraints that the assemblage  $\As_\Nb$ by definition satisfies. We denote by $\Pro_\Nb$ the conditional probability distribution $\Pro_\Nb := \{p(a_1 \ldots a_\Nb|x_1 \ldots x_\Nb)\}_{a_k \in \A\,,\, x_k \in \X\,,\, k = 1 \ldots \Nb}$.

Now the question is: given an assemblage $\As_\Nb \in \cQ^n_\Nb$, what can we say about the associated correlations $\Pro_\Nb$? Below we prove that, indeed, $\Pro_\Nb$ satisfies the n-th level of the NPA hierarchy. 

\begin{prop}\label{prop:npa}
Let $\As_\Nb \in \cQ^n_\Nb$ be a multipartite assemblage in a Bob-with-Input EPR scenario. The correlations $\Pro_\Nb$ defined from it by taking $p(a_1 \ldots a_\Nb|x_1 \ldots x_\Nb)=\tr{\sigma_{a_1 \ldots a_\Nb|x_1 \ldots x_\Nb y}}{}$ belongs to the n-th level of the NPA hierarchy for the associated $\Nb$-partite Bell scenario. 
\end{prop}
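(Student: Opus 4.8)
The plan is to convert the certificate $\Gamma_n^{(\Nb)}$ guaranteed by $\As_\Nb\in\cQ^n_\Nb$ into an NPA moment matrix of level $n$ for the Bell scenario obtained by discarding Bob. The guiding observation is that a level-$n$ NPA moment matrix for $\Pro_\Nb$ records only the scalars $\langle\psi|\Op_{\mathbf v}^\dagger\Op_{\mathbf w}|\psi\rangle$ for words $\mathbf v,\mathbf w$ built purely from the Alices' projectors, whereas our certificate stores exactly this information, but with Bob's $d$-dimensional space $\mathcal H_{B'}$ appearing as the matrix index of each entry. So I would first restrict to the purely-Alice part of the index set and then trace out $\mathcal H_{B'}$. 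Concretely, let $\cS_n^{\Nb,A}\subseteq\cS_n^\Nb$ be the words using only letters $a_k|x_k$ (no occurrence of any $y$), and define the scalar matrix
\[
    M(\mathbf v,\mathbf w):=\tfrac{1}{d}\,\Tr{\Gamma_n^{(\Nb)}(\mathbf v,\mathbf w)},\qquad \mathbf v,\mathbf w\in\cS_n^{\Nb,A}.
\]
The index set $\cS_n^{\Nb,A}$ consists of products of at most $n$ projector letters, and the word relations inherited from $\Upsilon_\Nb$ --- idempotency $\mathbf v\mathbf v\equiv\mathbf v$, the commutation $a_k|x_k\,a_{k'}|x_{k'}\equiv a_{k'}|x_{k'}\,a_k|x_k$ for $k\neq k'$, and nullity of $a_k|x_k\,a'_k|x_k$ --- are precisely the algebraic relations among commuting projective measurements that define the NPA indexing at level $n$, with the outcome $a_k=0$ handled implicitly through completeness as usual.

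Then I would verify the NPA axioms one by one. Positivity $M\geq0$ follows by reading $\Gamma_n^{(\Nb)}$ as a positive operator on $\mathbb{C}^{|\cS_n^\Nb|}\otimes\mathcal H_{B'}$ via Eq.~\eqref{eq:2ag}; passing to the principal submatrix indexed by $\cS_n^{\Nb,A}$ preserves positivity, and $\tfrac{1}{d}\Tr$ over $\mathcal H_{B'}$ is a partial trace, hence positive, so $M\geq0$. Normalisation $M(\emptyset,\emptyset)=\tfrac{1}{d}\Tr{\id_d}=1$ is immediate from Eq.~\eqref{eq:3ag}. The linear identifications and the zero constraints of the NPA matrix descend directly from Eqs.~\eqref{eq:4ag} and \eqref{eq:5ag} upon applying $\tfrac{1}{d}\Tr$. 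Finally, the probability-reproduction constraints follow from the certificate condition Eq.~\eqref{eq:7ag} specialised to $y=1$: for any $\Omega\subseteq\{1,\dots,\Nb\}$,
\[
    M(\emptyset,\mathbf a_\Omega|\mathbf x_\Omega)
    =\tfrac{1}{d}\,\Tr{\sigma_{\mathbf a_\Omega|\mathbf x_\Omega1}}\,\Tr{\id_d}
    =\Tr{\sigma_{\mathbf a_\Omega|\mathbf x_\Omega1}}
    =p(\mathbf a_\Omega|\mathbf x_\Omega),
\]
and taking $\Omega=\{1,\dots,\Nb\}$ recovers the full joint distribution defining $\Pro_\Nb$, the choice $y=1$ being harmless because $\As_\Nb$ is no-signalling from Bob.

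I expect the main point requiring care to be the bookkeeping that $M$ coincides with the genuine NPA level-$n$ object rather than a coarser or finer relaxation: one must confirm that tracing out Bob destroys no NPA constraint and, conversely, that every NPA constraint is already implied by the moment-matrix axioms of Definition~\ref{def:MMnM} restricted to $\cS_n^{\Nb,A}$. The positivity-under-partial-trace step is conceptually the crux, but it is standard once $\Gamma_n^{(\Nb)}$ is viewed as an operator on $\mathbb{C}^{|\cS_n^\Nb|}\otimes\mathcal H_{B'}$; the remaining checks are routine translations of Eqs.~\eqref{eq:3ag}--\eqref{eq:7ag} through the map $\tfrac{1}{d}\Tr$.
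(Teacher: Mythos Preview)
Your proposal is correct and follows essentially the same route as the paper: restrict the certificate $\Gamma_n^{(\Nb)}$ to the sub-block indexed by Alice-only words and apply $\tfrac{1}{d}\Tr{\cdot}$ entrywise to obtain the NPA moment matrix, with positivity coming from the fact that a principal submatrix of a positive semidefinite matrix is positive semidefinite and that the normalised trace is completely positive. The paper's verification of the NPA constraints (normalisation, equivalence and nullity relations, and probability reproduction via Eq.~\eqref{eq:7ag}) is likewise identical to yours.
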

\begin{proof}
The main idea of the proof is to take the certificate of order $n$ associated to $\As_\Nb$ and construct a new moment matrix from it that can act as a certificate of order $n$ associated to the correlations $\Pro_\Nb$ in the NPA hierarchy. 

First, consider a subset of the elements of $\Upsilon_\Nb$, namely $\tilde{\Upsilon}_\Nb := \Upsilon_\Nb \setminus \{ y \}_{y \in \Y}$. In addition, define a subset of the set of words $\cS_n^\Nb$ as: 
\begin{align}
\tilde{\cS}_n^\Nb := \{w \, \vert\, w \in \cS_n^\Nb \, \, \text{and $w$ is written with letters drawn from the alphabet $\tilde{\Upsilon}_\Nb$} \}\,.
\end{align}
Notice that $\tilde{\cS}_n^\Nb$ contains all and only the words that label the rows and columns of the moment matrices that act as certificates for the $n$-th level of the NPA hierarchy. 

Let $\Gamma^\Nb_n$ be a certificate of order $n$ associated to $\As_\Nb$. From it, now define the matrix $\tilde{\Gamma}^\Nb_n$ as follows: 
\begin{align}
\tilde{\Gamma}^\Nb_n(v,w) := \frac{1}{d} \tr{\Gamma^\Nb_n(v,w)}{} \quad \forall \, v,w \in \tilde{\cS}_n^\Nb \,.
\end{align}

The elements of $\tilde{\Gamma}^\Nb_n$ that will relate to the probabilities in $\Pro_\Nb$ are then: 
\begin{align}
\tilde{\Gamma}_n^{(\Nb)}(\emptyset,\mathbf{a}_\Omega | \mathbf{x}_\Omega) = p(\mathbf{a}_\Omega | \mathbf{x}_\Omega) \quad \forall \, \Omega\,,
\end{align}
with $\Omega \subset \{1,\ldots,\Nb\}$ a subset of the black-box parties, and $p(\mathbf{a}_\Omega | \mathbf{x}_\Omega)$ the marginal of $\Pro_\Nb$ given by:
\begin{align}
p(\mathbf{a}_\Omega | \mathbf{x}_\Omega) \stackrel{\text{def}}{=} \sum_{j \not\in \Omega}   \sum_{\phantom{lal} a_j \in \A} p(a_1 \ldots a_\Nb|x_1 \ldots x_\Nb)\,.
\end{align}
Notice also that $\tilde{\Gamma}_n^{(\Nb)}(\emptyset,\emptyset) =1$. 

Now, since the equivalence relations between words in $\tilde{\cS}_n^\Nb$ are the same as the ones imposed in the NPA hierarchy, the only property that we still need to prove for $\tilde{\Gamma}_n^{(\Nb)}$ to be the desired certificate is $\tilde{\Gamma}_n^{(\Nb)} \geq 0$. This follows from noticing that (i) the matrix $\Gamma^\Nb_n(v,w)$ when $v,w \in \tilde{\cS}_n^\Nb$ is a contraction of $\Gamma^\Nb_n$, and hence is positive semidefinite, and (ii) the Trace operation, together by the renormalisation by a factor of $\frac{1}{d}$ is a completely-positive operation. 
\end{proof}

\

Another way to relate an EPR scenario with a Bell scenario is by allowing Bob to make measurements on his quantum system and hence become a black-box party himself. Consider then an assemblage $\As_\Nb \in \cQ^n_\Nb$, and let Bob perform the positive operator-valued measure (POVM) on his system given by the positive semidefinite operators $\{N_k\}_{k \in \Bo}$, where $\Bo$ is the set of classical variables labelling the measurement outcomes and $\sum_{k \in \Bo} N_k = \id_{\cH_B}$. Then, we may define the following correlations:
\begin{equation*}
    p(a_1\dots a_\Nb k|x_1\dots x_\Nb y):=\frac{1}{d}\Tr{\Gamma_n^\Nb(y,\mathbf{a}_{\Omega}|\mathbf{x}_{\Omega})N_k}.
\end{equation*}
Similarly, it follows that these correlations belong to the $n$-th level of the NPA hierarchy for this $(\Nb+1)$-partite Bell scenario. If the Bob-with-Input assemblage is quantum, then so are the correlations.

\subsection{The Johnston-Mittal-Russo-Watrous hierarchy for assemblages in traditional multipartite EPR scenarios}\label{johnston}

Traditional multipartite EPR scenarios can be viewed as Bob-with-Input EPR scenarios but where the number of inputs for Bob is $1$, i.e. $\Y$ is the singleton set. There is an existing hierarchy of semidefinite programs for the traditional tripartite EPR scenario developed Johnston et al.~\cite{ENG}. We now present and generalise this hierarchy to an arbitrary number of parties. Later, we show how it emerges as a special case of the hierarchy developed herein. 

Recall the traditional multipartite EPR scenario that consists of $\Nb+1$ parties, of which $\Nb$ are black-box parties with input and output variables taking values within $\X$ and $\A$ respectively \cite{pqs}. Here, the assemblage elements read $\sigma_{a_1 \ldots a_\Nb | x_1 \ldots x_\Nb}$. The remaining party (Bob) holds a quantum system associated with a Hilbert space of dimension $d$. Thus the alphabet $\Upsilon^{\mathrm{J}}_{\Nb}$ that we  use to define the words (from inputs and outputs) is composed of the following elements: 
\begin{align}
\Upsilon^{\mathrm{J}}_{\Nb} := \{\emptyset\} \cup \{ a_1|x_1 \}_{{ x_1 \in \X\,, a_1 \in \A \setminus \{0\}}} \cup \ldots \cup \{ a_\Nb |x_\Nb \}_{{ x_\Nb \in \X\,, a_\Nb \in \A \setminus \{0\}}} \,. 
\end{align}

The definition of words, their concatenation, and their ``dagger'' follows similarly from Sec.~\ref{se:BwIhier}. We denote by $\tilde{\cS}^*_\Nb$ the set of all words of arbitrary length with letters drawn from $\Upsilon^{\mathrm{J}}_\Nb$.

The symmetry operations on the elements of the alphabet generalise to the following:
\begin{compactitem} 
\item $\mathbf{v}\mathbf{w} \equiv \mathbf{v} \emptyset \mathbf{w} $ for all $\mathbf{u},\mathbf{w} \in \tilde{\cS}^*_\Nb$,
\item $\mathbf{v}\mathbf{v} \equiv \mathbf{v}$ for all $\mathbf{v} \in \Upsilon^{\mathrm{J}}_\Nb$,
\item $a_k|x_k a_{k'}|x_{k'}  \equiv a_{k'}|x_{k'} a_k|x_k$ for all $a_k|x_k,a_{k'}|x_{k'} \in \Upsilon^{\mathrm{J}}_\Nb$ and $k\neq k' \in \{1,\ldots,\Nb\}$.
\end{compactitem} 
Definitions of equivalent and null words follow similarly from Sec.~\ref{multibwi}. As before, with slight abuse of notation, we can specify the ``set of words of certain length'' and ``moment matrix of order $n$'':
\begin{defn} \textbf{The set of words} $\boldsymbol{\tilde{\cS}_n^\Nb}$.--\\
A word $\mathbf{v} \in \tilde{\cS}^*_\Nb$ belongs to $\tilde{\cS}_n^\Nb$ if it may arise from concatenating at most $n$ letters of the alphabet $\Upsilon^{\mathrm{J}}_\Nb$.
\end{defn}

\begin{defn}\label{def:MMnMENG}\textbf{Moment matrix of order $\mathbf{n}$ with $\Nb$ Alices: $\boldsymbol{\Delta_n^{(\Nb)}}$}.--\\
Let $\Delta_n^{(\Nb)}$ be a matrix of size $\lvert \tilde{\cS}_n^\Nb \lvert \times \lvert \tilde{\cS}_n^\Nb \lvert$, whose entries are $d \times d$ complex matrices, and whose rows and columns are indexed by the words in $\tilde{\cS}_n^\Nb$. This matrix is a moment matrix of order $n$ iff it satisfies the following properties: 
\begin{align}
\Delta_n^{(\Nb)} \geq 0 \,, \label{eq:2agg}\\
\Delta_n^{(\Nb)}(\mathbf{v},\mathbf{w}) &= \Delta^\Nb_n(\mathbf{v'},\mathbf{w'}) \quad \text{if} \quad \mathbf{v}^\dagger\mathbf{w} \equiv \mathbf{v'}^\dagger\mathbf{w'} \,,\label{eq:4agg}\\
\Delta_n^{(\Nb)}(\mathbf{v},\mathbf{w}) &= \mathbf{0}_d  \quad \text{if} \quad \mathbf{v}^\dagger\mathbf{w}  \, \text{is null} \,,\label{eq:5agg}
\end{align}
where $\mathbf{0}_d$ is the $d \times d$ matrix whose all entries are 0. 
\end{defn}

Finally, the definitions of ``Certificate of order $n$ for an assemblage'' and ``Set of assemblages in level $n$ of the hierarchy'' are as follows: 

\begin{defn}\label{def:certNbENG} \textbf{Certificate of order $n$ for an assemblage $\As_\Nb$}.--\\
Let $\As_\Nb = \{\sigma_{a_1 \ldots a_\Nb|x_1 \ldots x_\Nb }\}_{a_k \in \A, \, x_k \in \X, \, k \in \{1,\ldots,\Nb\}}$ be a non-signalling assemblage in the multipartite traditional EPR scenario with $\Nb$ black-box parties \cite{pqs}.

Let $\Omega \subset \{1,\ldots,\Nb\}$ be a subset of the black-box parties, and denote by $\sigma_{\mathbf{a}_\Omega | \mathbf{x}_\Omega}$ the assemblage elements that result when tracing out the parties not in $\Omega$, i.e.,
\begin{align}\label{eq:32a}
\sigma_{\mathbf{a}_\Omega | \mathbf{x}_\Omega} \stackrel{\text{def}}{=} \sum_{j \not\in \Omega}   \sum_{\phantom{a} a_j\in \A} \sigma_{a_1 \ldots a_\Nb|x_1 \ldots x_\Nb}\,.
\end{align}

A matrix $\Delta_n^{(\Nb)}$ is a certificate of order $n$ for $\As_\Nb$ iff $\Delta_n^{(\Nb)}$ is a moment matrix of order $n$ for $\Nb$ Alices, and the following are satisfied: 
\begin{align}
\Delta_n^{(\Nb)}(\emptyset,\emptyset) &= \, \rho_{B} \,,\label{eq:103ag}\\
\Delta_n^{(\Nb)}(\emptyset, \mathbf{a}_\Omega | \mathbf{x}_\Omega) &= \, \sigma_{\mathbf{a}_\Omega | \mathbf{x}_\Omega} \quad \forall a_k|x_k \in \Upsilon^{\mathrm{J}}_\Nb\,, \quad k \in \Omega\,,\quad \forall \Omega\,,\label{eq:104ag}
\end{align}
where $\rho_{B} = \sum_{a_1\in\A\,,\ldots \,, a_\Nb\in\A}  \sigma_{a_1 \ldots a_\Nb|x_1 \ldots x_\Nb}$ is the marginal state of Bob's system.
\end{defn}

\begin{defn}\label{def:engours}\textbf{Set of assemblages in level $n$ of the Johnston-Mittal-Russo-Watrous multipartite hierarchy \cite{ENG}: $\tilde{\cQ}^n_\Nb$}.--\\
An assemblage $\As_\Nb$ in the multipartite EPR scenario with $\Nb$ black-box parties belongs to the set $\tilde{\cQ}^n_\Nb$ iff there exists a matrix $\Delta_n^{(\Nb)}$ that is a certificate of order $n$ for $\As_\Nb$. 
\end{defn}

Def.~\ref{def:engours} presents the Johnston-Mittal-Russo-Watrous hierarchy in a notation tailored at this paper. To see how this corresponds to the original definition in Ref.~\cite{ENG}, notice that their matrix $M^{(\Nb)}$ has elements define as in Eq.~(32) of Ref.~\cite{ENG} with $M^{(\Nb)}(\mathbf{\alpha},\mathbf{\beta}) \equiv \Delta_n^{(\Nb)}(\mathbf{\alpha},\mathbf{\beta})$ for all $\mathbf{\alpha},\mathbf{\beta} \in \tilde{\cS}_n^\Nb$, since the set of words in both works is defined analogously. Finally, the various conditions on the matrix $M^{(\Nb)}$ are defined analogously to those for the matrix $\Delta_n^{(\Nb)}$ (e.g., our null condition in Eq.~\eqref{eq:5agg} is equivalent to that in Eq.~(25) of Ref.~\cite{ENG}).

\bigskip

Now we show that the Johnston-Mittal-Russo-Watrous hierarchy may arise as a special case of the multipartite Bob-with-Input hierarchy introduced herein. 

\begin{thm}\textbf{Johnston-Mittal-Russo-Watrous hierarchy from the Bob-with-Input one}.--\\
Let $\As_\Nb$ be an assemblage in a multipartite EPR scenario. Let $\Y$ be a singleton set and define the assemblage $\As_{\Nb}^*$ as that with elements $\sigma^*_{a_1 \ldots a_\Nb | x_1 \ldots x_\Nb y} := \sigma_{a_1 \ldots a_\Nb | x_1 \ldots x_\Nb}$. Then, 
\begin{align*}
\As_\Nb \in \tilde{\cQ}^n_\Nb \quad \Rightarrow \quad \As_{\Nb}^* \in \cQ^n_\Nb\,.
\end{align*}
Conversely, let $\As_{\Nb}^*$ be an assemblage in a multipartite Bob-with-Input scenario where $\Y$ is the singleton set. Define the assemblage $\As_\Nb$ in a multipartite EPR scenario as that with elements $\sigma_{a_1 \ldots a_\Nb | x_1 \ldots x_\Nb} := \sigma^*_{a_1 \ldots a_\Nb | x_1 \ldots x_\Nb y}$. Then, 
\begin{align*}
\As_{\Nb}^* \in \cQ^n_\Nb \quad \Rightarrow \quad \As_\Nb \in \tilde{\cQ}^n_\Nb\,.
\end{align*}
\end{thm}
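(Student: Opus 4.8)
The plan is to exhibit an explicit, essentially level-preserving dictionary between the two moment matrices, the only extra letter being the single input $y$ (recall $\Y$ is a singleton). Throughout I use that, modulo the symmetry operations, every word over $\Upsilon_\Nb$ is equivalent either to a pure-Alice word $\mathbf{u}$ over $\Upsilon^{\mathrm{J}}_\Nb$ or to $\mathbf{u}\,y$: since $yy\equiv y$ and $y$ commutes with every $a_k|x_k$, all copies of $y$ collapse to at most one and can be pushed to the right. Write $\pi(\mathbf{v})$ for the pure-Alice word obtained by deleting every $y$ from $\mathbf{v}$, and $\epsilon(\mathbf{v})\in\{0,1\}$ for whether $\mathbf{v}$ contains a $y$; then $\pi(\mathbf{v}^\dagger\mathbf{w})=\pi(\mathbf{v})^\dagger\pi(\mathbf{w})$, $\epsilon(\mathbf{v}^\dagger\mathbf{w})=\epsilon(\mathbf{v})\vee\epsilon(\mathbf{w})$, and nullity is unchanged by $\pi$.

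For the forward direction I would, given a JMRW certificate $\Delta_n^{(\Nb)}$ for $\As_\Nb$, define a candidate $\Gamma_n^{(\Nb)}$ by a three-way rule on the class of $\mathbf{v}^\dagger\mathbf{w}$: set it to $\mathbf{0}_d$ when $\mathbf{v}^\dagger\mathbf{w}$ is null; to $\Tr{\Delta_n^{(\Nb)}(\pi\mathbf{v},\pi\mathbf{w})}\,\id_d$ when $\epsilon=0$; and to $\tfrac1d\Delta_n^{(\Nb)}(\pi\mathbf{v},\pi\mathbf{w})^{\mathrm{T}}$ when $\epsilon=1$. Because $\pi\mathbf{v},\pi\mathbf{w}$ are no longer than $\mathbf{v},\mathbf{w}$, this stays at order $n$, and it is well defined since it depends only on the equivalence class. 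The moment-matrix identities \eqref{eq:3ag}--\eqref{eq:6ag} and the certificate identities \eqref{eq:7ag}--\eqref{eq:9ag} are then direct: e.g.\ \eqref{eq:3ag} uses $\Tr{\rho_B}=1$, \eqref{eq:7ag} is $\Tr{\Delta_n^{(\Nb)}(\emptyset,\mathbf{a}_\Omega|\mathbf{x}_\Omega)}\id_d=\Tr{\sigma_{\mathbf{a}_\Omega|\mathbf{x}_\Omega}}\id_d$ read off from \eqref{eq:104ag}, and \eqref{eq:8ag}/\eqref{eq:9ag} read off the $\epsilon=1$ entries using $\sigma_{\mathbf{a}_\Omega|\mathbf{x}_\Omega y}=\sigma_{\mathbf{a}_\Omega|\mathbf{x}_\Omega}$.

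The one genuinely non-trivial point --- and where I expect the main difficulty --- is positivity \eqref{eq:2ag}, because the transpose in the $\epsilon=1$ block is only positive, not completely positive, so a blockwise argument fails. My plan is to expand $\langle f|\Gamma_n^{(\Nb)}|f\rangle$ for $f\colon\cS_n^\Nb\to\C^d$, grouping contributions by $\pi$ and $\epsilon$. Collecting every $\epsilon=1$ term into a single form $\tfrac1d\langle g|\Delta^{\mathrm{T}}|g\rangle$, where $g$ sums all components sharing a reduced word, leaves a residual form in the pure-Alice components $a$ alone, namely $\langle a|\big(C\otimes\id_d-\tfrac1d\Delta^{\mathrm{T}}\big)|a\rangle$ with $C$ the scalar matrix $C_{\mathbf{u}\mathbf{u}'}=\Tr{\Delta_n^{(\Nb)}(\mathbf{u},\mathbf{u}')}$. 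The first form is manifestly $\geq0$ since $\Delta^{\mathrm{T}}\geq0$; for the second I would prove and invoke the operator inequality $(\tr{P}{2})\otimes\id_d\geq\tfrac1d\,P$, valid for every $P\geq0$ on $\mathcal{H}_1\otimes\C^d$ (here $P=\Delta^{\mathrm{T}}$, so $\tr{P}{2}=C$). Factorising $P=\sum_\mu|\chi_\mu\rangle\langle\chi_\mu|$, this reduces term by term to the Cauchy--Schwarz bound $|\Tr{M}|^2\leq d\,\lVert M\rVert_{\mathrm{F}}^2$, which supplies exactly the slack needed; the two pieces together give $\Gamma_n^{(\Nb)}\geq0$.

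For the converse I would run the dictionary backwards: given a Bob-with-Input certificate $\Gamma_n^{(\Nb)}$, set $\Delta_n^{(\Nb)}(\mathbf{u},\mathbf{u}'):=d\,\Gamma_n^{(\Nb)}(\mathbf{u}\,y,\mathbf{u}'\,y)^{\mathrm{T}}$. Positivity is now immediate: the right-hand side is a principal submatrix (compression) of the positive $\Gamma_n^{(\Nb)}$, and both transposition and the factor $d>0$ preserve positivity. Conditions \eqref{eq:103ag}--\eqref{eq:104ag} follow by the same collapse $y\,\mathbf{u}^\dagger\mathbf{u}'\,y\equiv\mathbf{u}^\dagger\mathbf{u}'\,y$, reading \eqref{eq:8ag}/\eqref{eq:9ag} in reverse, and the equivalence and nullity constraints descend because the JMRW symmetry operations are precisely the restriction of the $\Upsilon_\Nb$ ones to Alice letters. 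The point that needs care here is the length bookkeeping: appending $y$ raises the word length by one, so I would track exactly which order of $\Gamma_n^{(\Nb)}$ produces the order-$n$ entries of $\Delta_n^{(\Nb)}$, using $\Gamma_n^{(\Nb)}(\mathbf{u}\,y,\mathbf{u}')=\Gamma_n^{(\Nb)}(\mathbf{u},\mathbf{u}'\,y)=\Gamma_n^{(\Nb)}(\mathbf{u}\,y,\mathbf{u}'\,y)$ to spend the extra letter only once. The multipartite bookkeeping over $\Omega$ and several Alices is identical to the single-Alice case and adds nothing essential.
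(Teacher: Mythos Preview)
Your proposal is correct and follows essentially the same approach as the paper: the same dictionary $\Gamma\leftrightarrow\Delta$ in both directions, and the same key inequality $C\otimes\id_d\geq\tfrac1d\Delta^{\mathrm T}$ for the nontrivial positivity step. The only cosmetic difference is packaging---the paper phrases that inequality as complete positivity of the map $\xi[X]=d\,\Tr{X}\id_d-X$ and then checks positivity of the $2\times2$ block matrix $\bigl(\begin{smallmatrix}\mathcal A&\mathcal A\\\mathcal A&\mathcal B\end{smallmatrix}\bigr)$, whereas you unpack the quadratic form directly and prove the inequality by Cauchy--Schwarz; and your flagging of the length-bookkeeping issue in the converse is a point the paper leaves implicit.
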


\begin{proof}
Let's start from the second statement. Consider an assemblage $\As^*_\Nb$ that belongs to the set ${\cQ}^n_\Nb$. By definition, there exists a moment matrix  $\Gamma_n^{(\Nb)}$ that is a certificate of order $n$ for $\As^*_\Nb$. From $\Gamma_n^{(\Nb)}$, we can now construct a matrix $\Delta_n^{(\Nb)}$ with rows and columns labelled by words in $\tilde{\cS}_n^\Nb$, as follows:
\begin{align}
    \Delta_n^{(\Nb)}(\mathbf{v},\mathbf{w}) := d\,\Gamma_n^{(\Nb)}(\mathbf{v},\mathbf{w}y)^\mathrm{T} \quad \forall \, \mathbf{v},\mathbf{w} \in \tilde{\cS}_n^\Nb\,.
\end{align}
By definition, since $\Gamma_n^{(\Nb)}$ is a moment matrix of order $n$ for the multipartite Bob-with-Input scenario, then  $\Delta_n^{(\Nb)}$ is a moment matrix of order $n$ for the multipartite EPR scenario -- one needs only to compare how the constraints of Def.~\ref{def:MMnM} imply that those in Def.~\ref{def:MMnMENG} are satisfied\footnote{That $\Gamma_n^{(\Nb)} \geq 0$ follows from the fact that it arises by applying a positive map to a contraction of $\Delta_n^{(\Nb)}$, where the latter is itself positive semidefinite. }. In addition, the definition of the elements of $\As_\Nb$ from those of $\As^*_\Nb$, together with the definition of the elements $\Delta_n^{(\Nb)}(\emptyset, \mathbf{a}_\Omega | \mathbf{x}_\Omega)$, imply that $\Delta_n^{(\Nb)}$ is a certificate of order $n$ for $\As_\Nb$. Therefore, $\As_\Nb \in \tilde{\cQ}^n_\Nb$. 

Now let us prove the converse direction. We start from an assemblage $\As_\Nb$ that belongs to the set $\tilde{\cQ}^n_\Nb$. Let $\Delta_n^{(\Nb)}$ be a moment matrix associated to $\As_\Nb$. Then one can define a matrix $\Gamma_n^{(\Nb)}$ whose rows and columns are labelled by the words in $\cS_n^\Nb$ as follows: 
\begin{align}
  \Gamma_n^{(\Nb)}(\mathbf{\alpha},\mathbf{\beta}) &:= \Tr{\Delta_n^{(\Nb)}(\mathbf{\alpha},\mathbf{\beta})} \, \id_d \quad \forall \, \mathbf{\alpha},\mathbf{\beta} \in \tilde{\cS}_n^\Nb\,,\\
  \Gamma_n^{(\Nb)}(\mathbf{v},\mathbf{w}y) &:= \frac{1}{d}  \Delta_n^{(\Nb)}(\mathbf{\alpha}',\mathbf{\beta}')^\mathrm{T} \quad \forall \, \mathbf{v},\mathbf{w} \in {\cS}_n^\Nb\,,
\end{align}
where $\mathbf{\alpha}' \in \tilde{\cS}_n^\Nb$ is the word that arises from $\mathbf{v}$ by removing from it the letter $y \in \Y$ (should it appear), and similarly for $\mathbf{\beta}'$.

Notice that, by construction, $\Gamma_n^{(\Nb)}$ satisfies the constraints in Eqs.~\eqref{eq:3ag} to \eqref{eq:6ag} and Eqs.~\eqref{eq:7ag} to \eqref{eq:9ag}, since $\Delta_n^{(\Nb)}$ is a valid moment matrix for the assemblage $\As_\Nb$. What remains to be shown is that $\Gamma_n^{(\Nb)} \geq 0$.

To see that $\Gamma_n^{(\Nb)} \geq 0$, let us first construct the following matrix:
\begin{align}
    \gamma_n^{(\Nb)} = \begin{bmatrix}
    \mathcal{A} & \mathcal{A} \\
    \mathcal{A} & \mathcal{B}
    \end{bmatrix}\,,
\end{align}
with 
\begin{align}
    \mathcal{A}(\alpha,\beta) &:= \frac{1}{d} \,  \Delta_n^{(\Nb)}(\alpha,\beta)^\mathrm{T}\,, \quad \forall \, \mathbf{\alpha},\mathbf{\beta} \in \tilde{\cS}_n^\Nb\,, \\
    \mathcal{B}(\alpha,\beta) &:= d \, \Tr{\mathcal{A}(\alpha,\beta)}\, \id_d\,, \quad \forall \, \mathbf{\alpha},\mathbf{\beta} \in \tilde{\cS}_n^\Nb \,.
\end{align}
Now define the completely-positive map $\xi$ as:
\begin{align*}
    \xi[(\cdot)] := d\,\Tr{(\cdot)} \, \id_d - (\cdot)\,.
\end{align*}
One can then see that 
\begin{align*}
    \mathcal{B} - \mathcal{A} = (\id_{|\tilde{\cS}_n^\Nb|} \otimes \xi) [\mathcal{A}] \geq 0 \quad \text{since} \quad \mathcal{A} \geq 0\,.
\end{align*}
Therefore, $\gamma_n^{(\Nb)} \geq 0$. Now one only needs to see that $\gamma_n^{(\Nb)} \geq 0 \,\, \Rightarrow \,\, \Gamma_n^{(\Nb)} \geq 0$. For this, note that 
\begin{align}
    \mathcal{B}(\alpha,\beta) &\equiv \Gamma_n^{(\Nb)}(\alpha,\beta)\,, \quad \forall \, \mathbf{\alpha},\mathbf{\beta} \in \tilde{\cS}_n^\Nb\,,\\
    \mathcal{A}(\alpha,\beta) &\equiv \Gamma_n^{(\Nb)}(\alpha,\beta y)\,, \quad \forall \, \mathbf{\alpha},\mathbf{\beta} \in \tilde{\cS}_n^\Nb\,,
\end{align}
and in general
\begin{align}
    \Gamma_n^{(\Nb)}(\mathbf{v},\mathbf{w}y)  \equiv \mathcal{A}(\alpha',\beta')\,, \quad \forall \, \mathbf{v},\mathbf{w} \in {\cS}_n^\Nb\,,
\end{align}
where, as before, $\mathbf{\alpha}' \in \tilde{\cS}_n^\Nb$ is the word that arises from $\mathbf{v}$ by removing from it the letter $y \in \Y$ (should it appear), and similarly for $\mathbf{\beta}'$. This means that $\Gamma_n^{(\Nb)}$ can be constructed from $\gamma_n^{(\Nb)}$ by duplicating the rows and columns of $\gamma_n^{(\Nb)}$ (those pertaining to the blocks $[\mathcal{A} \, \mathcal{A}]$ and $[\mathcal{A} \, \mathcal{A}]^\mathrm{T}$, respectively) as many times as follows. Therefore, $\gamma_n^{(\Nb)} \geq 0 \,\, \Rightarrow \,\, \Gamma_n^{(\Nb)} \geq 0$.

\end{proof}

\subsection{Our hierarchy applied to traditional bipartite EPR scenarios}

Applying our hierarchy to a traditional bipartite EPR scenario means applying it to a scenario with only one Alice, and where Bob has only one possible input -- $\Y$ is the singleton set. In this scenario, the GHJW theorem \cite{gisin1989stochastic,hughston1993complete} shows that any non-signalling assemblage has a quantum realisation. Hence, one expects that in this case our hierarchy collapses, and that the assemblages that belong to  $\cQ^1$ already admit a quantum realisation. The proof of this claim is presented in Theorem 9 of Ref.~\cite{rossi2022characterising}. 

\section{Conclusions and outlook}

In this paper we developed a hierarchy of semidefinite tests to assess whether an assemblage in Bob-with-Input EPR scenarios (and Instrumental EPR scenarios therefore) may not admit of a quantum explanation. This hierarchy allows one not only to test whether an assemblage cannot arise within a quantum EPR experiment, but also serve as a tool to compute upper bounds on the violations of steering inequalities, as well as on the  performance of quantum assemblages in certain communication and information-processing tasks (including generalisations of extended nonlocal games). Our hierarchy has already been used to certify post-quantumness of some specific assemblages \cite{bpqs}. 

Looking into the future, one technical challenge is to give a natural characterisation of the set of assemblages to which the hierarchy converges. We know that the set to which it converges necessarily contains the set of quantum assemblages, but it could be larger still. Along these lines, we have shown that the NPA hierarchy can be retrieved within our hierarchy in Section \ref{npa}. It was demonstrated in Ref. \cite{Navascu_s_2008} that the NPA hierarchy converges to the set of correlations where parties making commuting measurements on a shared quantum state. What is the analogue to this result for all scenarios captured by our hierarchy? 


On the foundational side, one interesting open question is whether our hierarchy helps to find physical principles from which to derive the set of quantum assemblages (i.e. ruling out the post-quantum ones). This could be pursued by searching for a hierarchy of physical principles whose levels are related to the levels of our hierarchy one-to-one. If such a connection is found, this will have immense implications not only in the task of ``characterising quantum assemblages from physical principles'' but also in the task of ``characterising quantum correlations in Bell scenarios from physical principles'' -- an open question on which the community has done intense work over more than a decade -- given the connections between the NPA hierarchy and ours.

\section*{Acknowledgments}

We thank Tobias Fritz and Tim Netzer for introducing ABS to TD. 
MJH and ABS acknowledge the FQXi large grant ``The Emergence of Agents from Causal Order'' (FQXi FFF Grant number FQXi-RFP-1803B).
ABS acknowledges support by the Foundation for Polish Science (IRAP project, ICTQT, contract no. 2018/MAB/5, co-financed by EU within Smart Growth Operational Programme). 
All of the diagrams within this manuscript were prepared using TikZit.

\bibliographystyle{quantum}
\bibliography{bibliography.bib}

\end{document}